\documentclass[final,3p,times]{elsarticle}
\usepackage[centertags]{amsmath}
\usepackage{amsfonts}
\usepackage{amssymb}
\usepackage{amsthm}
\usepackage{mathrsfs} 
\usepackage{epsfig,multicol,bbm}
\usepackage{graphicx}
\usepackage[nodots,nocompress]{numcompress}

\journal{ArXiv}

\def\supp{\hbox{supp}}

\begin{document}
\hyphenation{Co-lom-beau}

\pagestyle{plain}

\begin{frontmatter}

\title{New ideas about multiplication of tensorial distributions}

\author{Jozef Skakala}

\ead{Jozef.Skakala@msor.vuw.ac.nz}

\address{School of Mathematics, Statistics and Operations Research\\Victoria University of Wellington, New Zealand}
\date{1 August 2009; \LaTeX-ed \today}

\begin{abstract}

There is a need in general relativity for a consistent and
useful mathematical theory defining the multiplication of tensor
distributions in a geometric (diffeomorphism invariant) way.
Significant progress has been made through the concept of Colombeau
algebras, and the construction of full Colombeau algebras on
differential manifolds for arbitrary tensors. Despite the fact that
this goal was achieved, it does not incorporate clearly enough the concept of
covariant derivative and hence is of a limited use.
We take a different approach: we consider any type of preference for
smooth distributions (on a smooth manifold) as nonintuitive, which
means all our approach must be based fully on the Colombeau
equivalence relation as the fundamental feature of the theory. After
taking this approach we very naturally obtain canonical and
geometric theory defining tensorial operations with tensorial
distributions, including covariant derivative. This also happens
because we no longer need the
construction of Colombeau algebras. The important advantage of our
approach lies also in the fact that it brings physical insight
into the mathematical concepts used and naturally leads to
formulation of physics on (what we call) piecewise smooth manifolds,
rather than on smooth manifold. This brings to the language
additional symmetry (in the same way as turning from Poincare
invariance to diffeomorphism invariance), and is compatible with our
intuition that ``pointwise'' properties in some specific sense
``do not matter''.

\end{abstract}

\begin{keyword}
Colombeau algebra \sep Multiplication of tensor distributions
\end{keyword}

\end{frontmatter}

\tableofcontents

\bigskip

\section{Introduction}

This paper is devoted to a topic from the field of mathematical physics which is closely related to the general theory of relativity. It offers possible significant conceptual extensions of general relativity at short distances/high energies, and gives another arena in which one can conceptually/physically modify the classical theory.
But the possible meaning of these ideas is much wider than just the general theory of relativity. It is related to the general questions of how one defines the theory of distributions for any \emph{geometrically} formulated physics describing \emph{interactions}.

\paragraph{The main reasons why we ``bother''}
Let us start by giving the basic reasons why one should work with the language of distributions rather than with the old language of functions:

\begin{itemize}
\item
First there are deep physical reasons for working with distributions
rather than with smooth tensor fields. We think distributions are
more than just a convenient tool for doing computations in those
cases in which one cannot use standard differential geometry. We
consider them to be mathematical objects which much more accurately
express what one actually measures in physics experiments, more so
than when we compare them to the old language of smooth functions.
The reason is that the question: ``What is the `amount' of physical
quantity contained in an open set?'' is in our view a much more
reasonable physical question, (reasonable from the point of view of
what we can ask the experimentalists to measure), than the question:
``What is the value of quantity at a given point?''. But ``point
values'' as ``recovered'' by delta distributions do seem to give a
precise and reasonable meaning to the last question. There is also a
strong intuition that the ``amount'' of a physical quantity in the
open set $\Omega_{1}\cup \Omega_{2}$, where $\Omega_{1},\Omega_{2}$
are disjoint is the sum of the ``amounts'' of that quantity in
$\Omega_{1}$ and $\Omega_{2}$. That means it is more appropriate to
speak about distributions rather than general smooth mappings from
functions to the real numbers (the mappings should also be linear).

\item
The second reason is that many physical applications suggest the need for a much
richer language than the language of smooth tensor fields. Actually when we look for physically interesting solutions it might be always a matter of importance to have a much larger class of objects available than the class of smooth tensor fields.

\item
The third reason (which is a bit more speculative) is the relation of
the language defining the multiplication of distributions to quantum field theories (but specifically to quantum gravity). Note that the problems requireing distributions, that means problems going
beyond the language of classical differential geometry, might be related to physics on small scales. At the same time understanding some operations with distributions, specifically their product has a large \emph{formal} impact on the foundations of quantum field theory, particularly on the
problem with interacting fields. (See, for example, \cite{QFT}.) As a result of this it can have significant consequences for quantum gravity as well.
\end{itemize}

\paragraph{The intuition behind our ideas}
Considerations about language intuitiveness lead us to an interesting conclusion: the language of distributions (being connected with our intuition)~
strongly suggests that the properties of classical
tensor fields should not depend on the sets having (in every chart) Lebesgue measure 0. If we follow the idea that these measure zero sets  do not have any impact on physics, we should naturally expect that we will be able to
generalize our language from a smooth manifold into a piecewise
smooth manifold (which will bring higher symmetry to our conceptual network).
The first traces of piecewise smooth coordinate transformations are
already seen, for example, in~\cite{Penrose}.

\paragraph{The current situation is ``strange''}
Now it is worth noting how strange the current situation of the theory of distributions is: we have a useful and
meaningful language of distributions, which can be geometrically
generalized, but this language works only for linear physics. But linear physics
is only a starting point (or at best a rough approximation) for
describing real physical interactions, and hence nonlinear physics. So one
naturally expects that the ``physical'' language of distributions will be a result of some
mathematical language defining their product. Moreover,
at the same time we want this language to contain the old language of differential geometry (as a special case), as in the case of linear
theories. It is quite obvious that the nonlinear generalization of the geometric distributional theory and the construction of generalized differential geometry are just two routes to the same mathematical theory. The natural feeling that such theory might exist is the main motivation for this work. The practical need of this language is obvious as well, as we
see in the numerous applications \cite{Kerr, GerTra, QFT, Schwarzschild, geodesic, Reis-Nor, waves, Vickers}. (Although this is not the main
motivation of our work.) But is it necessary that such more general mathematical language exists as a full well defined theory? No, not at all. The potentially successful uses of distributions that go behind the Schwarz original theory might be only ``ad hoc'' from the fundamental reasons. Take the classical physics. The success of such uses of distributions here might not be a result of some more general language than smooth tensor calculus being a classical limit of the more fundamental physics. There might be only hidden specific reasons in the more fundamental physics why such ``ad hoc'' calculations in some of the particular cases work. But it is certainly very interesting and important to explore and answer the question whether: (i) such a full mathematical langauge exists, and (ii) to which extent it is useful to the physics community. For the first question this work suggests a positive answer, to answer the second question much more work has to be done.

\paragraph{So what did we achieve?}
The main motivation for this work is the development of
a language of distributional tensors, strongly connected with physical intuition. (This also means it has to be based on the concept of a piecewise smooth manifold.) This language must contain all the basic tensorial operations in a generalized way, enabling us to understand
the results the community has already achieved, and also the problems attached to them. It is worth to stress that some of this motivation results from a shift in view regarding the foundations of classical physics (so it is given by ``deeper'' philosophical reasons), but it can have also an impact on practical physical questions. The scale of this impact has still to be explored. We claim that the goals defined by our motivation (as described at the beginning) are achieved in this work. Particularly, we have generalized all the basic concepts from smooth tensor field calculus (including the fundamental concept of the covariant derivative) in two basic directions:

\begin{itemize}
\item
The first generalization goes in the direction of the class of objects that, (in every chart on the piecewise smooth manifold), are indirectly related to sets of piecewise continuous functions. This class we call $D'^{m}_{n E A}(M)$. For a detailed understanding see section \ref{GTFsection}.
\item
The second generalization is a generalization to the class of objects naturally connected with a smooth manifold belonging to our piecewise smooth manifold (in the sense that the smooth atlas of the smooth manifold is a subatlas of our piecewise smooth atlas). This is a good analogy to the generalization known from the classical distribution theory. The class of such objects we call $D'^{m}_{n (\mathcal{\mathcal{\mathcal{\mathcal{S}}}} o)}(M)$. For detailed understanding see again the section \ref{GTFsection}.
\end{itemize}
We view our calculus as the most natural and straightforward
construction achieving these two particular goals. The fact that
such a natural construction seems to exist supports our faith in the
practical meaning of the mathematical language here developed.

The last goal of this paper is to suggest much more ambitious, natural generalizations, which are unfortunately at present only in the form of conjectures. Later in the text we provide the reader with such conjectures.

\paragraph{The structure of this paper}
The structure of this paper is the following:
In the first part we want to present the
current state of the Colombeau algebra theory and its geometric
formulations. We want to indicate where its
weaknesses are.
This part is followed by several technical sections, in which we define our theory and prove the basic theorems. First we define the basic concepts underlying our theory. After that we define the concept of generalized tensor fields, their important subclasses and basic operations on the generalized tensor fields (like tensor product). This is followed by the definition of the basic concept of our theory, the concept of equivalence between two generalized tensor fields. The last technical part deals with the definition of the covariant derivative operator and formulation of the initial value problem in our theory.
All these technical parts are followed by explanatory sections, where we discuss our results and show how our theory relates to the practical results already achieved (as described in the first part of the paper).

\section{Overview of the present state of the theory}

\subsection{The theory of Schwartz distributions}

Around the middle of the 20-th century Laurent Schwartz found a mathematically rigorous way
for extending the language of physics from the
language of smooth functions into the language of distributions. Physicists such as Heaviside and Dirac had already given good physics reasons for believing that such a mathematical structure might exist.

The classical formulations of the distribution theory were directly connected with
$\mathbb{R}^{n}$ and were non-geometric. Distributions in such a formulation are typically understood as continuous, linear maps
from compactly supported smooth functions (on $\mathbb{R}^{n}$) to
real numbers. (The class of such functions is typically denoted by $D(\mathbb{R}^{n})$. The dual to such space, which is what distributions are, is typically denoted by $D'(\mathbb{R}^{n})$.) Here the word ``continuous'' refers to the following topology on
the given space of compactly supported smooth functions:
\emph{The sequence of smooth compactly supported functions $f_{l}(x_{i})$ converges
for $l\to\infty$ to a smooth compactly supported function $f(x_{i})$ iff an arbitrary
degree derivative with respect to arbitrary variables $\frac{\partial^{n}f_{l}(x_{i})}{\partial x_{1}^{m_{1}}...\partial x_{k}^{m_{k}}}$ $\left(\sum_{i=1}^{k}m_{k}=n\right)$
converges uniformly on each compact $\mathbb{R}^{n}$ subset to $\frac{\partial^{n}f(x_{i})}{\partial x_{1}^{m_{1}}...\partial x_{k}^{m_{k}}}$.}

The alternative classical way of formulating the theory of distributions is to extend the space of test objects to be such that they still follow appropriate fall-off properties. The properties can be summarized as:

$f(x_{i})$ belongs to such space if it is smooth and
\begin{eqnarray}
(\forall\alpha,~\forall\beta)~~~~~~ \lim_{x_{1},\dots x_{k}\to\infty}\left(x_{1}^{n_{1}}\dots x_{k}^{n_{k}}\frac{\partial^{\alpha}f(x_{i})}{\partial x_{1}^{m_{1}}...\partial x_{k}^{m_{k}}}\right)=0,\\
\sum_{i=1}^{k}n_{i}=\beta, ~~~ \sum_{j=1}^{k}m_{j}=\alpha.~~~~~~~~~~~~~~~~~~~~~~~~~~~~\nonumber
\end{eqnarray}
This is a topological space with topology given by a set of semi-norms $P_{\alpha,\beta}$ defined\footnote{We admit that the notation $P_{\alpha,\beta}$ might be somewhat misleading, since the $\alpha,\beta$ values do not specify the semi-norm in a unique way.} as
\begin{eqnarray}
P_{\alpha,\beta}=\sup_{x_{i}}\left|x_{1}^{n_{1}}\dots x_{k}^{n_{k}}\frac{\partial^{\alpha}f(x_{i})}{\partial x_{1}^{m_{1}}...\partial x_{k}^{m_{k}}}\right|,\\
\sum_{i=1}^{k}n_{i}=\beta, ~~~ \sum_{j=1}^{k}m_{j}=\alpha.~~~~~~~~~~~~~~~~~~~\nonumber
\end{eqnarray}
Naturally, the space of continuous and linear maps on such a space is more restricted as in the first, more common formulation. Objects belonging to such duals are called ``tempered distributions''. The advantage of this more restricted version is that Fourier transform is a well defined mapping on the space of tempered distributions.

If we refer to the more common, first formulation, the space of distributions accommodates the linear
space of smooth functions by the mapping:

\begin{equation}\label{map}
f(x_{i})\to\int_{\mathbb{R}^{n}}f(x_{i})\cdots d^{n}x.
\end{equation}

Here $f(x_{i})$ is a smooth function on $\mathbb{R}^{n}$ and
$\int_{\mathbb{R}^{n}}f(x)\cdots dx$ is a distribution defined as the
mapping:

\begin{equation}
\Psi(x_{i})\to\int_{\mathbb{R}^{n}}f(x_{i})\Psi(x_{i})~d^{n}x, ~~~~\Psi(x_{i})\in D(\mathbb{R}^{n}).
\end{equation}

Moreover, by use of this mapping one can map into the space of
distributions any Lebesgue integrable function (injectively up to a
function the absolute value of which has Lebesgue integral 0) . The
distributional objects defined by the images of the map \eqref{map}
are called regular distributions. The map \eqref{map} always
preserves the linear structure, hence the space of Lebesgue integrable functions is a linear
subspace of the space of distributions. But the space of distributions
is a larger space than the space of regular distributions. This can be
easily demonstrated by defining the delta distribution

\begin{equation}
\delta(\Psi)\equiv\Psi(0)
\end{equation}
and showing that such mapping
cannot be obtained by a regular distribution. Note particularly that delta
distribution had an immediate use in physics in the description of
point-like sources. (Its intuitive use in the work of Paul A.M. Dirac before the theory of Schwartz distributions was found, was one of the main physics reasons why people searched for such language extension.)

These considerations show that the space of distributions is \emph{significantly} lar\-ger than the space
of smooth functions. The space of distributions is classically taken to be a
topological space with the weak ($\sigma-$) topology. In this
topology the space of regular distributions given by smooth
functions is a dense set.

Moreover one can continuously extend the derivative operator from the space
of $C^{1}(\mathbb{R}^{n})$ functions to the space of distributions by using the definition:

\begin{equation}
T_{,~x^{i}}(\Psi)\equiv -T(\Psi_{,~x^{i}}),
\end{equation}
(where $T$ denotes a distribution). This means that $C^{\infty}(\mathbb{R}^{n})$ functions form not only a
linear subspace in the space of distributions, but also a differential
linear subspace. It looks like there stands ``almost'' nothing in the
way of fully and satisfactorily extending the language of
$C^{1}(\mathbb{R}^{n})$ functions to the language of Schwartz
distributions. Unfortunately there is still one remaining trivial
operation and this is the operation of multiplication. That means we
need to obtain some distributional algebra having as subalgebra the algebra of
Lebesgue integrable functions factorized by functions the absolute
value of which has Lebesgue interal 0.
Unfortunately, shortly after Laurent Schwartz formulated the theory
of distributions he proved the following ``no-go'' result \cite{Schwarz}:

The requirement of constructing an algebra that fulfills the following three conditions is inconsistent:

\begin{itemize}
\item[a)] the space of distributions is linearly embedded into the
algebra,
\item[b)] there exists a linear derivative operator, which fulfills
the Leibniz rule and reduces on the space of distributions to the
distributional derivative,
\item[c)] there exists a natural number $k$ such, that our algebra has as subalgebra the algebra of $C^{k}(\mathbb{R}^{n})$
functions.
\end{itemize}
This is called the Schwartz impossibility theorem. There is a nice
example showing where the problem is hidden: Take the Heaviside
distribution $H$. Suppose that $a)$ and $b)$ hold and we multiply
the functions/distributions in the usual way. Then since ~$H^{m}=H$ the following must hold:

\begin{equation}
H'=(H^{m})'=\delta=mH^{m-1}\delta~.
\end{equation}
But this actually implies that $\delta=0$, which is nonsense.

The closest one can get to fulfill the conditions $a)-c)$ from the
Schwarz impossibility theorem is the Colombeau algebra (as defined in the next section) where
conditions $a)- c)$ are fulfilled with the exception that in the $c)$
condition $k$ is taken to be infinite. This means that only the smooth
functions form a subalgebra of the Colombeau algebra. This is obviously not
satisfactory, since we know (and need to recover) rules for multiplying
multiply much larger classes of functions than only smooth functions. In the case of Colombeau algebras this problem is ``resolved'' by
the equivalence relation, as we will see in the following
section.

\subsection{The standard $\mathbb{R}^{n}$ theory of Colombeau
algebras}

\subsubsection{The \emph{special} Colombeau algebra and the embedding of distributions}
The so called special Colombeau algebra is on
$\mathbb{R}^{n}$ defined as:
\begin{equation}
\mathcal{G}(\mathbb{R}^{n})=
\mathcal{E}_{M}(\mathbb{R}^{n})/\mathcal{N}(\mathbb{R}^{n}).
\end{equation}
Here $\mathcal{E}_{M}(\mathbb{R}^{n})$ (moderate functions) is
defined as the algebra of functions:
\begin{equation}
\mathbb{R}^{n} \times(0,1]\to\mathbb{R}
\end{equation}
 that are smooth on $\mathbb{R}^{n}$ (this is usually called $\mathcal{E} (\mathbb{R}^{n})$), and for any compact subset $K$ of ~$\mathbb{R}^{n}$ (for which we will henceforth use the notation ~$K\subset\subset\mathbb{R}^{n}$) it  holds that:
\begin{equation}
\forall\alpha\in\mathbb{N}^{n}_{0},~\exists p\in\mathbb{N} ~\hbox{
such that}\footnote{In this definition, (and also later in the text), the symbol $\mathbb{N}^{n}_{0}$ denotes a sequence of $n$ members formed of natural numbers (with 0 included).}  ~\sup_{x\in K}|D^{\alpha}f_{\epsilon}(x_{i})|\leq
O(\epsilon^{-p}) ~\hbox{ as } ~\epsilon\to
 0.
\end{equation}
The $\mathcal{N}(\mathbb{R}^{n})$ (negligible functions) are
functions from $\mathcal{E}(\mathbb{R}^{n})$ where for any
$K\subset\subset\mathbb{R}^{n}$ it holds that:
\begin{equation}
\forall\alpha\in\mathbb{N}^{n}_{0},~\forall p\in\mathbb{N} ~\hbox{
we have } ~\sup_{x_{i}\in K}|D^{\alpha}f_{\epsilon}(x_{i})|\leq
O(\epsilon^{p}) ~\hbox{ as } ~\epsilon\to 0.
\end{equation}
 The first definition tells us that moderate functions are those whose partial derivatives of arbitrary degree (with respect to variables $x_{i}$) do not diverge faster then any arbitrary negative power of $\epsilon$, as $\epsilon\to 0$. Negligible functions are those moderate functions whose partial derivatives of arbitrary degree go to zero faster than any positive power of $\epsilon$, as $\epsilon\to 0$. This simple formulation can be straightforwardly generalized
into general manifolds just by substituting the concept of Lie derivative for the ``naive'' derivative
used before.

It
can be shown, by using convolution with an arbitrary smoothing
kernel (or mollifier), that we can embed a distribution into the Colombeau algebra. By a
smoothing kernel we mean, in the widest sense a compactly supported, smooth
function $\rho_{\epsilon}(x_{i})$, with $\epsilon\in (0,1]$, such that:
\begin{itemize}
\item
$\supp(\rho_{\epsilon})\to\{0\}$ ~~for ~~($\epsilon\to 0$),
\item
$\int_{\mathbb{R}^{n}}\rho_{\epsilon}(x_{i})~d^{n}x\to 1$~~~ for ~~($\epsilon\to 0$),
\item
$\forall\eta >0$ ~~$\exists C$,~~
$\forall\epsilon\in (0,\eta)$~~~ $\sup_{x_{i}}
|\rho_{\epsilon}(x_{i})|<C$.
\end{itemize}
This most generic embedding approach is mentioned for example in
\cite{Penrose}~(in some sense also in \cite{geodesic}).

More ``restricted'' embeddings to $\mathcal{G}(\mathbb{R}^{n})$ are
also commonly used. We can choose for instance a subclass of
mollifiers called ~$A^{0}(\mathbb{R}^{n})$, which are smooth
functions from ~$D(\mathbb{R}^{n})$ (smooth, compactly supported)
and (i.e. \cite{Gsponer}) such that

\begin{equation}
\forall\epsilon~~\hbox{holds}~~\int_{\mathbb{R}^{n}}\rho_{\epsilon}(x_{i})d^{n}x =
 1.
\end{equation}
Their dependence on $\epsilon$ is given\footnote{Later in the text will the notation $\rho_{\epsilon}(x_{i}), \psi_{\epsilon}(x_{i})$ (etc.) automatically mean dependence on the variable $\epsilon$ as in \eqref{epsilon}.} as
\begin{equation}\label{epsilon}
\rho_{\epsilon}(x_{i})\equiv \frac{1}{\epsilon^{n}}\rho\left(\frac{x_{i}}{\epsilon}\right)~.
\end{equation}

Sometimes the class is even more restricted. To
obtain such a formulation, we shall define classes ~$A^{m}(\mathbb{R}^{n})$ as classes
of smooth, compactly supported functions, such that
\begin{equation}
 \int_{\mathbb{R}^{n}}
x^{i}_{1}\cdots x^{j}_{l}~\phi(x_{i})d^{n}x=\delta_{0k}~~ \hbox{ for }~~
i+\cdots+j=k\leq m.
\end{equation}
Clearly $A^{m+1}(\mathbb{R}^{n})\subset
A^{m}(\mathbb{R}^{n})$. Then the most restricted class of mollifiers
is taken to be the class $A^{\infty}(\mathbb{R}^{n})$. This approach
is taken in the references
\cite{Schwarzschild, Oberguggenberger, Reis-Nor, waves, Vickers}.

Even in the case of the more restricted class of mollifiers the embeddings are generally non-canonical \cite{Gsponer, Vickers}. The exception are smooth distributions, where the difference between two embeddings related to two different mollifiers is always a negligible function.

\subsubsection{The \emph{full} Colombeau algebra and the embedding of distributions}

What is usually considered
to be the canonical formulation of Colombeau algebras in
$\mathbb{R}^{n}$ is the following: The theory is formulated in terms
of functions
\begin{equation}
\mathbb{R}^{n} \times A^{0}(\mathbb{R}^{n})\to\mathbb{R} ~~~\hbox{ (call
them $F$) }.
\end{equation}
The Colombeau algebra is defined in such way that it is a factor
algebra of moderate functions over negligible functions, where:

\begin{itemize}
\item
Moderate functions are functions from $F$ that satisfy:
\begin{eqnarray}
\forall m\in \mathbb{N}^{n}_{0}, ~\forall K\subset\subset\mathbb{R}^{n} ~~\exists N\in\mathbb{N} ~\hbox{ such that
if } ~\phi\in A^{N}(\mathbb{R}^{n}), \hbox{ there are } \alpha, \rho
>0, \nonumber\\  \hbox{ such that }   \sup_{x_{i}\in K}~\left
|D^{m}F(\phi_{\epsilon},x_{i})\right|\leq
\alpha\epsilon^{-N} ~\hbox{ if }~ 0<\epsilon<\rho.~~~~~~~~~~~~~~~~
\end{eqnarray}
\item
Negligible functions are functions from $F$ that satisfy:
\begin{eqnarray}
  \forall m\in\mathbb{N}, ~\forall K\subset\subset\mathbb{R}^{n},~~\forall p\in\mathbb{N} ~~\exists q\in\mathbb{N} ~\hbox { such that if } ~~\phi\in A^{q}(\mathbb{R}^{n}), ~\exists\alpha,\rho>0, \nonumber\\ \hbox { we have } \sup_{x_{i}\in K}\left|D^{m}F(\phi_{\epsilon},x_{i})\right|\leq\alpha\epsilon^{p} ~\hbox { if }~ 0<\epsilon<\rho.~~~~~~~~~~~~~~~~~~~~~~~~
\end{eqnarray}

\end{itemize}

Then ordinary distributions automatically define such functions by
the convolution (\cite{multi, Damyanov, geodesic, Vickers}
etc.):\footnote{Here the ``$B_{x}$'' notation means that $x$ is the
variable removed by applying the distribution.}
\begin{equation}
 B\to
B_{x}\left[\frac{1}{\epsilon^{n}}\;
\phi\left(\frac{y_{i}-x_{i}}{\epsilon}\right)\right], ~~\phi\in
A^{0}(\mathbb{R}^{n}).
\end{equation}

\subsubsection{Important common feature of both formulations}

All of these formulations have two important consequences. Given that $C$ denotes the embedding mapping:
\begin{itemize}
\item
Smooth functions ($C^{\infty}(\mathbb{R}^{n})$) form a
subalgebra of the Colombeau algebra
($C(f)C(g)=C(f \cdot g)$~ for $f$, $g$ being
smooth distributions).
\item
Distributions form a differential linear subspace of Colombeau algebra
(this means
for instance that $C(f')=C'(f)$~).
\end{itemize}

\subsubsection{The relation of equivalence in the \emph{special} Colombeau algebra}

We can formulate a relation of equivalence between an element of the
special Colombeau algebra $f_{\epsilon}(x_{i})$ and a distribution $T$. We call them equivalent,
if for any ~$\phi\in D(\mathbb{R}^{n})$,~ we have
\begin{equation}
\lim_{\epsilon\to
0}\int_{\mathbb{R}^{n}} f_{\epsilon}(x_{i})\phi(x_{i})d^{n}x= T(\phi).
\end{equation}
 Then two elements of
Colombeau algebra $f_{\epsilon}(x_{i})$, $g_{\epsilon}(x_{i})$ are equivalent,
if for any ~$\phi(x_{i})\in D(\mathbb{R}^{n})$
\begin{equation}
 \lim_{\epsilon\to
0}\int_{\mathbb{R}^{n}} \left(f_{\epsilon}(x_{i})-g_{\epsilon}(x_{i})\right)\phi(x_{i})~d^{n}x=0.
\end{equation}
For the choice of
$A^{\infty}(\mathbb{R}^{n})$ mollifiers the following relations are respected by
the equivalence:
\begin{itemize}
\item
It respects multiplication of distribution by a
smooth distribution \cite{Vickers} in the sense that: ~~$C(f\cdot g)\approx
C(f)\cdot C(g)$,~ where ~$f$ is a smooth distribution and $g\in D'(\mathbb{R}^{n})$.~
\item
It
respects (in the same sense) multiplication of piecewise continuous
functions (we mean here regular distributions given by
piecewise continuous functions) \cite{multi}.
\item
If $g$ is a distribution and $f\approx g$,
then for arbitrary natural number $n$~ it holds ~$D^{n}f\approx D^{n}g$ ~\cite{Damyanov}.
\item
If $f$ is
equivalent to distribution $g$, and if $h$ is a smooth
distribution, then $f\cdot h$ is equivalent to $g\cdot h$ ~\cite{Damyanov}.
\end{itemize}

\subsubsection{The relation of equivalence in the \emph{full} Colombeau algebra}

In the canonical formulation the
equivalence relation is again formulated either between an element of the Colombeau
algebra and a distribution, or analogously between two elements of
the Colombeau algebra:
If there  $\exists m$, such that for any
$\phi\in A^{m}(\mathbb{R}^{n})$, and for any
~$\Psi(x_{i})\in D(\mathbb{R}^{n})$, it holds that
\begin{equation}
\lim_{\epsilon\to 0}\int_{\mathbb{R}^{n}}
(f(\phi_{\epsilon},x_{i})-g(\phi_{\epsilon},x_{i}))\Psi(x_{i})~d^{n}x=0,
\end{equation}
then we say that $f$ and $g$ are equivalent ($f\approx g$).
For the canonical
embedding and differentiation we have the same commutation relations as in the
non-canonical case. It can be also
proven that for $f_{1}\dots f_{n}$ being regular distributions given
by piecewise continuous functions it follows that
\begin{equation}
C(f_{1})...C(f_{n})\approx C(f_{1}...f_{n}),
\end{equation}
 and for $f$ being
arbitrary distribution and $g$ smooth distribution it holds that
\begin{equation}
C(f)\cdot C(g)\approx C(f\cdot g).
\end{equation}

\subsubsection{How this relates to some older Colombeau papers}

In
older Colombeau papers \cite{multi, Amulti} all these concepts are formulated (equivalently)
as the relations between elements of a Colombeau algebra taken as a
subalgebra of the ~$C^{\infty}(D(\mathbb{R}^{n}))$ algebra. The definitions of
moderate and negligible elements are almost exactly the same as in
the canonical formulation, the only difference is that their domain is taken here to be the class
~$D(\mathbb{R}^{n})\times\mathbb{R}^{n}$~ (being a larger domain than $A^{o}(\mathbb{R}^{n})\times\mathbb{R}^{n}$). The canonical formulation is related to the elements of the class $C^{\infty}(D(\mathbb{R}^{n}))$ through their convolution with the objects from the class
$D(\mathbb{R}^{n})$.  It is easy to see that you can
formulate all the previous relations as relations between elements of the ~$C^{\infty}(D(\mathbb{R}^{n}))$
subalgebra (with pointwise multiplication), containing also distributions.

\subsection{Distributions in the geometric approach}

This part is devoted to review the distributional theory in the geometric framework. How to define arbitrary rank tensorial distribution
on arbitrary manifolds by avoiding reference to preferred charts? Usually we mean by a
distribution representing an $(m,n)$ tensor field an element from the
dual to the space of objects given by the tensor product of $(m,n)$ tensor fields and smooth
compactly supported $k$-form fields
 (on $k$ dimensional space). That means for example a regular distributional $(m,n)$ tensor field $B^{\mu...}_{\nu...}$ is introduced as a map
\begin{equation}
T\otimes\omega\to\int B^{\nu...\beta}_{\mu...\alpha}~T^{\mu...\alpha}_{\nu...\beta}~\omega ~.
\end{equation}
This is very much the same as to say that the test space are smooth
compactly supported tensor densities
$T^{\mu....\alpha}_{\nu...\beta}$
\cite{Garfinkle, GerTra}. The topology taken on this space is the
usual topology of uniform convergence for arbitrary derivatives
related to arbitrary charts (so the convergence from $\mathbb{R}^{n}$ theory should be valid in
all charts). The derivative operator acting on this space is typically Lie derivative. (Lie derivative along a smooth vector field $\xi$ we denote $L_{\xi}$.) This does make sense, since:
\begin{itemize}
 \item
To use derivatives of distributions we
automatically need derivatives along vector fields.
 \item
Lie derivative preserves $p$-forms.
 \item
In case of Lie derivatives, we do not need to apply any additional geometric structure (such as connection in the case of covariant derivative).
\end{itemize}

There is an equivalent formulation to
\cite{GerTra}, given by \cite{Vickers}, which
takes the space of tensorial distributions to be ~~$D'(M)\otimes T^{m}_{n}(M)$.~~ Here $D'(M)$ is the
dual to the space of smooth, compactly supported $k$-form fields ($k$
dimensional space). Or in other words, it is the space of sections with distributional coefficients. In \cite{geometry} the authors
generalize the whole construction by taking the tensorial
distributions to be the dual to the space of compactly supported
sections of the bundle ~$E^{*}\otimes ~V\!ol^{1-q}$. Here
~$V\!ol^{1-q}$~ is a space of ~($1-q$)-densities and ~$E^{*}$~ is a
dual to a tensor bundle ~$E$~ (hence to the dual belong objects
given as ~$E\otimes ~V\!ol^{q}$). In all these formulations Lie
derivative along a smooth vector field represents the differential
operator\footnote{We can mention also another classical formulation
of distributional form fields, which comes from the old book of
deRham \cite{deRham}~(it uses the expression ``current''). It
naturally defines the space of distributions to be a dual space to
space of all compactly supported form fields.}.

Let us mention here that there is one
unsatisfactory feature of these constructions, namely that for
physical purposes we need much more to incorporate the concept
of the covariant derivative rather than the Lie derivative. There
was some work done in this direction \cite{Hartley, Marsden, Wilson}, but it
is a very basic sketch, rather than a full and satisfactory theory.
It is unclear (in the papers cited) how one can obtain for the
covariant derivative operator the expected and meaningful results
outside the class of smooth tensor fields.

\subsection{Colombeau algebra in the geometric approach}

\subsubsection{Scalar \emph{special} Colombeau algebra}

For arbitrary general manifold it is easy to find a covariant formulation of the \emph{special}
Colombeau algebra. At the end of the day you obtain a
space of $\epsilon$-sequences of functions on the general manifold $M$.
For the non-canonical case the definitions are
similar to the non-geometric formulations, the basic difference
is that Lie derivative plays here the role of the $\mathbb{R}^{n}$
partial derivative. Thus the
definition of the Colombeau algebra will be again:
\begin{equation}
\mathcal{G}(\Omega)=\mathcal{E}_{M}(\Omega)/\mathcal{N}(\Omega).
\end{equation}
$\mathcal{E}_{M}(\Omega)$ (moderate functions) are defined as
algebra of functions~ $\Omega \times (0,1]\to \mathbb{R}$, such that
are smooth on $\Omega$ (this is usually called $\mathcal{E}
(\Omega)$) and for any $K\subset\subset\Omega$ we insist that
\begin{eqnarray}
\forall k\in\mathbb{N}^{n}_{0},~\exists p\in\mathbb{N}~ \hbox{
such that } ~\forall \xi_{1}...\xi_{k}~ \hbox{ which are smooth
vector fields, }\nonumber\\  ~\sup_{x\in
K}|L_{\xi_{1}}...L_{\xi_{k}}f_{\epsilon}(x)|\leq O(\epsilon^{-p})
~\hbox{ as } ~\epsilon\to 0.~~~~~~~~~~~~~~~~~~~~~~~~~~~~~~\label{Def.}
\end{eqnarray}
$\mathcal{N}(\Omega)$ (negligible functions) we define exactly in
the same analogy to the non-geometric formulation\footnote{This
version is due to \cite{Schwarzschild}. There are also different
definitions: in \cite{geometry} the authors use instead of ``for
every number of Lie derivatives along all the possible smooth vector
fields'' the expression ``for every linear differential operator'',
but they prove that these definitions are equivalent. This is also
equivalent to the statement that in any chart holds:
$\Phi\in\mathcal{E}_{M}(\mathbb{R}^{n})$ (see
 \cite{geometry}).}:
\begin{eqnarray}
\forall K\subset\subset\Omega,~\forall k\in\mathbb{N}^{n}_{0},~\forall p\in\mathbb{N},~~\forall \xi_{1}...\xi_{k}~ \hbox{ which are smooth
vector fields, } \nonumber\\  ~\sup_{x\in
K}|L_{\xi_{1}}...L_{\xi_{k}}f_{\epsilon}(x)|\leq O(\epsilon^{p})
~\hbox{ as } ~\epsilon\to 0.~~~~~~~~~~~~~~~~~~~~~~~~~~~~~~\label{Def.}
\end{eqnarray}

\subsubsection{Tensor \emph{special} Colombeau algebra and the embedding of distributions}

After one defines the scalar special Colombeau algebra, it is easy to define the generalized Colombeau tensor
algebra as the tensor product of sections of a tensor bundle and Colombeau algebra. This
can be formulated more generally \cite{geometry} in terms of
maps from $M$ to arbitrary manifold. One can define them by changing the absolute value in the definition \eqref{Def.} to the expression ``any
Riemann measure on the target space''. Then you get the algebra \cite{geometry, generalized, Vickers}:
\begin{equation}
\mathbf{\Gamma_{C}}(X,Y)=\mathbf{\Gamma}_{M}(X,Y)/\mathbf{N}(X,Y).
\end{equation}
The tensor fields are represented when the target space is taken to
be the $TM$ manifold\footnote{This is equivalent to
~$\mathcal{G}(X)\otimes\Gamma(X,E)$ tensor valued Colombeau
generalized functions \cite{generalized}.}. It is clear that any
embedding of distributions into such algebra will be non-canonical
from various reasons. First, it is non-canonical even on
$\mathbb{R}^{n}$. Another, second reason is that this embedding will
necessarily depend on some preferred class of charts on $M$. The
embedding one defines as \cite{geometry}:

We pick an atlas, and take a smooth partition of
unity subordinate to ~$V_{\alpha_{i}}$,~ ($\Theta_{j}$,\\~
$\supp(\Theta)\subseteq V_{\alpha_{j}}$ $j\in\mathbb{N}$) and we
choose for every $j$,~ $\xi_{j}\in D(V_{\alpha_{j}})$, such that
$\xi_{j}=1$~ on ~$\supp(\xi_{j})$.~ Then we can choose in fixed charts
an $A^{\infty}(\mathbb{R}^{n})$ element $\rho$, and the embedding is given by
\begin{equation}
\sum^{\infty}_{j=1}\left(((\xi_{j}(\Theta_{j}\circ\psi^{-1}_{\alpha_{j}})u_{\alpha_{j}})*\rho_{\epsilon})\circ
\psi_{\alpha_{j}}\right)_{\epsilon},
\end{equation}
 where $\psi_{\alpha_{j}}$ is a
coordinate mapping and ~$*$~ is a convolution.

\subsubsection{The equivalence relation}

Now let us define the equivalence relation in analogy to the $\mathbb{R}^{n}$
case. Since in \cite{geometry} the strongest constraint
on the mollifier is taken, one would expect that strong
results will be obtained, but the definition is more
complicated. And in fact, standard results (such as
embedding of smooth function multiplying distribution is equivalent to
product of their embeddings) are not valid here
\cite{geometry}. That is why the stronger concept of
$k$-association is formulated. It states that $U\in\Gamma_{C}$ is $k$ associated to
function $f$, if ~~

\begin{equation}
\lim_{\epsilon\to 0} L_{\xi_{1}}...L_{\xi_{l}}
(U_{\epsilon}-f)\to 0
\end{equation}
 uniformly on compact sets for all ~$l\leq
k$. The cited paper does not contain a precise definition of $k$ equivalence
between two generalized functions, but
it can be easily derived.

\subsubsection{The older formulation of scalar \emph{full} Colombeau algebra}

If we want to get a canonical formulation, we certainly cannot
generalize it straight from the $\mathbb{R}^{n}$ case (the reason is
that the definition of the classes $A^{n}(\mathbb{R}^{n})$ is not
diffeomorphism invariant). However, there is an approach providing
us with a canonical formulation of generalized scalar fields
\cite{global}. The authors define the space $\mathcal{E}(M)$ as a
space of $C^{\infty}(M~\times ~A^{0}(M))$, where $A^{0}(M)$ is the
space of $n$-forms ($n$-dimensional space), such that ~$\int
\omega=1$. Now the authors define a smoothing kernel as $C^{\infty}$
map from
\begin{equation}
M\times I\to A^{0}(M),
\end{equation}
such that it satisfies:
\begin{itemize}
 \item[(i)] $\forall K\subset\subset M~~ \exists\epsilon_{0},~~ C>0~~ \forall p\in K~~ \forall\epsilon\leq\epsilon_{0},~~ \supp~\phi(\epsilon, p)\subseteq B_{\epsilon C}(p)$,
 \item[(ii)] $\forall K\subset\subset M, ~~\forall k,l\in\mathbb{N}_{0}, ~~\forall X_{1},\cdots X_{k},Y_{1}\cdots Y_{l}$~ smooth vector fields,\\ $\sup_{p\in K, q\in M}\left\Vert L_{Y_{1}}\cdots L_{Y_{l}}(L'_{X_{1}}+L_{X_{k}})\cdots (L'_{X_{k}}\cdots L_{X_{k}})\Phi(\epsilon,p)(q)\right\Vert = \\ = O(\epsilon^{-(n+1)})$.
\end{itemize}
Here $L'$ is defined as: ~
\begin{equation}
L'_{X}f(p,q)=L_{X}(p\to f(p,q))=\frac{d}{dt}f((Fl^{x}_{t})(p),q)|_{0}.
\end{equation}
$B_{\epsilon C}$ is a ball centered at $C$ having radius
${\epsilon}$ measured relatively to arbitrary Riemannian metric. Let
us call the class of such smoothing kernels ~$A^{0}(M)$. Then in
\cite{global} classes $A^{m}(M)$ are defined as the set of all
$\Phi\in A^{0}(M)$ such that $\forall f\in C^{\infty}(M)$ and
$\forall K\subset \subset M$ (compact subset) it holds:
\begin{equation}
\sup\left| f(p)-\int_{M}f(q)\Phi(\epsilon,p)(q)\right|=O(\epsilon^{m+1}).
\end{equation}
Moderate and the negligible functions are defined in the following way:\\
$R\in \mathcal{E}(M)$ is moderate if ~~$\forall K\subset\subset
M~~\forall k \in\mathbb{N}_{0} ~~\exists N \in \mathbb{N}~~\forall
X_{1},....X_{k}$ ($X_{1},....X_{k}$ are smooth vector fields) and
~$\forall \Phi\in A^{0}(M)$~ one has:
\begin{equation}
\sup_{p\in
K}~\left\Vert L_{X_{1}}.....L_{X_{k}}(R(\Phi(\epsilon,p),p))\right\Vert=O(\epsilon^{-N}).
\end{equation}
$R\in\mathcal{E}(M)$ is negligible if ~~$\forall K\subset\subset
M,~\forall k,l\in\mathbb{N}_{0}~~\exists m\in\mathbb{N}~~\forall
X_{1},...X_{k}$ ($X_{1},...X_{k}$ are again smooth vector fields)
and ~$\forall\Phi\in A_{m}(M)$~ one has:
\begin{equation}
\sup_{p\in
K}~\left\Vert L_{X_{1}}...L_{X_{k}}(R(\Phi(\epsilon,p),p)\right\Vert=O(\epsilon^{l}).
\end{equation}
Now we can define the Colombeau algebra in the usual way as a factor
algebra of moderate functions over negligible functions. Scalar distributions, defined
as dual to $n$-forms, can be embedded into such algebra in a complete analogy
to the canonical $\mathbb{R}^{n}$ formulation. Also association is in this case
defined in the ``usual'' way (integral with compactly supported
smooth $n$-form field) and has for multiplication the usual properties.
However any attempt to get a straightforward generalization from scalars to tensors
brings immediate problems, since the embedding does not commute
with the action of diffeomorphisms. This problem was finally resolved in
\cite{CanTen}.

\subsubsection{Tensor \emph{full} Colombeau algebra and the embedding of distributions}

The authors of reference \cite{CanTen} realized that
diffeomorphism invariance can be achieved by adding some background structure
defining how tensors transport from point to point, hence a
transport operator. Colombeau  $(m,n)$ rank tensors are then taken
from the class of smooth maps ~~$C^{\infty}(\omega, q, B)$~~having values in
~$(T^{m}_{n})_{q}M$, ~~ where ~$\omega\in A^{0}(M)$, $q\in M$  ~and~
$B$~ is from the class of compactly supported transport operators.
After defining how Lie derivative acts on such objects and
the concept of the ``core'' of a transport operator, the authors of reference \cite{CanTen} define (in a
slightly complicated analogy to the previous case) the moderate and the
negligible tensor fields. Then by usual factorization they obtain the
canonical version of the generalized tensor fields (for more details
see \cite{CanTen}). The canonical embedding of tensorial
distributions is the following:
The smooth tensorial objects are embedded as

\begin{equation}
\tilde t(p,\omega,B)=\int t(q)B(p,q)\omega(q) dq
\end{equation}
where as expected $\omega\in A^{0}(M)$, $t$ is the smooth tensor field and
$B$ is the transport operator. Then the arbitrary tensorial distribution $s$ is embedded (to
$\tilde s$) by the condition

\begin{equation}
\tilde s(\omega, p, B)\cdot t(p)=\left(s, B(p,.)\cdot
t(p)\otimes\omega(.)\right),
\end{equation}
where on the left side we are contracting the embedded object with a
smooth tensor field $t$, and on the right side we are applying the
given tensorial distribution $s$ in the variable assigned by the
dot. It is shown that this embedding fulfills all
the important properties, such as commuting with the Lie
derivative operator \cite{CanTen}. All the other results related to equivalence relation
(etc.) are obtained in complete analogy to the previous cases.

\subsubsection{The generalized geometry (in \emph{special} Colombeau algebras)}

In \cite{geometry, Connection, generalized} the authors generalized all the basic geometric structures, like connection, covariant derivative, curvature, or geodesics into the geometric formulation of the special Colombeau algebra. That means they defined the whole generalized geometry.

\subsubsection{Why is this somewhat unsatisfying?}

 However in our view, the crucial part is missing. What we would like to see is an intuitive and clear definition of the covariant derivative operator acting on the distributional objects in the canonical Colombeau algebra formulation, on one hand reproducing all the classical results, and on the other hand extending them in the same natural way as in the classical distributional theory with the classical derivative operator. Whether there is any way to achieve this goal by the concept of generalized covariant derivative acting on the generalized tensor fields, as defined in \cite{generalized}, is unclear. Particularly it is not clear whether such generalized geometry can be formulated also within the canonical Colombeau algebra approach. There exist definitions of the covariant derivative operator within the distributional tensorial framework \cite{Hartley, Marsden}. (The reference \cite{Hartley} gives particularly nice application of such distributional tensor theory to signature changing spacetimes.) But these approaches are still ``classical'', in the sense, that they do not fully involve the operation of the tensor product of distributional tensors.
There cannot be any hope of
finding a more appropriate, generalized formulation of classical physics
without finding such a clear and intuitive definition of both, the covariant derivative and the tensor product. All that can be in this situation achieved is
to use these constructions to solve some specific problems within
the area of physics. But as we see, the more ambitious goal can be
very naturally achieved by our own construction, which follows after
this overview.

\subsection{Practical application of the standard results}

Now we will briefly review various applications of
the Coulombeau theory presented before.

\subsubsection{Classical shock waves}
The first application we will mention is the non-general-relativistic one. In
\cite{shock} the authors provide us with weak solutions of
nonlinear partial differential equations (using Colombeau algebra)
representing shock waves. They use a special version of the
Colombeau algebra, and specifically the relation
\begin{equation}
H^{n}H'\approx \frac{1}{n+1}~H',
\end{equation}
(which is related to mollifiers from
$A^{0}(\mathbb{R}^{n})$ class). The more general analysis related to the existence and uniqueness of
weak solutions of nonlinear partial differential
equations can be found in \cite{Oberguggenberger}.

\subsubsection{Black hole ``distributional'' spacetimes}
In general relativity there are many results obtained by the use of Colombeau
algebras. First, we will focus on the distributional Schwarzschild geometry, which is
analysed for example in \cite{Schwarzschild}. The authors of \cite{Schwarzschild} start to
work in Schwarzschild coordinates using the special Colombeau algebra
and $A^{\infty}(\mathbb{R}^{n})$ classes of mollifiers. They obtain
the delta-functional results (as expected) for the Einstein tensor, and
hence also for the stress-energy tensor. But in Schwarz\-schild coordinates there
are serious problems with the embedding of the distributional tensors, since these coordinates do not
contain the 0 point. As a result, if one looks for smooth embeddings, one does
not obtain an inverse element in the Colombeau algebra in the
neighbourhood of 0 for values of $\epsilon$ close to 0. (Although there is no problem, if we
require that the inverse relation should apply only in the sense of
equivalence.) Progress can be made by
turning to Eddington-Finkelstein coordinates \cite{Schwarzschild}. The metric is obtained in Kerr-Schild form, in which one is able to compute $R^a{}_b,
G^a{}_b$ and hence $T^a{}_b$ as delta-functional objects (which is expected). (The (1,1) form of the field equations is used since the metric dependence has a relatively simple form in Eddington-Finkelstein coordinates.) This
result does not depend on the mollifier (see also \cite{Vickers}),
but one misses the analysis of the relation between different embeddings given by the different
coordinate systems \footnote{It seems that
the authors use relations between $R^\mu{}_\nu$ and components of
$g^{\mu\nu}$ obtained in Eddington-Finkelstein coordinates by using algebraic tensor
computations. Then it is not obvious, whether these results can be
obtained by computation in the Colombeau algebra using the $\approx$
relation, since in such case some simple tricks (such as substitution) cannot in general be used.}.
Even in the case of Kerr geometry there is a computation
of $\sqrt{(g')}R^\mu{}_\nu$ (where $g^{ab}=g'^{ab}+fk^{a}k^{b}$)
given by Balasin, but this is
mollifier dependent \cite{Kerr, Vickers}. Here the
coordinate dependence of the results is even more
unclear.

\subsubsection{Aichelburg metric}
There exists an ultrarelativistic weak limit of the Schwarzschild
metric. It is taken in Eddington-Finkelstein coordinates $u=t+r^{*}$,
$v=t-r^{*}$, (where $r^{*}$ is tortoise coordinate) by taking the boost in the weak $v\to c$ limit.
We obtain the ``delta functional'' Aichelburg metric. Reference \cite{geodesic} provides a computation of geodesics in such a geometry. The authors of \cite{geodesic} take the special Colombeau algebra (and take
$A^{0}(\mathbb{R}^{n})$ as their class of mollifiers), and they prove
that geodesics are given by the refracted lines. The results are mollifier
independent. This is again expected. Moreover, what seems to be really
interesting is that there is a continuous metric which is connected with
the Aichelburg metric by a generalized coordinate transformation
\cite{Penrose, Vickers}.

\subsubsection{Conical spacetimes}
The other case we
want to mention are conical spacetimes. One of the papers where the conical spacetimes are analysed
is an old paper of Geroch and Traschen \cite{GerTra}. In \cite{GerTra} it is shown that conical spacetimes can not be analysed through the concept of
$gt$-metric. These are metrics which provide us with a distributional Ricci
tensor in a very naive sense. The multiplication is given just by a simple product
of functions defining the regular distributions. A calculation of the stress energy tensor was given by Clarke, Vickers and Wilson \cite{Clarke}, but this is mollifier dependent (although it is
coordinate independent \cite{Vickers}).

\section{How does our approach relate to current theory?}

\paragraph{General summary}
How does our own approach (to be described in detail in the next section) relate to all what has presently been achieved in the Colombeau theory? We can summarize what we will do in three following points:

\begin{itemize}
\item
We will define tensorial distributional objects, and the basic related operations (especially the \emph{covariant derivative}). The definition directly follows our physical intuition (there is a unique way of constructing it). This generalizes the Schwartz $\mathbb{R}^{n}$ distribution theory.

\item
We will formulate the Colombeau equivalence relation in our approach and obtain all the usual equivalence results from the Colombeau theory.

\item
We will prove that the classical results for the covariant derivative operator (as known within differential geometry) significantly generalize in our approach.
\end{itemize}
The most important point is that our approach is fully based on the Colombeau equivalence relation translated to our language. That means we take and use only this particular feature of the Colombeau theory and completely avoid the Colombe\-au algebra construction.

\paragraph{The advantages compared to usual Colombeau theory}
What are the advantages of this approach comparing to Colombeau theory?
\begin{itemize}
\item
First, by avoiding the algebra factorization (which is how Colombeau algebra is constructed) we fulfill the physical intuitiveness condition of the language used.
\item
Second, we can naturally and easily generalize the concept of the covariant derivative in our formalism, which has not been completely satisfactorily achieved by the Colombeau algebra approach\footnote{As previously mentioned, the current literature lists some ambiguous attempts to incorporate the covariant derivative, but no satisfactory theory.}. This must be taken as an absolutely necessary condition that any generalization of a fundamental physical language must fulfill.
\item
It is specifically worth discussing the third advantage:  Why is the classical approach so focused on Colombeau
algebras? The answer is simple: We want to get an algebra of
$C^{\infty}(M)$ functions as a subalgebra of our algebra. (This is
why we need to factorize by negligible functions, and we need to get
the largest space where they form an ideal, which is the space of
moderate functions.) But there is one strange thing: all our efforts
are aimed at reaching the goal of getting a more rich space than is provided by the space
of smooth functions. But there is no way of getting a larger
differential subalgebra than the algebra of smooth functions, as
is shown by Schwartz impossibility result. That is why we use only
the equivalence relation instead of straight equality. But then the
question remains: Why one should still prefer smooth
functions? Is not the key part of all the theory the equivalence
relation? So why is it that we are not satisfied with the way the equivalence relation recovers multiplication of the smooth objects (we require something stronger), but we are satisfied with the way it recovers multiplication
within the larger class? Unlike the Colombeau algebra based approach, we are simply taking seriously the idea that one should treat all the objects in an equal way. This means we do not see any reason to try to achieve ``something more'' with smooth objects than we do with objects outside this class. And the fact that we treat all the objects in the same way provides the third advantage of our theory; it makes the theory much more natural than the Colombeau algebra approach.
\item
The fourth advantage is that it naturally works with the much more general (and for the language of distributions natural) concept of piecewise smooth manifolds, so the generalization of the physics language into such conceptual framework will give it much higher symmetry.
\item
The fifth and last, ``small'' advantage comes from the fact that by avoiding the Colombeau algebra construction we automatically remove the problem of how to canonically embed arbitrary tensorial distributions into the algebra. But one has to acknowledge that this problem was already solved also within the Colombeau theory approach \cite{CanTen}.
\end{itemize}

\paragraph{The disadvantages compared to usual Colombeau theory}
What are the disadvantages of this approach compared to Colombeau theory? A conservative person might be not satisfied with the fact that we do not have a smooth tensor algebra as a subalgebra of our algebra. This means that the classical smooth tensorial fields have to be considered to be solutions of equivalence relations only (as opposed to the classical, ``stronger'' view to take them as solutions of the equations). But in my view, this is not a problem at all. The
equivalence relations contain all the classical smooth tensorial solutions
(equivalently smooth tensor fields), for the smooth initial value
problem. So for the smooth initial value problem the equivalence relations
reduce to classical equations. We will suggest how to extend the initial
value problem for larger classes of distributions and
show that it has unique solutions. It is true that the equivalence relations
might have also many other (generally non-linear) solutions in the
space constructed,\footnote{By ``solution'' we mean here any
object fulfilling the particular equivalence relations.} but the
situation that there exist many physically meaningless solutions is
for physicists certainly nothing new. Our previous considerations
suggest that we shall look only for distributional solutions (they
are the ones having physical relevance) where the solution is
provided to be unique (up to initial values obviously), if it exists.
(This will also recover the common, but also many ``less'' common
physical results \cite{Vickers}.)

\section{New approach}\label{NewApp}

\subsection{The basic concepts/definitions}\label{basicconcepts}

Before saying anything about distributional
tensor fields, we have to define the basic concepts which will be
used in all the following mathematical constructions. This task is
dealt with in this section, so this is the part crucially important for
understanding all the subsequent theory. An attentive reader, having read through the introduction and abstract will understand
why we are particularly interested in defining these concepts.

\subsubsection{Definition of (M,$\mathcal{A}$)}

\theoremstyle{definition}
\newtheorem{defn}{Definition}[section]
\begin{defn}
 By piecewise smooth function we mean a function from an open set
$\Omega_{1}(\subseteq\mathbb{R}^{4})\to\mathbb{R}^{m}$ such, that there exists
an open set $\Omega_{2}$ (in the usual ``open ball'' topology on
$\mathbb{R}^{4}$) on which this function is smooth and
$\Omega_{2}=\Omega_{1}\setminus \Omega'$ (where $\Omega'$ has a Lebesgue measure 0).
\end{defn}

Take $M$ as a 4D paracompact\footnote{We will use specifically
4-dimensional manifolds, but one can immediately generalize all the following
constructions for $n$-dimensional manifolds.}, Hausdorff locally
Euclidean space, on which there exists a smooth atlas
$\mathcal{\mathcal{\mathcal{\mathcal{\mathcal{S}}}}}$. Hence
$(M,\mathcal{\mathcal{\mathcal{\mathcal{S}}}})$ is a smooth
manifold. Now take a ordered couple $(M,\mathcal{A})$, where
$\mathcal{A}$ is the maximal atlas, where all the maps are
connected by piecewise smooth transformations such that:
\begin{itemize}
\item
the transformations and their inverses have on every compact subset of $\mathbb{R}^{4}$ all the first derivatives (on the domains
where they exist) bounded\\ (hence Jacobians, inverse
Jacobians are on every compact set bounded),
\item
it contains at least one maximal smooth subatlas $\mathcal{\mathcal{\mathcal{\mathcal{S}}}}\subseteq\mathcal{A}$,\\ (coordinate transformations between maps are smooth
there).
\end{itemize}
\medskip
\paragraph{Notation.} The following notation will be used:
\begin{itemize}
\item
By the letter $\mathcal{\mathcal{\mathcal{\mathcal{\mathcal{S}}}}}$ we will always mean some maximal smooth subatlas of $\mathcal{A}$.
\item
Every subset of $M$ on
which there exists a chart from our atlas $\mathcal{A}$, we call $\Omega_{Ch}$.
An arbitrary chart on $\Omega_{Ch}$ from our atlas $\mathcal{A}$ is denoted
$Ch(\Omega_{Ch})$.
\end{itemize}

\paragraph{Notation.}
 Take some set $\Omega_{Ch}$. Take some open subset of that set
$\Omega'\subset\Omega_{Ch}$. Then $ Ch(\Omega_{Ch})_{|\Omega'}$  is defined simply as
$Ch(\Omega')$, which is obtained from $Ch(\Omega_{Ch})$ by limiting the domain to $\Omega'$.

\subsubsection{Definition of ``continuous to the maximal possible degree''}
\theoremstyle{definition}
\newtheorem{defn3.001}[defn]{Definition}
\begin{defn3.001}
We call a function on $M$ continuous to the maximal possible degree, if on arbitrary $\Omega$ of Lebesgue measure 0 it is only in such cases:\footnote{The expression
``Lebesgue measure 0 set'' will have in this paper extended meaning. It refers to such subsets of a general manifold $M$ that they have in arbitrary chart Lebesgue measure 0.}

\begin{itemize}
\item[a)]
either undefined,
\item[b)]
or defined and discontinuous,
\end{itemize}
 in which there does not exist a way of turning it into a function continuous on $\Omega$ by
\begin{itemize}
\item
in the case of $a)$ extending its domain by the $\Omega$ set,
\item
in the case of $b)$ re-defining it on $\Omega$.
\end{itemize}
\end{defn3.001}

\subsubsection{Jacobians and algebraic operations with Jacobians} Now it is
obvious that since transformations between maps do not have to be
everywhere once differentiable, the Jacobian and inverse Jacobian may always be undefined on a set having Lebesgue measure 0. Now if we understand product in
the sense of a limit, then the relation
\begin{equation}
J^{\mu}_{\alpha}~(J^{-1})^{\alpha}_{\nu}=\delta^{\mu}_{\nu},\label{maxcon}
\end{equation}
for example, might hold even at the points where both Jacobian and inverse
Jacobian are undefined. This generally means the following: any algebraic operation with tensor fields is understood in such way, that in every chart it gives sets of functions continuous to a maximal possible degree. From this follows that the matrix product (\ref{maxcon})
must be, for $\mu=\nu$, equal to 1 and, for $\mu\neq\nu$, ~0.

\subsubsection{Tensor fields on $M$} We understand the tensor field on
$M$ to be an object which is:
\begin{itemize}
\item
Defined relative to the 1-differentiable subatlas of $\mathcal{A}$ everywhere except for a set
having Lebesgue measure 0 (this set is a function of the given 1-differentiable subatlas).
\item
In every chart from $\mathcal{A}$ it is given by
functions continuous to a maximal possible degree.
\item
It transforms $\forall\Omega_{Ch}$ between charts
$Ch_{1}(\Omega_{Ch}), ~Ch_{2}(\Omega_{Ch})\in\mathcal{A}$ in the tensorial way

\begin{equation}
T^{\mu...}_{\nu...~Ch_{2}}=J^{\mu}_{\alpha}~(J^{-1})^{\beta}_{\nu}\cdots
~T^{\alpha...}_{\beta...~Ch_{1}}~~~~~~\hbox{a.e.},\footnote{This expression means ``almost everywhere'', that is, ``everywhere apart from a set having Lebesgue measure 0''.}
\end{equation}

 where $J^{\mu}_{\alpha}$ is the Jacobian of the coordinate transformation from $Ch_{1}(\Omega_{Ch})$ to $Ch_{2}(\Omega_{Ch})$, and $T^{\mu...}_{\nu...~Ch_{1}}, T^{\mu...}_{\nu...~Ch_{2}}$ are tensor field components in charts $Ch_{1}$, $Ch_{2}$.
As we already mentioned: If $T^{\mu...}_{\nu...~Ch_{1}}$ is at some
given point undefined, so in some chart $Ch_{1}(\Omega_{Ch})$ the tensor components
do not have a defined limit,
this limit can still exist in $Ch_{2}(\Omega_{Ch})$, since Jacobians
and inverse Jacobians of the transformation from
$Ch_{1}(\Omega_{Ch})$ to $Ch_{2}(\Omega_{Ch})$ might be undefined at
that point as well. This limit then defines
$T^{\mu...}_{\nu...~Ch_{2}}$ at the given point.
\end{itemize}

\subsubsection{Important classes of test objects}

\paragraph{Notation.} The following notation will be used:
\begin{itemize}
\item
We denote by $C^{P}(M)$ the class of 4-form fields on $M$, such that they are
compactly supported and their support lies within some $\Omega_{Ch}$. For such 4-form fields we will generally use the symbol ~$\omega$.
\item
For the scalar density related to $\omega\in C^{P}(M)$ we use always the symbol ~$\omega'$.
\item
By $C^{P}(\Omega_{Ch})$ we mean a subclass of $C^{P}(M)$, given by 4-form fields having support inside $\Omega_{Ch}$. Note that only the
$C^{P}(\Omega_{Ch})$ subclasses form linear spaces.
\item
Take such maximal atlas $\mathcal{\mathcal{\tilde S}}$, ~($\exists
\mathcal{\mathcal{\mathcal{\mathcal{S}}}}\subset
\mathcal{\mathcal{\tilde S}}\subset\mathcal{A}$),~ that there exist
4-forms from $C^{P}(M)$, such that they are given in this atlas by
everywhere smooth scalar density ~$\omega'$. (``Maximal'' here means
that these 4-forms have in every chart outside this atlas non-smooth
scalar densities.) By $C^{P}_{S (\mathcal{\mathcal{\tilde S}})}(M)$
we mean a class of all such elements from $C^{P}(M)$, that they have
everywhere smooth scalar density in ~$\mathcal{\tilde S}$.
\item
The letter $\mathcal{\mathcal{\tilde S}}$ will from now on be reserved for maximal atlases defining $C^{P}_{S (\mathcal{\mathcal{\tilde S}})}(M)$ classes.
\item
$C^{P}_{S}(M)$ is defined as: $C^{P}_{S}(M)\equiv \cup_{\mathcal{\mathcal{\tilde S}}}C^{P}_{S (\mathcal{\mathcal{\tilde S}})}(M)$.
\item
$C^{P}_{S (\mathcal{\mathcal{\tilde S}})}(\Omega_{Ch})$ ~(or $C^{P}_{S}(\Omega_{Ch})$)
means $C^{P}_{S (\mathcal{\mathcal{\tilde S}})}(M)$  ~(or $C^{P}_{S}(M)$) element having
support inside the given $\Omega_{Ch}$.
\end{itemize}

\subsubsection{Topology on $C^{P}_{S (\mathcal{\mathcal{\tilde S}})}(\Omega_{Ch})$}
Consider the following topology on each $C^{P}_{S (\mathcal{\tilde
S)}}(\Omega_{Ch})$: A sequence from $\omega_{n}\in C^{P}_{S
(\mathcal{\mathcal{\tilde S}})}(\Omega_{Ch})$ converges to an
element ~$\omega$~ from that set if all the supports of
~$\omega_{n}$~ lie in a single compact set, and in any chart
$Ch(\Omega_{Ch})\in \mathcal{\mathcal{\tilde S}}$, for arbitrary
$k$ it is true that $\frac{\partial^{k}\omega'_{n}(x^{i})}{\partial
x^{l_{1}}..\partial x^{l_{k}}}$ converges uniformly to
$\frac{\partial^{k}\omega'(x^{i})}{\partial x^{l_{1}}..\partial
x^{l_{k}}}$.

\subsection{Scalars}

This section deals with the definition of scalar distributions as the easiest
particular example of a generalized tensor field. The explanatory reasons are the main ones
why we deal with scalars separately, instead of taking more
``logical'', straightforward way to tensor fields of arbitrary rank.

\subsubsection{Definition of $D'(M)$, and hence of \emph{linear generalized scalar fields}}

\theoremstyle{definition}
\newtheorem{defn4.99}{Definition}[section]
\begin{defn4.99}
We say that $B$, being a function that maps some subclass of $C^{P}(M)$ to $\mathbb{R}^{n}$, is linear, if the following holds: Take such $\omega_{1}$ and $\omega_{2}$ from the domain of $B$, that they belong to the same class $C^{P}(\Omega_{Ch})$.
Whenever the domain of $B$ contains also their linear combination
$\lambda_{1}\omega_{1}+\lambda_{2}\omega_{2}$, where ~$\lambda_{1},\lambda_{2}\in\mathbb{R}$,~ then ~$B(\lambda_{1}\omega_{1}+\lambda_{2}\omega_{2})=\lambda_{1}B(\omega_{1})+\lambda_{2}B(\omega_{2})$.\label{linear}
\end{defn4.99}

\theoremstyle{definition}
\newtheorem{defn4.999}[defn4.99]{Definition}

\begin{defn4.999}
 Now take the space of linear maps $F\to\mathbb{R}$, where $F$ is such set that $\exists\mathcal{\mathcal{\tilde S}}$, such that
$C^{P}_{S (\mathcal{\mathcal{\tilde S}})}(M)\subseteq F\subseteq
C^{P}(M)$. These linear maps are also required to be for every
$\Omega_{Ch}$ on $C^{P}_{S \mathcal{(\tilde S)}}(\Omega_{Ch})$
continuous, (relative to the topology taken in section
\ref{basicconcepts}). Call set of such maps $D'(M)$, or in words,
the set of \emph{linear generalized scalar fields}.
\end{defn4.999}


\subsubsection{Important subclasses of $D'(M)$}

\paragraph{Notation.} The following notation will be used:
\begin{itemize}
\item
Now take a subset of $D'(M)$ given by regular distributions
defined as integrals of piecewise continuous
functions (everywhere on $C^{P}(M)$, where it converges). We denote it by $D'_{E}(M)$.\footnote{Actually, it holds that if and only if the function is integrable in every chart on every compact set in $\mathbb{R}^n$, then this function defines a regular $D'(M)$ distribution and is defined at least on the whole $C^{P}_{S}(M)$ class.}
\item
Take such subset of $D'_{E}(M)$ that there  $\exists \mathcal{\mathcal{\mathcal{\mathcal{S}}}}$
in which the function under the integral is smooth. Call this class
$D'_{S}(M)$.
\item
 Now~ take ~subsets ~of ~$D'(M)$~ such ~that~ they~ have
some ~common ~set\\
$\cup_{n}C^{P}_{S (\mathcal{\mathcal{\tilde S}}_{n})}(M)$
belonging to their domains and are ~~ $\forall\Omega_{Ch}$,
~$\forall n$~ continuous on $C^{P}_{S (\mathcal{\mathcal{\tilde
S}}_{n})}(\Omega_{Ch})$. Denote such subsets by
$D'_{(\cup_{n}\mathcal{\tilde S}_{n})}(M)$. Obviously ~$T\in
D'_{(\cup_{n}\mathcal{\mathcal{\tilde S}}_{n})}(M)$ means
~$T\in\cap_{n}D'_{(\mathcal{\mathcal{\tilde S}}_{n})}(M)$.
\item
By $D'_{(\cup_{n}\mathcal{\mathcal{\tilde S}}_{n} o)}(M)$ we mean
objects such that they belong to $D'_{(\cup_{n}\mathcal{\mathcal{\tilde S}}_{n})}(M)$ and their full domain is given as ~$\cup_{n}C^{P}_{S (\mathcal{\mathcal{\tilde S}}_{n})}(M)$.
\item
If we use
the notation $D'_{E (\cup_{n}\mathcal{\mathcal{\tilde S}}_{n} o)}(M)$, we mean objects defined by integrals of piecewise continuous functions, with their domain being the class $\cup_{n}C^{P}_{S (\mathcal{\mathcal{\tilde S}}_{n})}(M)$.
\item
By using $D'_{S(\cup_{n}\mathcal{\mathcal{\tilde S}}_{n} o)}(M)$ we automatically mean subclass of $D'_{E(\cup_{n}\mathcal{\mathcal{\tilde S}}_{n} o)}(M)$, such that it is given by
an integral of a smooth function in some smooth subatlas
$\mathcal{\mathcal{\mathcal{\mathcal{S}}}}\subseteq\cup_{n}\mathcal{\mathcal{\tilde S}}_{n} $.
\end{itemize}

\subsubsection{$D'_{A}(M)$, hence \emph{generalized scalar fields}}

\paragraph{Notation.}
 Let us for any arbitrary set $D'_{(\mathcal{\mathcal{\tilde S}})}(M)$\footnote{The union is here trivial, it just means one element $\mathcal{\mathcal{\tilde S}}$.} construct, by the use of pointwise multiplication of its elements,
an algebra. Another way to describe the algebra is that it is a set of multivariable arbitrary degree polynomials, where different variables represent different elements of $D'_{(\mathcal{\mathcal{\tilde S}})}(M)$. Call it $D'_{(\mathcal{\mathcal{\tilde S}}) A}(M)$.

By pointwise multiplication of linear generalized scalar fields ~~~$B_{1},~~B_{2}~~\in \\ D'_{(\mathcal{\mathcal{\tilde S}})}(M)$~ we mean a mapping from $\omega$ into product of the images (real numbers) of the $B_{1},B_{2}$ mappings:
~$(B_{1}\cdot B_{2})(\omega)\equiv B_{1}(\omega)\cdot B_{2}(\omega)$.~ The domain on which the
product (and the linear combination as well) is defined is an
intersection of domains of $B_{1}$ and $B_{2}$ (trivially always nonempty,
containing $C^{P}_{S (\mathcal{\mathcal{\tilde S}})}(M)$ at least).
Note also that the resulting arbitrary element of $D'_{(\mathcal{\mathcal{\tilde S}}) A}(M)$ has in general all the properties defining
$D'(M)$ objects, except that of being necessarily linear.
\theoremstyle{definition}
\newtheorem{defn5}[defn4.99]{Definition}
\begin{defn5}
 The set of objects obtained by the union~ $\cup_{\mathcal{\mathcal{\tilde S}}}D'_{(\mathcal{\mathcal{\tilde S}}) A}(M)$~ we denote $D'_{A}(M)$, and call
them \emph{generalized scalar fields} (GSF).
\end{defn5}

\subsubsection{Topology on $D'_{(\mathcal{\mathcal{\tilde S}} o)}(M)$}

If we take objects from $D'_{(\mathcal{\mathcal{\tilde S}} o)}(M)$ and we take a weak
($\sigma -$) topology on that
set, we know that any object is a limit of some sequence from $D'_{S
(\mathcal{\mathcal{\tilde S}} o)}(M)$ objects from that set (they form a dense subset of
that set). That is known from the classical theory. Such a
space is complete.

\subsection{Generalized tensor fields}\label{GTFsection}

\emph{This section is of crucial importance.} It provides us with
definitions of all the basic objects we are interested in, the
generalized tensor fields and all their subclasses of special
importance as well.

\subsubsection{The class $D'^{m}_{n}(M)$ of \emph{linear generalized tensor fields}}

First let us clearly state how to interpret the $J^{\mu}_{\nu}$ Jacobian in all the following definitions.~ It is a matrix of
piecewise smooth functions  $\Omega_{1}\setminus \Omega_{2}\to\mathbb{R}$,~ $\Omega_{1}$ being
a open subset of $\mathbb{R}^{4}$ and $\Omega_{2}$ having Lebesgue measure 0. Let it represent transformations from $Ch_{1}(\Omega_{Ch})$ to $Ch_{2}(\Omega_{Ch})$.
We can map the Jacobian by the inverse of the $Ch_{1}(\Omega_{Ch})$ coordinate mapping
to $\Omega_{Ch}$ and it will become a matrix of functions
$\Omega_{Ch}\setminus \Omega'\to\mathbb{R}$, ~~$\Omega'$ having Lebesgue measure 0.
~The object $J^{\mu}_{\nu}\cdot\omega$~ is then understood as a matrix of 4-forms from $C^{P}(\Omega_{Ch})$, which also means that outside $\Omega_{Ch}$ we trivially define them to be 0.

\theoremstyle{definition}
\newtheorem{defn6}{Definition}[section]
\begin{defn6}
Take some set $F$, ~$F_{1}\subseteq F\subseteq F_{2}$, where $F_{1}$
us such set that ~~$\exists\mathcal{\mathcal{\tilde S}}$~ and
~~$\exists \mathcal{\mathcal{\mathcal{\mathcal{S}}}}\subseteq
\mathcal{\mathcal{\tilde S}}$~~
($\mathcal{\mathcal{\mathcal{\mathcal{\mathcal{S}}}}}$ is some
maximal smooth atlas) defining it as:
\begin{equation*}
F_{1}\equiv \cup_{\Omega_{Ch}}\left\{
\big(\omega,Ch(\Omega_{Ch})\big):~\omega\in C^{P}_{S
(\mathcal{\tilde S})}(\Omega_{Ch}),~
Ch(\Omega_{Ch})\in\mathcal{\mathcal{\mathcal{\mathcal{\mathcal{S}}}}}\right\}.
\end{equation*}
$F_{2}$ is defined as:
\begin{equation*}
F_{2}\equiv \cup_{\Omega_{Ch}}\left\{
\big(\omega,Ch(\Omega_{Ch})\big):~\omega\in C^{P}(\Omega_{Ch}),~
Ch(\Omega_{Ch})\in\mathcal{A})\right\}.
\end{equation*}
By a ~$D'^{m}_{n}(M)$~ object, the \emph{linear generalized tensor
field}, we mean a linear mapping from ~$F\to\mathbb{R}^{4^{m+n}}$
for which the following holds: \footnote{We could also choose for
our basic objects maps taking ordered couples from
~$C^{P}(\Omega_{Ch})\times Ch(\Omega'_{Ch})$,
~($\Omega_{Ch}\neq\Omega'_{Ch}$). The linearity condition then
automatically determines their values, since for $\omega\in
C^{P}(\Omega_{Ch})$, whenever it holds that ~$\Omega'_{Ch}\cap
\hbox{supp}(\omega)=\{0\}$, they must automatically give ~0~ for any
chart argument. Hence these two definitions are trivially connected
and choice between them is just purely formal (only a matter of
``taste'').}

\begin{itemize}
\item
$\forall\Omega_{Ch}$~ it is ~$\forall Ch_{k}(\Omega_{Ch})\in
\mathcal{\mathcal{\mathcal{\mathcal{S}}}}\subset\mathcal{\mathcal{\tilde
S}}$~ continuous on the class $C^{P}_{S (\mathcal{\tilde
S})}(\Omega_{Ch})$. (Both $\mathcal{S}$, $\mathcal{\tilde S}$ are from the definition of $F_{1}$.)
\item
This map also ~~$\forall\Omega_{Ch}$ ~~transforms between two charts
from its domain,\\~ $Ch_{1}(\Omega_{Ch})$, ~$Ch_{2}(\Omega_{Ch})$, as:

\begin{equation*}
 T^{\mu...\gamma}_{\nu...\delta}(Ch_{1},\omega)=T^{\alpha...\lambda}_{\beta...\rho}\left(Ch_{2},J^{\mu}_{\alpha}...J^{\gamma}_{\lambda}(J^{-1})^{\beta}_{\nu}....(J^{-1})^{\rho}_{\delta}\cdot\omega\right)~.
\end{equation*}

\item
The following consistency condition holds: If
~$\Omega'_{Ch}\subset\Omega_{Ch}$, then
~$T^{\mu...\alpha}_{\nu...\delta}$~  gives on ~$\omega~\times
~Ch(\Omega_{Ch})_{|\Omega'_{Ch}}$, ~$\omega\in
C^{P}(\Omega'_{Ch})$ the same results\footnote{By the ``same
results'' we mean that they are defined on the same domains, and by
the same values.} as on ~$\omega~\times ~Ch(\Omega_{Ch})$.
\end{itemize}

\end{defn6}

We can formally extend this notation
also for the case $m=n=0$. This means scalars, exactly as
defined before. So from now on $m$, $n$ take also the value 0, which means the theory in the following sections holds also for the scalar objects.

\subsubsection{Important subclasses of $D'^{m}_{n}(M)$}

\paragraph{Notation.} The following notation will be used:
\begin{itemize}
\item
By a complete analogy to scalars we define classes
~$D'^{m}_{n E}(M)$:
On arbitrary $\Omega_{Ch}$, being fixed in
arbitrary chart ~$Ch_{1}(\Omega_{Ch})\in\mathcal{A}$ we can express it in
another arbitrary chart ~$Ch_{2}(\Omega_{Ch})\in\mathcal{A}$, as an integral from a
multi-index matrix of piecewise continuous
functions on such subset of ~$C^{P}(M)$, on which the integral is
convergent\footnote{Actually we will use the
expression ``multi-index matrix'' also later in the text and it just means
specifically ordered set of functions.}.
\item
Analogously the class $D'^{m}_{n S}(M)\subset D'^{m}_{n E}(M)$ ~is defined by objects which can, for some maximal
smooth atlas $\mathcal{\mathcal{\mathcal{\mathcal{\mathcal{S}}}}}$, in arbitrary charts \footnote{The first chart is
an argument of this generalized tensor field and the second chart is
the one in which we express the given integral.}
~$Ch_{1}(\Omega_{Ch})$, ~$Ch_{2}(\Omega_{Ch})\in\mathcal{\mathcal{\mathcal{\mathcal{\mathcal{S}}}}}$, ~be expressed
by an integral from a multi-index matrix of smooth functions.
\item
$D'^{m}_{n (\cup_{l}At(\mathcal{\mathcal{\tilde S}}_{l}))}(M)$~ means a class of objects being for every $\Omega_{Ch}$ in every $Ch(\Omega_{Ch})\in At(\mathcal{\mathcal{\tilde S}}_{l})$ continuous on $C^{P}_{S(\mathcal{\mathcal{\tilde S}}_{l})}(\Omega_{Ch})$. (Here $At(\mathcal{\tilde S})$ stands for a map from atlases $\mathcal{\tilde S}$ to some subatlases of $\mathcal{A}$.)
\item
$D'^{m}_{n (\cup_{l}At(\mathcal{\mathcal{\tilde S}}_{l})o)}(M)$ means a class of objects from $D'^{m}_{n (\cup_{l}At(\mathcal{\mathcal{\tilde S}}_{l}))}(M)$ having as their domain the union
\[\cup_{\Omega_{Ch}}\cup_{l}\left\{ (\omega,Ch(\Omega_{Ch})):~\omega\in C^{P}_{S (\mathcal{\mathcal{\tilde S}}_{l})}(\Omega_{Ch}),~
Ch(\Omega_{Ch})\in At(\mathcal{\mathcal{\tilde S}}_{l})\right\}.\]
\item
If we have classes $D'^{m}_{n (\cup_{l}At(\mathcal{\mathcal{\tilde S}}_{l}))}(M)$ and $D'^{m}_{n (\cup_{l}At(\mathcal{\mathcal{\tilde S}}_{l})o)}(M)$ where $At(\mathcal{\mathcal{\tilde S}}_{l})=\mathcal{\mathcal{\mathcal{\mathcal{S}}}}_{l}\subset\mathcal{\mathcal{\tilde S}}_{l}$, we use the simple notation ~$D'^{m}_{n (\cup_{l} \mathcal{\mathcal{\mathcal{\mathcal{S}}}}_{l})}(M)$, $D'^{m}_{n (\cup_{l} \mathcal{\mathcal{\mathcal{\mathcal{S}}}}_{l}o)}(M)$.\footnote{ We have to realize that the subatlas $\mathcal{\mathcal{\mathcal{\mathcal{S}}}}_{n}$ specifies
completely the atlas $\mathcal{\mathcal{\tilde S}}_{n}$, since taking forms smooth in $\mathcal{\mathcal{\mathcal{\mathcal{S}}}}_{n}$ determines automatically the whole
set of charts in which they are still smooth. This fact contributes to the simplicity of this notation.}
\end{itemize}

\subsubsection{Definition of $D'^{m}_{n A}(M)$, hence \emph{generalized tensor
fields}}

\theoremstyle{definition}
\newtheorem{defn8}[defn6]{Definition}
\begin{defn8}
 Now define~$D'^{m}_{n (\mathcal{\mathcal{\mathcal{\mathcal{S}}}}) A}(M)$ to be the algebra constructed from the objects ~$D'^{m}_{n (\mathcal{\mathcal{\mathcal{\mathcal{S}}}})}(M)$ by the tensor product,
exactly in analogy to the case of scalars (this reduces for scalars to the
product already defined). The
object, being a result of the tensor product, is again a mapping ~$V\to\mathbb{R}^{4^{m+n}}$, defined in every chart by componentwise multiplication.
Now denote by $D'^{m}_{n A}(M)$ a set given as $\cup_{\mathcal{\mathcal{\mathcal{\mathcal{S}}}}} D'^{m}_{n (\mathcal{\mathcal{\mathcal{\mathcal{S}}}}) A}(M)$, meaning a union of all possible $D'^{m}_{n (\mathcal{\mathcal{\mathcal{\mathcal{S}}}}) A}(M)$. Call the objects belonging to this set \emph{the generalized tensor fields (GTF)}.
\end{defn8}

\paragraph{Notation.}
Furthermore let us use the same procedure as in the previous
definition, just instead of constructing the algebras from
the classes $D'^{m}_{n (\mathcal{\mathcal{\mathcal{\mathcal{S}}}})}(M)$, we now construct them only from the classes ~~$D'^{m}_{n
(\cup_{l}At(\mathcal{\mathcal{\tilde S}}_{l}))}(M)\cap D'^{m}_{n (\mathcal{\mathcal{\mathcal{\mathcal{S}}}})}(M)$,
~~(again by tensor product). For the union of such algebras we use the notation
~$D'^{m}_{n (\cup_{l}At(\mathcal{\mathcal{\tilde S}}_{l})) A}(M)$.

\subsubsection{Definition of $\Gamma-$objects, their classes and algebras}

\paragraph{Notation.}
Now let us define the generalized space of objects $\Gamma^{l}(M)$. (For example the Christoffel symbol would fall into this class.) These objects are defined exactly in the same way as $D'^{m}_{n}(M)$~ ($m+n=l$)~ objects, we just do not require that they transform between charts in the tensorial way, (second point in the definition of generalized tensor fields).

Note the following:
\begin{itemize}
\item
The definition of $\Gamma^{l}(M)$ includes also the case
$m=0$. Now we see, that the scalars can be
taken as subclass of $\Gamma^{0}(M)$, given by objects that are constants with respect to the chart argument.
\item
Note also that for a general $\Gamma^{m}(M)$ object
there is no meaningful differentiation between ``upper'' and
``lower'' indices, but we will still use formally the $T^{\mu...}_{\nu...}$ notation (for all cases).
\end{itemize}

\paragraph{Notation.}
In the same way, (by just not putting requirements on the transformation properties), we can generalize the classes
\begin{itemize}
\item
 $D'^{m}_{n (\cup_{l}At(\mathcal{\mathcal{\tilde S}}_{l}))}(M)$ ~to~ $\Gamma^{m+n}_{(\cup_{l}At(\mathcal{\mathcal{\tilde S}}_{l}))}(M)$,
\item
~$D'^{m}_{n (\cup_{l}At(\mathcal{\mathcal{\tilde S}}_{l})o)}(M)$ ~to~ $\Gamma^{m+n}_{(\cup_{l}At(\mathcal{\mathcal{\tilde S}}_{l})o)}(M)$,
\item
 ~$D'^{m}_{n (\cup_{l}\mathcal{\mathcal{\mathcal{\mathcal{S}}}}_{l})}(M)$ ~to~ $\Gamma^{m+n}_{(\cup_{l}\mathcal{\mathcal{\mathcal{\mathcal{S}}}}_{l})}(M)$,
\item
 ~$D'^{m}_{n (\cup_{l} \mathcal{\mathcal{\mathcal{\mathcal{S}}}}_{l}o)}(M)$ ~to~  $\Gamma^{m+n}_{(\cup_{l}\mathcal{\mathcal{\mathcal{\mathcal{S}}}}_{l}o)}(M)$,
\item
~$D'^{m}_{n E}(M)$ ~to~ $\Gamma^{m+n}_{E}(M)$,
\item
~$D'^{m}_{n (\cup_{l}At(\mathcal{\mathcal{\tilde S}}_{l})) A}(M)$ ~to~ $\Gamma^{m+n}_{(\cup_{l}At(\mathcal{\mathcal{\tilde S}}_{l})) A}(M)$, ~and
\item
$D'^{m}_{n A}(M)$~ to ~$\Gamma^{m+n}_{A}(M)$.
\end{itemize}
\medskip

It is obvious that all the latter classes contain all the former classes as their subclasses, (this is a result of what we called a ``generalization'').

Note that when we fix $\Gamma^{m}_{E}(M)$ objects in arbitrary chart from $\mathcal{A}$, they must be expressed by integrals from multi-index matrix of functions integrable on every compact set. In the case of the $D'^{m}_{n E}(M)$ subclass it can be required in only one chart, since the transformation properties together with boundedness of Jacobians and inverse Jacobians, provide that it must hold in any other chart from $\mathcal{A}$.
The specific subclass of $\Gamma^{m}_{E}(M)$
is ~$\Gamma^{m}_{S}(M)$, which is a subclass of distributions given in any chart from $\mathcal{A}$, (being an argument
of the given $\Gamma-$ object), by integrals from multi-index matrix of smooth
functions (when we express the integrals in the same chart, as the one taken as
the argument). ~$\Gamma^{m}_{S (\cup_{n}\mathcal{\mathcal{\mathcal{\mathcal{S}}}}_{n} o)}(M)$ stands again for ~$\Gamma^{m}_{S}(M)$~ objects with domain limited to

\[\cup_{\Omega_{Ch}}\cup_{n}\left\{ (\omega,Ch(\Omega_{Ch})):~\omega\in C^{P}_{S (\mathcal{\mathcal{\tilde S}}_{n})}(\Omega_{Ch}),~
Ch(\Omega_{Ch})\in\mathcal{\mathcal{\mathcal{\mathcal{\mathcal{S}}}}}_{n}\right\},\]
 where $\mathcal{\mathcal{\tilde S}}_{n}$ is given by the condition $\mathcal{\mathcal{\mathcal{\mathcal{S}}}}_{n}\subset\mathcal{\mathcal{\tilde S}}_{n}$.

\paragraph{Notation.}
Take some arbitrary elements
~$T^{\mu...}_{\nu...}\in\Gamma^{m}_{E}(M)$, ~$\omega\in
C^{P}(\Omega_{Ch})$,~ and ~$Ch_{k}(\Omega_{Ch})\in\mathcal{A}$.
~The $T^{\mu...}_{\nu...}(Ch_{k},\omega)$ can be always expressed as
$\int_{\Omega_{Ch}}T^{\mu...}_{\nu...}(Ch_{k})\cdot\omega$. ~Here
$T^{\mu...}_{\nu...}(Ch_{k})$ appearing under the integral denotes
some multi-index matrix of functions continuous to a maximal possible degree
on $\Omega_{Ch}$. For $T^{\mu...}_{\nu...}\in D'^{m}_{n E}(M)$ the
$T^{\mu...}_{\nu...}(Ch_{k})$ multi-index matrix components can be obtained from a tensor
field by:
\begin{itemize}
\item
expressing the tensor field components in $Ch_{k}(\Omega_{Ch})$ on some subset of $\mathbb{R}^{4}$,
\item
mapping the tensor field components to $\Omega_{Ch}$ by the inverse of $Ch_{k}(\Omega_{Ch})$.
\end{itemize}
Furthermore $\omega'(Ch_{k})$ will denote the 4-form scalar density in the
chart $Ch_{k}(\Omega_{Ch})$.

\subsubsection{Topology on $\Gamma^{m}_{(\cup_{n}\mathcal{\mathcal{\mathcal{\mathcal{S}}}}_{n}o)}(M)$}

If we take the class of ~$\Gamma^{m}_{(\cup_{n}
\mathcal{\mathcal{\mathcal{\mathcal{S}}}}_{n} o)}(M)$, and we impose
on this class the weak (point or $\sigma-$) topology, then the
subclass of ~$\Gamma^{m}_{(\cup_{n}
\mathcal{\mathcal{\mathcal{\mathcal{S}}}}_{n} o)}(M)$~ defined as
~$\Gamma^{m}_{S (\cup_{n}
\mathcal{\mathcal{\mathcal{\mathcal{S}}}}_{n} o)}(M)$ is dense in
~$\Gamma^{m}_{(\cup_{n}
\mathcal{\mathcal{\mathcal{\mathcal{S}}}}_{n} o)}(M)$. The same
holds for ~$D'^{m}_{n (\cup_{l}
\mathcal{\mathcal{\mathcal{\mathcal{S}}}}_{l} o)}(M)$ and $D'^{m}_{n
S (\cup_{l}\mathcal{\mathcal{\mathcal{\mathcal{S}}}}_{l} o)}(M)$.

\subsubsection{Definition of contraction}

\theoremstyle{definition}
\newtheorem{defn85}[defn6]{Definition}
\begin{defn85}
We define the contraction of a ~$\Gamma^{m}_{A}(M)$ object in the expected way: It is a map that transforms the object ~$T^{...\mu...}_{...\nu...}\in\Gamma^{m}_{A}(M)$~ to~ the object $T^{...\mu...}_{...\mu...}\in\Gamma^{m-2}_{A}(M)$.
\end{defn85}

Now contraction is a mapping ~$D'^{m}_{n}(M)\to
D'^{m-1}_{n-1}(M)$ and ~$\Gamma^{m}(M)\to\Gamma^{m-2}(M)$, but it is
not in general the mapping~ $D'^{m}_{n A}(M)\to D'^{m-1}_{n-1
A}(M)$, only~  $D'^{m}_{n A}(M)\to\Gamma^{m+n-2}_{A}(M)$.

\subsubsection{Interpretation of physical quantities}

The interpretation of
physical observables as ``amounts'' of quantities on the open sets
is dependent on our notion of volume. So how shall we get the notion
of volume in the context of our language? First, by volume we mean a
volume of an open set. But we will consider only open sets belonging
to some $\Omega_{Ch}$. So take some ~$\Omega_{Ch}$~ and some
arbitrary ~$\Omega'\subset\Omega_{Ch}$.~ Let us now assume that we
have a metric tensor from ~$D'^{m}_{n E}(M)$. This induces a
(volume) 4-form. Multiply this 4-form by a noncontinuous function
$\chi_{\Omega'}$ defined to be 1 inside ~$\Omega'$~ and everywhere
else 0. Call it $\omega_{\Omega'}$. Then by volume of an open set
$\Omega'$ we understand: ~$\int\omega_{\Omega'}$. Also
$\omega_{\Omega'}$ is object from ~$C^{P}(M)$ (particularly from
$C^{P}(\Omega_{Ch})$). The ``amounts'' of physical quantities on
$\Omega'$ we obtain, when the ~$D'^{m}_{n A}(M)$ objects act on
$\omega_{\Omega'}$.

\subsection{The relation of equivalence ($\approx$)}

This section now provides us with the fundamental concept of the theory,
the concept of equivalence of generalized tensor fields. Most of the first
part is devoted to fundamental definitions, the beginning of the second part
deals with the basic, important theorems, which just
generalize some of the basic Colombeau theory results to the
tensor product of generalized tensor fields. It adds several
important conjectures as well. The first part ends with the subsection
``some additional definitions'' and the second part with the subsection ``some additional theory''. They both deal with much
less central theoretical results, but they serve very well to put
light on what equivalence of generalized tensor fields means ``physically''.

\subsubsection{The necessary concepts to define the equivalence relation}

\paragraph{Notation.}
Take some subatlas of our atlas, this will be a
maximal subatlas of charts, which are maps to the whole of
$\mathbb{R}^{4}$. Such maps exist on each set $\Omega_{Ch}$ and they
will be denoted as $Ch'(\Omega_{Ch})$. We say that a chart $Ch'(\Omega_{Ch})$ is centered at the point $q\in\Omega_{Ch}$, if this point is mapped by this chart to 0 (in
$\mathbb{R}^{4}$). We will use the
notation $Ch'(q,\Omega_{Ch})$.

\paragraph{Notation.}
Take some $\Omega_{Ch}$,~$q\in\Omega_{Ch}$~ and
~$Ch'(q,\Omega_{Ch})\in\mathcal{\tilde S}$. The set of 4-forms
~$\omega_{\epsilon}\in A^{n}(\mathcal{\tilde
S},Ch'(q,\Omega_{Ch}))$ is defined in such way that
~$\omega_{\epsilon}\in C^{P}_{S (\mathcal{\mathcal{\tilde
S}})}(\Omega_{Ch})$ belongs to this class if:
\begin{itemize}
\item[a)] in the given $Ch'(q,\Omega'_{Ch})$, $\forall\epsilon$~ it holds that:

~~~~~$\int_{Ch'(\Omega'_{Ch})}(\prod_{i}x^{k_{i}}_{i})~\omega'_{\epsilon}(x)~d^{4}x=\delta_{k
0},~~ \sum_{i} k_{i}=k$,~~ $k\leq n$, ~~$n\in\mathbb{N}$,

\item[b)] the dependence on $\epsilon$ is in $Ch'(q,\Omega'_{Ch})$ given as $\epsilon^{-4}\omega'(\frac{x}{\epsilon})$.
\end{itemize}

\paragraph{Notation.}
Take an arbitrary ~$q$, ~~$\Omega_{Ch}$ ~($q\in\Omega_{Ch}$),
~~$Ch'(q,\Omega_{Ch})\in\mathcal{\mathcal{\tilde S}}$ ~and some natural number $n$. For
any ~$\omega_{\epsilon}\in A^{n}(Ch'(q,\Omega_{Ch}),\mathcal{\mathcal{\tilde S}})$ ~we can,
relatively to ~$Ch'(\Omega_{Ch})$, ~define a continuous set of maps
(depending on the parameter $y$)
\begin{equation}
A^{n}(Ch'(q,\Omega_{Ch}),\mathcal{\mathcal{\tilde S}})\to
C^{P}_{S (\mathcal{\mathcal{\tilde S}})}(\Omega_{Ch}),\label{basic}
\end{equation}
such that they are,  on ~$\Omega_{Ch}$ ~and in ~$Ch'(\Omega_{Ch})$, given as
~~$\omega'(\frac{x}{\epsilon}){\epsilon^{-4}}\to\omega'(\frac{y-x}{\epsilon})\epsilon^{-4}$.
~(To remind the reader $\omega'$ is the density expressing
 $\omega$ in this chart.) This gives us (depending on the parameter ~$y\in\mathbb{R}^{4}$) various $C^{P}_{S (\mathcal{\mathcal{\tilde S}})}(\Omega_{Ch})$ objects, such that they are in the fixed $Ch'(\Omega_{Ch})$ expressed by ~$\omega'(\frac{x-y}{\epsilon})~\epsilon^{-4}dx^{1}\bigwedge ...\bigwedge dx^{4}$. Denote these 4-form fields by $\tilde\omega_{\epsilon}(y)$.

\paragraph{Notation.}
Now, take any ~$T^{\mu...}_{\nu...}\in\Gamma^{m}_{ (At(\mathcal{\mathcal{\tilde S}}))A}(M)$. By applying it in an arbitrary fixed chart  ~~$Ch_{k}(\Omega_{Ch})\in At(\mathcal{\mathcal{\tilde S}},\Omega_{Ch})$~~
 on the 4-form field ~~$\tilde\omega_{\epsilon}(y)$,~~ obtained from ~~$\omega_{\epsilon}\in$ $A^{n}(Ch'(q,\Omega_{Ch}),\mathcal{\tilde
 S})$~ through the map (\ref{basic}), we get a function ~$\mathbb{R}^{4}\to\mathbb{R}^{4^{m+n}}$. As a consequence, the resulting function depends on the following objects:
~$T^{\mu...}_{\nu...}\left(\in\Gamma^{m}_{A}(M)\right),~\omega_{\epsilon}\left(\in
A^{n}(q,\Omega_{Ch},Ch'(\Omega_{Ch}))\right)$ ~and
$~Ch_{k}(\Omega_{Ch})$.~ We denote it by:
~~~~~$F'^{\mu...}_{\nu...}\big(T^{\mu...}_{\nu...}, \mathcal{\mathcal{\tilde S}},
~\Omega_{Ch}, Ch'(q,\Omega_{Ch}),n,\tilde\omega_{\epsilon}(y),
Ch_{k}(\Omega_{Ch})\big)$.

\subsubsection{Definition of the equivalence relation}

\theoremstyle{definition}
\newtheorem{defn11}{Definition}[section]
\begin{defn11}
 ~~$B^{\mu...}_{\nu...},T^{\mu...}_{\nu...}\in \Gamma^{m}_{A}(M)$
~are called equivalent ~($B^{\mu...}_{\nu...}\approx
T^{\mu...}_{\nu...}$), if:
\begin{itemize}
\item
 they belong to the same classes
~$\Gamma^{m}_{(At(\mathcal{\mathcal{\tilde S}}))A}(M)$,
\item
~$\forall\Omega_{Ch},~~~\forall q~~(q\in\Omega_{Ch}),~~~\forall
Ch'(q,\Omega_{Ch})\in\mathcal{\mathcal{\tilde S}}$ ~~~~(~such ~that ~~$B^{\mu...}_{\nu...}$,
~$T^{\mu...}_{\nu...}~\in\\ ~~\Gamma^{m}_{ (At(\mathcal{\mathcal{\tilde S}}))A}(M)$ ),~
~~$\forall Ch(\Omega_{Ch})~\in ~At(\mathcal{\mathcal{\tilde S}},\Omega_{Ch})$~~ $\exists n$,~
~~~such ~~that ~~~ $\forall \omega_{\epsilon}~\in \\
A^{n}(Ch'(q,\Omega_{Ch}),\mathcal{\mathcal{\tilde S}})$ ~~and for any compactly supported,
smooth function\\ $\mathbb{R}^{4}\to\mathbb{R}$, ~~$\phi$,~~ it holds:
\begin{eqnarray}
\lim_{\epsilon\to 0}
~\int_{\mathbb{R}^{4}}\bigg\{F'^{\mu...}_{\nu...}\big(B^{\mu...}_{\nu...}, q,
\Omega'_{Ch}, Ch'(\Omega'_{Ch}),n,\tilde\omega_{\epsilon}(y),
Ch(\Omega_{Ch})\big) \nonumber~~~~~~~~~~~~~~~~~~~~~~~~~~~~~~~\\
- ~F'^{\mu...}_{\nu...}\big(T^{\mu...}_{\nu...}, q, \Omega'_{Ch},
Ch'(\Omega'_{Ch}),n,\tilde\omega_{\epsilon}(y), Ch(\Omega_{Ch})\big)\bigg\} \cdot~\phi(y)
~d^{4}y~=~0.~~~~~~~
\end{eqnarray}
\end{itemize}
\end{defn11}

Note that for
$B^{\mu...}_{\nu...},C^{\mu...}_{\nu...},D^{\mu...}_{\nu...},T^{\mu...}_{\nu...}$
having the same domains and being from the same
$\Gamma^{n}_{(At(\mathcal{\tilde S}))}(M)$ classes, it trivially
follows that: $T^{\mu...}_{\nu...}\approx B^{\mu...}_{\nu...}$,~
$C^{\mu...}_{\nu...}\approx D^{\mu...}_{\nu...}$~ implies
~$\lambda_{1}T^{\mu...}_{\nu...}+\lambda_{2}C^{\mu...}_{\nu...}\approx\lambda_{1}B^{\mu...}_{\nu...}+\lambda_{2}D^{\mu...}_{\nu...}$~
for ~$\lambda_{1}, \lambda_{2}\in\mathbb{R}$.

\paragraph{Notation.}
Now since we have defined an equivalence relation, it divides the objects ~$\Gamma^{m}_{A}(M)$ naturally into equivalence classes. The set of such equivalence classes will be denoted as ~$\tilde\Gamma^{m}_{A}(M)$.
Later we may also use sets of more limited classes of
equivalence ~$\tilde D'^{m}_{n A}(M)$, ~$\tilde D'^{m}_{n E A}(M)$
(etc.), which contains equivalence classes (only) of the objects
belonging to ~$D'^{m}_{n A}(M)$, ~$D'^{m}_{n E A}(M)$
(etc.).

\paragraph{Notation.}
In some of the following theorems, (also for example in the definition of the covariant derivative), we will use some convenient notation:
Take some object $B^{\mu...}_{\nu...}\in\Gamma^{m}_{E}(M)$. The
expression ~$T^{\mu...}_{\nu...}(B^{\alpha...}_{\beta...}\omega)$ will be understood in the following way: Take ~$Ch_{k}(\Omega_{Ch}) \times \omega$ ~~($\omega\in C^{P}(\Omega_{Ch})$) from the domain of $T^{\mu...}_{\nu...}$. ~Then ~$B^{\alpha...}_{\beta...}(Ch_{k})\cdot\omega$~ is a multi-index matrix of $C^{P}(\Omega_{Ch})$ objects. This means that outside $\Omega_{Ch}$ set they are defined to be trivially 0. We substitute this multi-index matrix of $C^{P}(\Omega_{Ch})$ objects
to ~$T^{\mu...}_{\nu...}$, with the chart ~$Ch_{k}(\Omega_{Ch})$ taken as the argument.

\subsubsection{Relation to Colombeau equivalence}

A careful reader now understands the relation between our concept of equivalence and the Colombeau equivalence relation. It is simple: The previous definition just translates the Colombeau equivalence relation (see \cite{Amulti}) into our language and the equivalence classes will naturally preserve all the features of the Colombeau equivalence classes (this will be proven in the following theorems).

\subsubsection{Some additional definitions (concepts of associated field and
$\Lambda$ class)}

We define the concept (of association) to bring
some insight to what our concepts mean in the most simple (but most
important and useful) cases. It enables us to see better
the relation between the calculus we defined (concerning
equivalence) and the classical tensor calculus. It brings us also
better understanding of what equivalence means in
terms of physics (at least in the simple cases). It just means that the quantities might differ on the
large scales, but take the same small scale limit (for the small
scales they approach each other).

\theoremstyle{definition}
\newtheorem{defn12}[defn11]{Definition}
\begin{defn12}\label{DefAssoc}
 Take $T^{\mu...}_{\nu...}\in\Gamma^{m}_{A}(M)$. Assume that:
\begin{itemize}
 \item [a)]
~~$\forall \mathcal{\mathcal{\mathcal{\mathcal{S}}}}$,~~~ such~~ that ~~~$T^{\mu...}_{\nu...}~\in~\Gamma^{m}_{( \mathcal{\mathcal{\mathcal{\mathcal{S}}}})A}(M)$, ~~~$\forall\Omega_{Ch}$~~~ and ~~~$\forall Ch_{k}(\Omega_{Ch})~\in ~\mathcal{\mathcal{\mathcal{\mathcal{S}}}}$ \\  $\exists~\Omega_{Ch}~\setminus ~\Omega'(Ch_{k})$,~~~(the set ~$\Omega'(Ch_{k})$~ being ~0~ in
any Lebesgue measure),
~~such ~that ~~$\forall q~\in~\Omega_{Ch}~\setminus ~\Omega'(Ch_{k})$, ~~~$\forall Ch'(q,\Omega_{Ch})~\in ~\mathcal{\mathcal{\mathcal{\mathcal{S}}}}~\subset~\mathcal{\mathcal{\tilde S}}$  ~~~~$\exists n$,\\~ such ~that  ~$\forall \omega_{\epsilon}\in A^{n}(Ch'(q,\Omega_{Ch}),\mathcal{\mathcal{\tilde S}})$
\begin{equation}
\exists~\lim_{\epsilon\to 0}
T^{\mu...}_{\nu...}(Ch_{k},\omega_{\epsilon}).\label{limit}
\end{equation}
\item [b)]
The limit (\ref{limit}) is
~~$\forall Ch'(q,\Omega_{Ch})~\in ~\mathcal{\mathcal{\mathcal{\mathcal{S}}}}~\subset~\mathcal{\mathcal{\tilde S}}$,~ ~$\forall\omega_{\epsilon}~\in ~A^{n}(Ch'(q,\Omega_{Ch}),\mathcal{\mathcal{\tilde S}})$ ~the same.
\end{itemize}
 If both $a)$ and $b)$ hold, then the object defined by the limit (\ref{limit})
 is a mapping:
\begin{equation}
Ch(\Omega_{Ch})(\in\mathcal{\mathcal{\mathcal{\mathcal{\mathcal{S}}}}}))\times\Omega_{Ch}\setminus \Omega'(Ch)\to\mathbb{R}^{4^{m+n}}.
\end{equation}
 We call this map the field associated to ~$T^{\mu...}_{\nu...}\in\Gamma^{m}_{A}(M)$,
 and we use the expression
 ~$A_{s}(T^{\mu...}_{\nu...})$. (It necessarily fulfills the same consistency conditions for ~$\Omega^{1}_{Ch}\subset\Omega^{2}_{Ch}$ as the $\Gamma^{m}_{A}(M)$ objects.)

\end{defn12}

\theoremstyle{definition}
\newtheorem{deff12}[defn11]{Definition}
\begin{deff12}
Denote by ~$\Lambda\subset\Gamma^{m}_{E (\cup_{n}At(\mathcal{\mathcal{\tilde S}}_{n}) o)}(M)$ a class of objects, such that
each ~$T^{\mu...}_{\nu...}\in\Lambda$~~~ can be
~~~$\forall\Omega_{Ch}$,~~~ $\forall n$,~~~ $\forall\omega\in C^{P}_{S (\mathcal{\mathcal{\tilde S}}_{n})}(\Omega_{Ch})$, ~~~$\forall \mathcal{\mathcal{\mathcal{\mathcal{S}}}}\subset\mathcal{\mathcal{\tilde S}}_{n}~\cap ~At(\mathcal{\mathcal{\tilde S}}_{n})$,\\~~ $\forall Ch_{k}(\Omega_{Ch})\in\mathcal{\mathcal{\mathcal{\mathcal{\mathcal{S}}}}}$~ expressed as a map
\begin{equation}
(\omega,Ch_{k})\to\int_{\Omega_{Ch}}T^{\mu...}_{\nu...}(Ch_{k})\cdot\omega,
\end{equation}
where for $T^{\mu...}_{\nu...}(Ch_{k})$ holds the following: In each
chart from $\mathcal{\mathcal{\mathcal{\mathcal{\mathcal{S}}}}}$ for
every point ~$\mathbf{z_{0}}$, ~where $T^{\mu...}_{\nu...}(Ch_{k})$
is continuous ~$\exists~\delta>0,~\exists~ K^{\mu}_{\nu}>0$, such
that ~$\forall\epsilon$~ $(0\leq\epsilon\leq\delta))$ and for
arbitrary unit vector $\mathbf{n}$~(in the Euclidean metric on
$\mathbb{R}^{4}$)~~
\begin{equation}
T^{\mu...}_{\nu...}(Ch_{k},\mathbf{z_{0}})-K^{\mu...}_{\nu....}\epsilon\leq
T^{\mu...}_{\nu...}(Ch_{k},\mathbf{z_{0}}+\mathbf{n}\epsilon)\leq
T^{\mu...}_{\nu...}(Ch_{k},\mathbf{z_{0}})+K^{\mu...}_{\nu...}\epsilon.
\end{equation}
\label{Lambda}
\end{deff12}

\paragraph{Notation.}
Take from (\ref{Lambda}) arbitrary, fixed ~$T^{\mu...}_{\nu...}$,
~$\Omega_{Ch}$ ~and ~$Ch_{k}(\Omega_{Ch})$. ~By the notation ~$\tilde\Omega(Ch_{k})\subset\Omega_{Ch}$ we denote a set (having Lebesgue measure 0) on
which is  ~$T^{\mu...}_{\nu...}(Ch_{k})$ discontinuous.

\subsubsection{Reproduction of the basic results by the equivalence relation}

\newtheorem{uniqueness}{Theorem}[section]

\begin{uniqueness}
 Any class ~$\tilde\Gamma^{m}_{(\cup_{n}At(\mathcal{\mathcal{\tilde S}}_{n}) o)A}(M)$ contains maximally one linear element.
\end{uniqueness}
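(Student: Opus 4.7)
The plan is to upgrade the uniqueness claim to the equivalent statement: if $B^{\mu...}_{\nu...}\in\Gamma^m_{A(\cup_n At(\tilde A_n)o)}(M)$ is linear and $B\approx 0$, then $B=0$ on its whole domain. The theorem follows by setting $B:=A-T$ for two linear representatives of the same equivalence class, since linearity and additivity of $\approx$ give $B\approx 0$, and the two representatives share a domain because they lie in the same $\Gamma^m_{A(\cup_n At(\tilde A_n)o)}(M)$.

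To prove $B=0$, fix arbitrary $\Omega_{Ch}$, $q\in\Omega_{Ch}$, some $Ch'(q,\Omega_{Ch})\in\tilde A_n$, some $Ch(\Omega_{Ch})\in At(\tilde A_n)$, and the integer $N$ supplied by the definition of $\approx$. For any smooth compactly supported $\phi\colon\mathbb R^4\to\mathbb R$ and any $\Psi_\epsilon\in A^N(Ch'(q,\Omega_{Ch}),\tilde A_n)$, the hypothesis $B\approx 0$ asserts
\begin{equation*}
\lim_{\epsilon\to 0}\int_{\mathbb R^4} B^{\mu...}_{\nu...}\!\bigl(Ch,\psi_\epsilon(y)\bigr)\,\phi(y)\,d^4y=0.
\end{equation*}
The crucial manoeuvre is to exchange the linear functional $B(Ch,\cdot)$ with the $y$-integration. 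This is done by approximating the outer integral by Riemann sums over $\supp(\phi)$, applying linearity of $B(Ch,\cdot)$ to each sum, and using continuity of $B(Ch,\cdot)$ on $C^P_{S(\tilde A_n)}(\Omega_{Ch})$ in the topology of uniform convergence of all derivatives. The result is
\begin{equation*}
\int B(Ch,\psi_\epsilon(y))\,\phi(y)\,d^4y=B\Bigl(Ch,\,\Phi_\epsilon\Bigr),\qquad \Phi_\epsilon:=\int\psi_\epsilon(y)\,\phi(y)\,d^4y,
\end{equation*}
where $\Phi_\epsilon$ is the 4-form on $\Omega_{Ch}$ whose density in $Ch'$ is the classical convolution $(\Psi'_\epsilon * \phi)(x)$.

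The $A^N$ moment conditions on $\Psi$ guarantee that $\Psi'_\epsilon * \phi \to \phi$ uniformly (with $O(\epsilon^{N+1})$ error), together with all derivatives, and with supports eventually contained in a fixed compact set. Thus $\Phi_\epsilon$ converges, in the topology of $C^P_{S(\tilde A_n)}(\Omega_{Ch})$, to the form $\Phi$ whose density in $Ch'$ is $\phi$; the required chart-invariance of the convergence uses precisely the defining property of $\tilde A_n$ that smoothness of 4-form densities is preserved across its charts. Continuity of $B(Ch,\cdot)$ then forces $B(Ch,\Phi)=0$. Since every $\Psi\in C^P_{S(\tilde A_n)}(\Omega_{Ch})$ is of this form — its density in $Ch'$ is already a smooth compactly supported $\phi$ — we conclude $B(Ch,\Psi)=0$ on every $(\Psi,Ch)$ in the domain, so $B=0$ and consequently $A=T$.

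The main obstacle is the Fubini-type exchange: one has to check that $y\mapsto \psi_\epsilon(y)$ is a sufficiently regular map into $C^P_{S(\tilde A_n)}(\Omega_{Ch})$ for the Riemann sums to converge in that test-space topology. This reduces to uniform control of the $y$-derivatives of $\Psi'\bigl((x-y)/\epsilon\bigr)$ over $\supp(\phi)$ and of their pushforwards to other charts of $\tilde A_n$; this is routine since $\Psi'$ is smooth and compactly supported and the Jacobians between charts in $A$ are bounded on compacta by the definition of $(M,A)$. Everything else in the argument is a direct translation of the classical $\mathbb R^n$ proof that a distribution annihilated by convolution with an approximate identity vanishes.
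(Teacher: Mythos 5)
Your proof is correct and reaches the paper's conclusion, but it takes a more self-contained route than the paper does. The paper's own proof also works chart-by-chart: it fixes $\Omega_{Ch}$, a chart $Ch'(\Omega_{Ch})\in\tilde A$ and a chart $Ch(\Omega_{Ch})\in At(\tilde A)$, transports the test forms to smooth compactly supported functions on $\mathbb{R}^{4}$, observes that the two linear elements become (vector-valued) distributions there and that the equivalence relation becomes exactly Colombeau's equivalence, and then simply \emph{cites} Colombeau's result that no two distinct distributions are equivalent. You instead prove that kernel fact from scratch: writing $B:=A-T\approx 0$, exchanging the linear functional with the $y$-integration via Riemann sums and continuity, recognizing $\Phi_\epsilon$ as the form whose density is $\Psi'_\epsilon\ast\phi$, and using the normalization $\int\Psi'=1$ plus continuity of $B(Ch,\cdot)$ in the test topology to force $B(Ch,\Psi)=0$ for every admissible $\Psi$. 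What your version buys is transparency: it shows exactly which ingredients are used (linearity, continuity on $C^{P}_{S(\tilde A_n)}(\Omega_{Ch})$, and only the zeroth-order moment condition, so the order $n$ of the $A^{n}$ class is irrelevant), whereas the paper's version is shorter and emphasizes that the whole construction chart-locally collapses onto the classical Colombeau setting. One caveat, which you share with the paper rather than introduce: both arguments implicitly assume that forms manufactured by translation/convolution of densities in the distinguished chart (your $\psi_\epsilon(y)$ and $\Phi_\epsilon$) remain legitimate elements of $C^{P}_{S(\tilde A_n)}(\Omega_{Ch})$ and that the relevant convergences hold in \emph{all} charts of $\tilde A_n$, not just in $Ch'$; bounded Jacobians control first derivatives only, so this chart-invariance is a framework-level assumption rather than something either proof establishes.
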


\begin{proof}
We need to prove that there do not exist such two elements of
~$\Gamma^{m}(M)$, which are equivalent. Take two elements $B$ and $T$
from the class $\Gamma^{m}_{(\cup_{n}At(\mathcal{\mathcal{\tilde S}}_{n})o)A}(M)$  (both with the given domains and continuity). Take
arbitrary ~$\Omega_{Ch}$, arbitrary ~$\mathcal{\mathcal{\tilde S}}$ from their domains,
and arbitrary ~$Ch'(\Omega_{Ch})\in\mathcal{\mathcal{\tilde S}}$. Map all the
~$C^{P}_{S (\mathcal{\mathcal{\tilde S}})}(\Omega_{Ch})$ objects to smooth, compact
supported functions on $\mathbb{R}^{4}$ through this fixed chart
mapping. Now both $B$ and $T$ give, in fixed but arbitrary
~$Ch(\Omega_{Ch})\in At(\mathcal{\mathcal{\tilde S}})$ linear, continuous maps on the compactly supported smooth
functions. (The only difference from Colombeau distributions is that
it is in general a map to $\mathbb{R}^{m}$, so the difference is only ``cosmetic''.)

Now after applying this construction, our
concept of equivalence reduces for every
~$Ch(\Omega_{Ch})\in At(\mathcal{\mathcal{\tilde S}})$ to Colombeau equivalence from
\cite{Amulti}. The same results must hold. One of the
results says that there are no two distributions being
equivalent. All the parameters are fixed but arbitrary and
all the 4-forms from domains of $B$ and $T$ can be mapped to the
~$\mathbb{R}^{4}$ functions for some proper fixing of ~$\Omega_{Ch}$
and $\mathcal{\mathcal{\tilde S}}$. Furthermore, the $C^{P}_{S}(M)$ 4-forms are arguments of $B$ and $T$ only in the
charts, in which $B$ and $T$ were compared as maps on the spaces of
~$\mathbb{R}^{4}$ functions. So this ``arbitrary
chart fixing'' covers all their domain. As a result $B$ and $T$ must be identical and that is what needed to be proven.
\end{proof}

\newtheorem{extension}[uniqueness]{Theorem}
\begin{extension}
 Any class of equivalence ~$\tilde\Gamma^{m}_{E A}(M)$ contains maximally one linear element.
\end{extension}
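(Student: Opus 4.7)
My plan is to reduce, via linearity of the equivalence relation, to showing that any linear $T\in\Gamma^{m}_{EA}(M)$ with $T\approx 0$ vanishes identically, and then to apply the standard mollifier / du Bois-Reymond argument chart-by-chart. If $B,T$ are both linear elements of $\Gamma^{m}_{EA}(M)$ and $B\approx T$, they both sit in some $\Gamma^{m}_{E(A')}(M)$ for a common biggest smooth subatlas $A'$ of a common $\tilde{A}$, and by the linearity remark following the definition of equivalence their difference $B-T\in\Gamma^{m}_{E(A')}(M)$ satisfies $B-T\approx 0$; so the reduction to proving $T\equiv 0$ is immediate.

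Fix such a linear $T\in\Gamma^{m}_{E(A')}(M)$ with $T\approx 0$, an $\Omega_{Ch}$, and a chart $Ch_{k}(\Omega_{Ch})\in A'$, in which $T$ is represented by a multi-index matrix $T^{\mu\ldots}_{\nu\ldots}(Ch_{k})$ of piecewise continuous, locally integrable functions. For each $q\in\Omega_{Ch}$ I take $Ch'(q,\Omega_{Ch})\in\tilde{A}$ to be the translate of $Ch_{k}$ sending $q$ to $0$ (permissible since $A'\subseteq\tilde{A}$ and translation preserves the atlas). Then any $\Psi_{\epsilon}\in A^{n}(Ch'(q,\Omega_{Ch}),\tilde{A})$ is a compactly supported smooth density with the prescribed vanishing moments and the standard $\epsilon^{-4}\Psi'(x/\epsilon)$ rescaling in $Ch_{k}$ coordinates, the translated 4-form $\psi_{\epsilon}(y)$ has density $\epsilon^{-4}\Psi'((y-x)/\epsilon)$, and
\[
T^{\mu\ldots}_{\nu\ldots}\bigl(Ch_{k},\psi_{\epsilon}(y)\bigr)=\bigl(T^{\mu\ldots}_{\nu\ldots}(Ch_{k})\ast\tilde{\Psi}_{\epsilon}\bigr)(y),
\]
a convolution of a locally integrable integrand against a compactly supported smooth kernel. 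The equivalence hypothesis then reads
\[
\lim_{\epsilon\to 0}\int\bigl(T^{\mu\ldots}_{\nu\ldots}(Ch_{k})\ast\tilde{\Psi}_{\epsilon}\bigr)(y)\,\phi(y)\,d^{4}y=0
\]
for every compactly supported smooth $\phi\colon\mathbb{R}^{4}\to\mathbb{R}$. Swapping the convolution onto $\phi$ by Fubini converts the left-hand side into $\int T^{\mu\ldots}_{\nu\ldots}(Ch_{k})(\phi\ast\check{\tilde{\Psi}}_{\epsilon})$, and since $\int\Psi'=1$ and $\Psi'$ is compactly supported smooth, $\phi\ast\check{\tilde{\Psi}}_{\epsilon}\to\phi$ uniformly on a common compact set; dominated convergence then yields $\int T^{\mu\ldots}_{\nu\ldots}(Ch_{k})\,\phi=0$ for every such $\phi$. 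The du Bois-Reymond lemma forces $T^{\mu\ldots}_{\nu\ldots}(Ch_{k})=0$ almost everywhere, and the ``continuous to a maximally possible degree'' convention upgrades this to vanishing at every point. Since $\Omega_{Ch}$ and $Ch_{k}\in A'$ were arbitrary and a $\Gamma^{m}_{E}$ object is determined by its integrands, $T\equiv 0$ on its entire domain.

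The hard part is not any single computation but the bookkeeping: one must argue that the centered chart $Ch'=Ch_{k}$ is always available inside the relevant $At(\tilde{A},\Omega_{Ch})$ used in the equivalence definition, and, in case one refuses to identify $Ch'$ with $Ch_{k}$, verify that a smooth chart transition with bounded Jacobians converts the convolution in $Ch'$ into a slightly distorted family of kernels in $Ch_{k}$ which still satisfies $\phi\ast(\cdots)\to\phi$ uniformly on compact sets. With that routine verification in hand, the argument is essentially the Colombeau uniqueness result \cite{Amulti} translated to our language, closely parallel to the proof of the previous theorem; the extra ingredient is just the observation that $\Gamma^{m}_{E}$ elements are fully determined by their integrands.
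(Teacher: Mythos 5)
Your overall strategy is sound and is essentially the paper's own argument made explicit: reduce to a per-chart statement on $\mathbb{R}^{4}$ and use the Colombeau-type uniqueness of distributions within an equivalence class, which you prove by hand (Fubini, $\phi\ast\check{\Psi}_{\epsilon}\to\phi$ uniformly, du Bois--Reymond) instead of citing \cite{Amulti}; the linearity reduction to ``$T\approx 0$ implies $T=0$'' is a legitimate repackaging, and your closing observation that a $\Gamma^{m}_{E}$ object is determined by its integrands is exactly the paper's remark that the $C^{P}_{S}(M)$ domain fixes the action on the rest of $C^{P}(M)$.

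There is, however, one genuine gap: you run the whole argument only for evaluation charts $Ch_{k}(\Omega_{Ch})\in A'$, and then conclude ``$T\equiv 0$ on its entire domain.'' For $\Gamma$-type objects this does not follow. Unlike $D'^{m}_{n}(M)$ elements, a $\Gamma^{m}_{E}(M)$ object carries an independent integrand for each chart argument --- there is no transformation law tying the integrand at a chart in $A\setminus A'$ to the integrands at charts of $A'$ --- so two distinct linear elements of $\Gamma^{m}_{E A}(M)$ could in principle agree for all $A'$-chart arguments and still differ at some chart of $A$. The precise point that distinguishes this theorem from the preceding one in the paper is that $\Gamma^{m}_{E A}(M)$ objects are defined and continuous on every $C^{P}_{S(\tilde A)}$ in \emph{every} chart from $A$, so the equivalence hypothesis delivers the limit condition with the evaluation chart ranging over all of $A$, and the uniqueness argument must be run in that generality. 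The repair is easy: take $Ch_{k}\in A$ arbitrary and keep your computation; since such a $Ch_{k}$ need not map onto $\mathbb{R}^{4}$ (and so cannot be identified with a centered chart $Ch'$, which by definition is onto $\mathbb{R}^{4}$), perform the convolution in the $Ch'$ coordinates, where the integrand $T^{\mu\ldots}_{\nu\ldots}(Ch_{k})$ expressed through the piecewise smooth transition with bounded Jacobians is still locally integrable, and your mollifier argument goes through verbatim. This also disposes of the ``bookkeeping'' you deferred in your last paragraph: you should not try to force $Ch'=Ch_{k}$ at all, but simply work in the $Ch'$ chart and transport the a.e.\ vanishing back through the transition (null sets are preserved because the Jacobians and inverse Jacobians are bounded).
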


\begin{proof}
 First notice that the elements of ~$\Gamma^{m}_{E A}(M)$ are continuous and
 defined on every ~$C^{P}_{S (\mathcal{\mathcal{\tilde S}})}(M)$ in every
chart from $\mathcal{A}$, so they are required to be compared in any arbitrary chart from $\mathcal{A}$. By taking this into account, we can repeat the previous proof. There is one additional trivial fact one has to notice:
the ~$C^{P}_{S}(M)$ domain also uniquely determines how the ~$\Gamma^{m}_{E}(M)$ element acts
outside ~$C^{P}_{S}(M)$. So if $B,~T\in \Gamma^{m}_{E A}(M)$ give the same map on ~$C^{P}_{S}(M)$, they give the same map everywhere.
\end{proof}

\newtheorem{multiplication}[uniqueness]{Theorem}

\begin{multiplication}\label{multiplication}
The following statements hold:
\begin{itemize}
\item[a)] Take ~$T^{\mu...\alpha}_{\nu...\beta}\in\Gamma^{a}_{E A}(M)$ ~such that ~$\forall\Omega_{Ch}$,
~$\forall Ch_{k}(\Omega_{Ch})\in\mathcal{A}$ ~and ~$\forall\omega\in C^{P}(\Omega_{Ch})$ ~~$T^{\mu...\alpha}_{\nu...\beta}$~ is defined as a map
\begin{equation}
(Ch_{k},\omega)\to\int_{\Omega_{Ch}}
T^{\mu...}_{1~\nu...}(Ch_{k})~\omega~...~\int_{\Omega_{Ch}} T^{\alpha...}_{N \beta...}(Ch_{k})~\omega~.\label{Teq}
\end{equation}
Then the class of equivalence ~$\tilde\Gamma^{a}_{E A}(M)$,
to which ~$T^{\mu...}_{\nu...}$~ belongs,
contains a linear element defined (on arbitrary ~$\Omega_{Ch}$) as the map: ~~$\forall\omega~\in ~C^{P}(\Omega_{Ch})$, ~~$\forall Ch_{k}(\Omega_{Ch})~\in ~\mathcal{A}$,
\begin{eqnarray}
(Ch_{k},\omega)~\to~\int_{\Omega_{Ch}}
T^{\mu...}_{1~\nu...}(Ch_{k})...T^{\alpha...}_{N \beta...}(Ch_{k})~\omega~,
\end{eqnarray}
if and only if ~~$\forall\Omega_{Ch}$,~~ $\forall Ch_{k}(\Omega_{Ch})~\in ~\mathcal{A}$ ~~~ $\exists ~Ch_{l}(\Omega_{Ch})~\in ~\mathcal{A}$, ~~such that
\begin{eqnarray}
\int_{Ch_{l}(\Omega_{Ch})_{|\Omega'}} T^{\mu...}_{1~\nu...}(Ch_{k})...T^{...\alpha}_{N...\beta}(Ch_{k}) ~d^{4}x~~~~~~~~~~~~~~~~~~~~~~~~~
\end{eqnarray}
converges on every
compact set $\Omega'\subset\Omega_{Ch}$.

The same statement holds,
if we take instead of ~$\Gamma^{m}_{E A}(M)$ its subclass ~$D'^{a}_{b
E A}(M)$ and instead of the equivalence class ~$\tilde\Gamma^{m}_{E
A}(M)$, the equivalence class ~$\tilde D'^{a}_{b E A}(M)$.

The same statement also holds if we take instead of ~$\Gamma^{m}_{E
A}(M)$ and ~$D'^{m}_{n E A}(M)$ classes, the classes ~$\Gamma^{m}_{E
A (\cup_{l}At(\mathcal{\mathcal{\tilde S}}_{l}) o)}(M)$ and
~$D'^{m}_{n E A (\cup_{l}At(\mathcal{\mathcal{\tilde
S}}_{l})o)}(M)$, ~(with the exception that the given convergence
property shall be considered only for charts from ~
$\cup_{l}At(\mathcal{\tilde S}_{l})$ ).
\item[b)] For any distribution ~~$A^{\alpha...}_{\beta...}~\in ~\Gamma^{a}_{ S
(\cup_{n} \mathcal{\mathcal{\mathcal{\mathcal{S}}}}_{n}~ o)}(M)$,~~ and an element
~~$T^{\mu...}_{\nu...}\in\\
\Gamma^{m}_{(\cup_{n} \mathcal{\mathcal{\mathcal{\mathcal{S}}}}_{n})}(M)$, ~~we have
that ~~$A^{\alpha...}_{\beta...}T^{\mu...}_{\nu...}$~~ is equivalent
to an element of ~~$\Gamma^{m+a}_{(\cup_{n} \mathcal{\mathcal{\mathcal{\mathcal{S}}}}_{n} ~o)}(M)$, ~~(and~
for ~subclasses ~~$D'^{a}_{b
(\cup_{n}\mathcal{\mathcal{\mathcal{\mathcal{S}}}}_{n})}(M)~\subset~\Gamma^{a+b}_{(\cup_{n}\mathcal{\mathcal{\mathcal{\mathcal{S}}}}_{n})}(M)$ ~~and~~\\
$D'^{k}_{l S( \cup_{n}\mathcal{\mathcal{\mathcal{\mathcal{S}}}}_{n}
o)}(M)~\in~\Gamma^{k+l}_{S (\cup_{n}
\mathcal{\mathcal{\mathcal{\mathcal{S}}}}_{n} ~o)}(M)$ ~~it ~is
~equivalent ~to ~an ~element ~of ~~$D'^{k+a}_{l+b
(\mathcal{\mathcal{\mathcal{\mathcal{S}}}} ~o)}(M)$).~ The element
is on its domain given as the mapping
\begin{equation}
(\omega,Ch_{k})\to T^{\mu...}_{\nu...}(A^{\alpha...}_{\beta...}\omega).
\end{equation}
\item[c)] For ~any ~tensor ~distribution ~~$A^{\alpha...}_{\beta...}~\in
~\Gamma^{a}_{S}(M)$  ~~~and ~an ~element
~~$T^{\mu...}_{\nu...}~\in\\
\Gamma^{m}_{(\cup_{n} \mathcal{\mathcal{\mathcal{\mathcal{S}}}}_{n} ~o)}(M)$,
~we ~have ~that ~~$A^{\alpha...}_{\beta...}T^{\mu...}_{\nu...}$~~ is
equivalent ~to ~an ~element ~of ~~\\$\Gamma^{m+a}_{(\cup_{n} \mathcal{\mathcal{\mathcal{\mathcal{S}}}}_{n}
~o)}(M)$. ~The element is on its domain given as mapping
\begin{equation}
(\omega,Ch_{k})\to T^{\mu...}_{\nu...}(A^{\alpha...}_{\beta...}\omega).
\end{equation}
\end{itemize}
\end{multiplication}

\begin{proof}
\begin{itemize}
\item[a)]Use exactly the same construction as in the previous proof. For arbitrary $\Omega_{Ch}$ and arbitrary $Ch_{k}(\Omega_{Ch})\in\mathcal{A}$,
we see that ~$T^{\mu...}_{\nu...}$ is for every $\omega\in C^{P}_{S (\mathcal{\mathcal{\tilde S}})}(\Omega_{Ch})$ given by (\ref{Teq}) (it is
continuous in arbitrary chart on every ~$C^{P}_{S (\mathcal{\mathcal{\tilde S}})}(\Omega_{Ch})$).
 We can express the map (\ref{Teq}) in some chart
$Ch_{l}(\Omega_{Ch})$ as ~
\begin{eqnarray}
(Ch_{k},\omega')~~~~~~~~~~~~~~~~~~~~~~~~~~~~~~~~~~~~~~~~~~~~~~~~~~~~~~~~~~~~~~~~~~~~~~~~~~~~~~~~~~~~~~~~~~~~~~~~\nonumber\\~
\to~\int_{Ch_{l}(\Omega_{Ch})}
T^{\mu...}_{1~\nu...}(Ch_{k})~\omega'~d^{4}x...\int_{Ch_{l}(\Omega_{Ch})}T^{\alpha...}_{N \beta...}(Ch_{k})~\omega'~d^{4}x.~~~~\label{Teq2}
\end{eqnarray}

Then it is a result of Colombeau
theory that if
\begin{equation}
(Ch_{k},\omega')~\to~\int_{Ch_{l}(\Omega_{Ch})}
T^{\mu...}_{1~\nu...}(Ch_{k})...T^{...\alpha}_{N...\beta}(Ch_{k})~\omega'~d^{4}x\label{Teq3}
\end{equation}
is defined as
a linear mapping on compactly supported, smooth ~$\mathbb{R}^{4}$ ~functions ~$\omega'$ ~(in our case
they are related by $Ch_{l}(\Omega_{Ch})$ to given
~$C^{P}_{S}(M)$ objects), it is equivalent to (\ref{Teq2}). Now
everything was fixed, but arbitrary, so the result is
proven. From this proof we also see that the simple
transformation properties of the ~$D'^{m}_{n}(M)$
objects\footnote{We include also the scalar objects here.} are fulfilled by the map (\ref{Teq3}) if the objects multiplied are from ~$D'^{m}_{n
A}(M)$. So the second result can be proven immediately. The last two
results concerning the classes with limited domains trivially follow from the previous proof.
\item[b)] is proven completely in the same way, we just have to understand
that because of the ``limited'' domain of the ~$D'^{a}_{b
S(\cup_{n}\mathcal{\mathcal{\mathcal{\mathcal{S}}}}_{n} o)}(M)$ objects, we can effectively use the concept
of smoothness in this case.
\item[c)] is just the same as b), the only difference is that the domain
of the product is limited because of the ``second'' term in the
product.
\end{itemize}
\end{proof}

Note that this means that tensor product gives, on appropriate
subclasses of ~$D'^{m}_{n E A}(M)$, the mapping ~$\tilde D'^{a}_{b E A}(M)\times\tilde D'^{m}_{n E A}\to \tilde D'^{a+m}_{b+n E A}(M)$. ~It also means that this procedure gives, on
appropriate subclasses of $\Gamma^{m}_{E A}(M)$, the mapping ~$\tilde \Gamma^{a}_{E A}(M)\times\tilde
\Gamma^{m}_{E A}\to \tilde \Gamma^{a+m}_{E A}(M)$.
The disappointing fact is that this cannot be extended to ~$D'^{m}_{n
A}(M)$.

\newtheorem{Comm2}[uniqueness]{Theorem}
\begin{Comm2}
Take ~$T^{\mu...}_{\nu...}\in\Gamma^{m}_{(\cup_{n}
\mathcal{\mathcal{\mathcal{\mathcal{S}}}}_{n}~ o)A}(M)$,~
$B^{\mu...}_{\nu...}\in\Gamma^{m}_{(\cup_{n}
\mathcal{\mathcal{\mathcal{\mathcal{S}}}}_{n}~ o)}(M)$ and
~$L^{\alpha...}_{\beta...}\in\Gamma^{n}_{S (\cup_{l}
\mathcal{\mathcal{\mathcal{\mathcal{S}}}}_{l}~ o)}(M)$. Then
~$T^{\mu...}_{\nu...}\approx B^{\mu...}_{\nu...}$ implies
\footnote{It is obvious that we can extend the definition domains
either of $T^{\mu...}_{\nu...}$ and $B^{\mu...}_{\nu...}$, or of
$L^{\mu...}_{\nu...}$.} ~$(L\otimes
T)^{\alpha...\mu...}_{\beta...\nu...}\approx (L\otimes
B)^{\alpha...\mu...}_{\beta...\nu...}$.
\end{Comm2}

\begin{proof}
Use the same method as previously. It trivially follows from the
results of Colombeau theory (especially from the theorem saying that
if a Colombeau algebra object is equivalent to a distribution, then
after multiplying each of them by a smooth distribution, they remain
equivalent).
\end{proof}

\newtheorem{Contraction}[uniqueness]{Theorem}

\begin{Contraction}
 Contraction (of $\mu$ and $\nu$ index) is always, for such objects
~$T^{...\mu...}_{...\nu...}\in D'^{m}_{n E A}(M)$ that they are
equivalent to some linear element, a map to some element of the
equivalence class from ~~~$\tilde\Gamma^{m+n-2}_{E A}(M)$.~~~ The
~equivalence ~class ~from \\ $\tilde\Gamma^{m+n-2}_{E A}(M)$ is
such, that it contains (exactly) one element from ~$D'^{m-1}_{n-1
E}(M)$ and this element is defined as the map:
~~$\forall\Omega_{Ch}$, ~~$\forall Ch_{k}(\Omega_{Ch})~\in
~\mathcal{A}$, ~~$\forall\omega~\in ~C^{P}(\Omega_{Ch})$,
\begin{eqnarray}
(\omega,Ch_{k})\to\int_{\Omega_{Ch}}
T^{...\alpha...}_{...\alpha...}(Ch_{k})~\omega.~~~~~~~~~~~~
\end{eqnarray}
\end{Contraction}

\begin{proof}
The proof trivially follows from the fact that
contraction commutes with the relation of equivalence (this trivially follows from our previous note about
addition and equivalence).
\end{proof}

\subsubsection{Some interesting conjectures}

\theoremstyle{definition}
\newtheorem{ADD1}{Conjecture}[section]
\begin{ADD1}
Tensor product gives these two maps:
\begin{itemize}
\item
~$\tilde D'^{a}_{b E A}(M)\times\tilde D'^{m}_{n E A}(M)\to
\tilde D'^{a+m}_{b+n E A}(M)$,
\item
~$\tilde \Gamma^{a}_{E A}(M)\times\tilde \Gamma^{b}_{E
A}(M)\to \tilde \Gamma^{a+b}_{E A}(M)$.
\end{itemize}
\end{ADD1}

\theoremstyle{definition}
\newtheorem{ADD2}[ADD1]{Conjecture}
\begin{ADD2}
Take some
~$B^{\mu...}_{\nu...}\in\Gamma^{a}_{(\cup_{n}At(\mathcal{\tilde
S}_{n}) ~o)}(M)$. Take an element
~$T^{\mu...}_{\nu...}\in\Gamma^{b}_{E}(M)$, such that
~$\forall\mathcal{\mathcal{\tilde
S}}\subseteq\cup_{n}\mathcal{\mathcal{\tilde S}}_{n}$,
~$\forall\Omega_{Ch}$, ~$\forall Ch_{k}(\Omega_{Ch})\in
At(\mathcal{\mathcal{\tilde S}})$ it holds that
 ~$\forall\omega\in C^{P}_{S (\mathcal{\mathcal{\tilde S}})}(M)$, ~the elements of the multi-index matrix ~$T^{\mu...}_{\nu...}(Ch_{k})\cdot\omega$,~~
 are still
 from the class ~$C^{P}_{S (\mathcal{\mathcal{\tilde S}})}(M)\subset\cup_{n}C^{P}_{S
(\mathcal{\mathcal{\tilde S}}_{n})}(M)$. Then it holds that
~$B^{\alpha...}_{\beta...}T^{\mu...}_{\nu...}$ is equivalent to an
element of ~$\Gamma^{a+b}_{(\cup_{n}At(\mathcal{\mathcal{\tilde
S}}_{n}) ~o)}(M)$. (For subclasses ~$D'^{m}_{n
(\cup_{l}At(\mathcal{\mathcal{\tilde S}}_{l}) ~o)}(M)$ and
~$D'^{a}_{b E}(M)$ it is equivalent to an element ~$D'^{m+a}_{n+b
(\cup_{l}At(\mathcal{\tilde S}_{l}) ~o)}(M)$.) The element is on its
domain given as mapping
\begin{equation}
(\omega, Ch_{k})\to
B^{\mu...}_{\nu...}(T^{\alpha...}_{\beta...}\omega).
\end{equation}
\end{ADD2}

\subsubsection{Some additional theory}

\newtheorem{association1}{Theorem}[section]
\begin{association1}\label{Association1}
Any arbitrary ~$T^{\mu...}_{\nu...}\in\Lambda$ (as defined by \ref{Lambda}) defines an
~$A_{s}(T^{\mu...}_{\nu...})$ object on $M$. Take any arbitrary ~$\mathcal{\mathcal{\tilde S}}$~ from the domain of $T^{\mu...}_{\nu...}$ ,~any arbitrary ~$\mathcal{\mathcal{\mathcal{\mathcal{S}}}}\subset At(\mathcal{\mathcal{\tilde S}})\cap\mathcal{\mathcal{\tilde S}}$,~any arbitrary $\Omega_{Ch}$ and any arbitrary $Ch_{k}(\Omega_{Ch})\in\mathcal{\mathcal{\mathcal{\mathcal{\mathcal{S}}}}}$. Then for $\Omega_{Ch}\setminus\tilde\Omega(Ch_{k})$ it holds that multi-index matrix of functions $T^{\mu...}_{\nu...}(Ch_{k})$ can be obtained from the tensor components of $A_{s}(T^{\mu...}_{\nu...})$ in $Ch_{k}(\Omega_{Ch})$ by the inverse mapping to $Ch_{k}(\Omega_{Ch})$.
\end{association1}

\begin{proof}
 For ~$\forall\Omega_{Ch}$,~ take fixed but arbitrary ~$ Ch_{k}(\Omega_{Ch})\in\mathcal{\mathcal{\mathcal{\mathcal{\mathcal{S}}}}}$ and take
~$T^{\mu...}_{\nu...}(Ch_{k})$. Then ~$\forall q\in\Omega_{Ch}$, ~$\forall Ch'(\Omega_{Ch},q)\in\mathcal{\mathcal{\mathcal{\mathcal{\mathcal{S}}}}}$,  and ~$\forall ~\omega_{\epsilon}\in A^{n}(\mathcal{\mathcal{\tilde S}}_{l}, Ch'(q,\Omega_{Ch}))$, ~we see that  ~$\omega'(Ch')$~ is a
delta-sequence. That means we just have to show, that on the set where $T^{\mu...}_{\nu...}(Ch_{k})$ is continuous in the \ref{Lambda} sense, the delta-sequencies
give the value of this multi-index matrix. So write the
integral:
\begin{equation*}
\int_{Ch'(\Omega_{Ch})}
T^{\mu...}_{\nu...}\left[(Ch_{k}) (\mathbf{x})\right]~\frac{1}{\epsilon^{4}}~~\omega'\left(\frac{\mathbf{x}}{\epsilon}\right)~d^{4}x.
\end{equation*}
By substitution ~$\mathbf{x}=\epsilon.\mathbf{z}$ we obtain:

\begin{equation*}
\int_{Ch'(\Omega_{Ch})} T^{\mu...}_{\nu...}\left[(Ch_{k})~(\epsilon.\mathbf{
z})\right]~\omega'(\mathbf{z})~d^{4}z.
\end{equation*}
 But from the properties of
~$T^{\mu...}_{\nu...}\left[(Ch_{k})(\mathbf{x})\right]$ it follows that

\begin{eqnarray*}
\int_{Ch'(\Omega_{Ch})}(T^{\mu...}_{\nu...}\left[(Ch_{k})~(\mathbf
z_{0})-K^{\mu...}_{\nu...}\epsilon)\right]~\omega'(\mathbf{z})~d^{4}z~~~~~~~~~~~~~\\
\leq\int_{Ch'(\Omega_{Ch})}T^{\mu...}_{\nu...}\left[(Ch_{k})(\mathbf
z_{0}+\mathbf{n}\epsilon)\right]~\omega'(\mathbf{z})~d^{4}z~~~~~~~~\\
\leq\int_{Ch'(\Omega_{Ch})}(T^{\mu...}_{\nu...}\left[(Ch_{k})(\mathbf
z_{0})\right]+K^{\mu...}_{\nu...}\epsilon)~\omega'(\mathbf{z})~d^{4}z,
\end{eqnarray*}
 for some $\epsilon$ small
enough.

 But we are taking the limit ~$\epsilon\to 0$ which,
considering the fact that ~$\omega(\mathbf{x})$ are normed to 1,
means that the integral must give
~$T^{\mu...}_{\nu...}\left[(Ch_{k})(\mathbf{z_{0}})\right]$. The set, where it is not
continuous in the sense of \ref{Lambda}, has Lebesgue measure 0. That means the
~$\Omega_{Ch}$ part, which is mapped to this set has Lebesgue measure
0. But then the values of the multi-index matrix in the given chart
at this arbitrary, but fixed point give us an associated field (and are independent on delta sequence obviously).
\end{proof}

\newtheorem{association}[association1]{Theorem}

\begin{association}
 The field associated to
a ~$T^{\mu...}_{\nu...}\in \Lambda\cap D'^{m}_{n E}(M)$, ~transforms for each
~$\Omega_{Ch}$, ~for every pair of charts from its
domain, ~$Ch_{1}(\Omega_{Ch}), ~Ch_{2}(\Omega_{Ch})$ ~~on some ~$M/(\tilde\Omega(Ch_{1})\cup \tilde\Omega(Ch_{2}))$, ~ as an ordinary tensor field with
piecewise smooth transformations\footnote{Of course, some transformations in a
generalized sense might be defined also on the $\tilde\Omega(Ch_{1})\cup \tilde\Omega(Ch_{2})$ set.}.
\end{association}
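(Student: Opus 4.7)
The plan is to combine the preceding theorem (association1), which identifies the associated field in a chart $Ch_k$ on $\Omega_{Ch}\setminus\tilde M(Ch_k)$ with the multi-index matrix $T^{\mu...}_{\nu...}(Ch_k)$, with the tensorial transformation law already built into the very definition of $D'^{m}_{n}(M)$. All that remains is to upgrade the integral (distributional) form of that law to a pointwise identity on the complement of $\tilde M(Ch_1)\cup\tilde M(Ch_2)$.

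Fix $\Omega_{Ch}$ and two charts $Ch_1(\Omega_{Ch}), Ch_2(\Omega_{Ch})$ from the domain of $T$. By the $D'^{m}_{n E}(M)$ hypothesis, for every test $4$-form $\Psi\in C^{P}(\Omega_{Ch})$ the value $T(Ch_k,\Psi)$ equals the integral of $T(Ch_k)$ against $\Psi'(Ch_k)$ in the coordinates $Ch_k$. The second bullet in the definition of $D'^{m}_{n}(M)$ provides the distributional identity
$$
T^{\mu...}_{\nu...}(Ch_1,\Psi)=T^{\alpha...}_{\beta...}\bigl(Ch_2,\,J^{\mu}_{\alpha}\cdots(J^{-1})^{\beta}_{\nu}\cdots\Psi\bigr),
$$
valid for all such $\Psi$. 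Boundedness of $J$ and $J^{-1}$ on compacts (part of the $(M,A)$ axioms) guarantees that a genuine change of variables between $Ch_1$- and $Ch_2$-coordinates is legal, so both sides can be rewritten as integrals with respect to the same measure in the chart $Ch_2$.

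Now take any point $q\in\Omega_{Ch}\setminus(\tilde M(Ch_1)\cup\tilde M(Ch_2))$ which in addition is a point of smoothness of the transition map between $Ch_1$ and $Ch_2$; this excludes a further set of L-measure zero, which may be absorbed into $\tilde M(Ch_1)\cup\tilde M(Ch_2)$ by the ``continuous to a maximally possible degree'' convention. At such $q$ both matrices $T(Ch_1)$ and $T(Ch_2)$ are continuous in the $\Lambda$ sense (they satisfy the one-sided Lipschitz bound from Definition \ref{Lambda}), and the Jacobians are defined and continuous. Apply the identity to a delta-localizing sequence $\psi_{\epsilon}(q)\in A^{n}(Ch'(q,\Omega_{Ch}),\tilde A)$; by essentially the same squeezing argument used in the proof of association1, both sides converge as $\epsilon\to 0$ to the corresponding pointwise values contracted with the pointwise Jacobians. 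This yields the pointwise identity
$$
T^{\mu...}_{\nu...}(Ch_1)(q)=J^{\mu}_{\alpha}(q)\cdots(J^{-1})^{\beta}_{\nu}(q)\cdots T^{\alpha...}_{\beta...}(Ch_2)(q),
$$
and by association1 the two sides are precisely the values of $A_s(T)$ expressed in $Ch_1$ and $Ch_2$ at $q$.

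Since the transition between the two charts and its inverse are piecewise smooth by the very construction of the atlas $A$, the Jacobian obtained is itself piecewise smooth, so we have the claimed transformation of $A_s(T)$ on $M\setminus(\tilde M(Ch_1)\cup\tilde M(Ch_2))$. The main obstacle I foresee is the delta-localization step: one has to check carefully that performing the change of variables against the $Ch_1$-centred delta sequence $\psi_{\epsilon}(q)$ weighted by $J^{\mu}_{\alpha}\cdots(J^{-1})^{\beta}_{\nu}$ really does produce, in $Ch_2$-coordinates, a delta-like sequence whose limit extracts the pointwise Jacobian factors at $q$. This is where the $\Lambda$-Lipschitz bound on $T(Ch_2)$ and the continuity of $J$ at $q$ are essential in order to commute the pointwise limit with the integral, in exact parallel with the computation that already appears in the proof of association1.
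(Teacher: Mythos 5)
Your proposal is correct, and its overall skeleton (identify $A_s(T)$ in each chart with the matrix $T(Ch_k)$ via the preceding theorem, then invoke the transformation law, then discard an L-measure-zero union) is the same as the paper's; but the middle step is genuinely different. The paper's own proof is one line: it leans on the fact that for $T\in D'^{m}_{n E}(M)$ the chart matrices $T(Ch_k)$ are, by the notational conventions of Section 4.3, already obtained from a \emph{tensor field on $M$}, so the pointwise tensorial transformation with piecewise smooth Jacobians is built in, and all that remains is the observation that $\tilde M(Ch_1)\cup\tilde M(Ch_2)$ still has L-measure zero. You instead re-derive that pointwise law from the distributional transformation identity in the definition of $D'^{m}_{n}(M)$, by testing it against the delta-localizing sequences $\psi_{\epsilon}(q)$ and repeating the squeeze/continuity argument of the previous theorem on both sides, using the $\Lambda$-bound on $T(Ch_2)$ together with continuity and boundedness of the Jacobians at points where the transition map is smooth. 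This costs more work and forces you to exclude one further L-measure-zero set (the non-smoothness set of the transition, where $J$ is undefined) — which is legitimate given the ``some $M/(\tilde M(Ch_1)\cup\tilde M(Ch_2))$'' phrasing and the footnote, though attributing that exclusion to the ``continuous to a maximally possible degree'' convention is a slight misnomer, since that convention concerns redefining values by limits rather than deleting sets. What your route buys is independence from the convention that $D'^{m}_{n E}(M)$ matrices come from a pointwise tensor field: it shows the pointwise law is forced by the distributional transformation property plus $\Lambda$-continuity alone, which is in fact close in spirit to the paper's later (unproven) conjecture that associated fields of $D'^{m}_{n (A'o)}(M)$ objects transform tensorially.
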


\begin{proof}
All this immediately follows from what was done in
the previous proof, and from the fact that union of sets with Lebesgue measure 0 has Lebesgue measure 0.
\end{proof}

Note, that if there exists such point that for the object ~$\Gamma^{m}_{A}(M)$ we have

\begin{equation*}
\lim_{\epsilon\to 0}
T^{\mu...}_{\nu...}(\omega_{\epsilon})=\pm\infty
\end{equation*}
at that
point, then the field associated to this object, can be associated to
another object, which is nonequivalent to this object. This means that the same ~field can be
associated to mutually non-equivalent elements of ~$\Gamma^{m}_{A}(M)$. This is
explicitly shown and proven by the next example.

\newtheorem{dfunction}[association1]{Theorem}

\begin{dfunction}
 Take $\delta(q,Ch_{k}(\Omega_{Ch}))\in D'_{(\mathcal{\mathcal{\mathcal{\mathcal{S}}}} o)}(M)$ being defined as mapping from each 4-form ~$C^{P}_{S (\mathcal{\mathcal{\tilde S}})}(M)~(\mathcal{\mathcal{\mathcal{\mathcal{S}}}}\subseteq\mathcal{\mathcal{\tilde S}})$ ~to the
 value of this form's density at the point $q$ in the chart ~$Ch_{k}(\Omega_{Ch})\in\mathcal{\mathcal{\mathcal{\mathcal{\mathcal{S}}}}}$,~ ($q\in\Omega_{Ch}$). Then any
 power ~$n\in\mathbb{N}_{+}$ of $\delta(q,Ch_{k}(\Omega_{Ch}))$ is associated to the function being defined on the domain $M\setminus\{q\}$ and everywhere 0. Note that this function is associated to any power ($n\in\mathbb{N}_{+}$) (being a nonzero natural number) of $\delta(q)$, but different powers of $\delta(q)$ are mutually nonequivalent\footnote{It is hard to find in our theory a more ``natural'' definition generalizing the concept of delta function from $\mathbb{R}^{n}$. But there is still another natural generalization: it is an object from ~$\Gamma^{0}_{(\cup_{n}\mathcal{\mathcal{\mathcal{\mathcal{S}}}}_{n}o)}(M)$, defined as:~
$\delta(Ch_{k}(\Omega_{Ch}),q,\omega)=\omega'\left[(Ch_{k})(\tilde
q)\right]$,
~$Ch_{k}(\Omega_{Ch})\in\mathcal{\mathcal{\mathcal{\mathcal{\mathcal{S}}}}}_{n}$,
~$\omega\in C^{P}_{S (\mathcal{\tilde S}_{n})}(\Omega_{Ch})$
~($\mathcal{\mathcal{\mathcal{\mathcal{S}}}}_{n}\subset\mathcal{\mathcal{\tilde
S}}_{n}$), ~$\tilde q$ is image of $q$ given by the chart mapping
$Ch_{k}(\Omega_{Ch})$. So it gives value of the density $\omega'$ in
the chart $Ch_{k}(\Omega_{Ch})$, at the chart image of the point
$q$.}.\label{dfunction}
\end{dfunction}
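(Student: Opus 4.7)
The plan is to split the statement into two independent claims: (i) the associated field of $\delta(q,Ch_k)^n$ exists and equals $0$ on $M\setminus\{q\}$, and (ii) $\delta^m\not\approx\delta^n$ whenever $m\neq n$ in $\mathbb{N}_+$. Both will be handled by reducing the definitions to an explicit computation using the scaling form of the delta sequences.

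For (i), I would verify the hypotheses of Definition~4.2 directly. Fix $p\in\Omega_{Ch}\setminus\{q\}$, a chart $Ch'(p,\Omega_{Ch})\in A'\subset\tilde A$, and any $\Psi_\epsilon\in A^n(Ch'(p,\Omega_{Ch}),\tilde A)$. By the scaling condition (b) in the definition of $A^n$, the 4-form $\Psi_\epsilon$ is given in $Ch'(p,\Omega_{Ch})$ by $\epsilon^{-4}\Psi'(x/\epsilon)$ with $\Psi'$ compactly supported, so the support of $\Psi_\epsilon$ as a $4$-form on $M$ shrinks to $\{p\}$. Since $p\neq q$, there is an $\epsilon_0>0$ such that $q\notin\supp(\Psi_\epsilon)$ for all $\epsilon<\epsilon_0$; in particular the density of $\Psi_\epsilon$ in any chart, evaluated at $q$, vanishes identically. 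Hence $\delta(q,Ch_k)(\Psi_\epsilon)=0$ and by pointwise multiplication in the algebra $\delta^n(\Psi_\epsilon)=0$ for all small $\epsilon$. The limit is $0$, independently of the chart and delta-sequence choices, so the associated field is the zero function on $M\setminus\{q\}$. At $q$ itself the density behaves like $\epsilon^{-4}\Psi'(0)$, which typically diverges, so the associated field is genuinely undefined there, as the statement allows.

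For (ii), I would fix $m>n\geq 1$ and produce an explicit choice of data for which the equivalence integral fails to vanish. Take $\Omega_{Ch}\ni q$, choose the test chart $Ch(\Omega_{Ch})=Ch_k(\Omega_{Ch})$, and pick $Ch'(q,\Omega_{Ch})$ that agrees with $Ch_k$ up to a translation sending the image of $q$ to the origin. For any $\Psi_\epsilon\in A^N(Ch'(q,\Omega_{Ch}),\tilde A)$ the translated family $\psi_\epsilon(y)$ has density $\epsilon^{-4}\Psi'((x-y)/\epsilon)$ in $Ch_k$, so at the point $q$ (coordinate $0$) one gets $\delta^m(q,Ch_k)(\psi_\epsilon(y))=\epsilon^{-4m}\Psi'(-y/\epsilon)^m$. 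Integrating against a test function $\phi\in D(\mathbb{R}^4)$ and substituting $z=y/\epsilon$ yields
\begin{equation*}
\int F'(\delta^m,y)\phi(y)\,d^4y \;=\; \epsilon^{4(1-m)}\!\int \Psi'(-z)^m\,\phi(\epsilon z)\,d^4z,
\end{equation*}
whose leading behaviour as $\epsilon\to 0$ is $\epsilon^{4(1-m)}\phi(0)\int\Psi'(-z)^m\,d^4z$. The analogous expression for $\delta^n$ scales as $\epsilon^{4(1-n)}$. Since $m\neq n$ the two terms have different orders in $\epsilon$, so the difference cannot go to zero for any $N$, provided the leading coefficients $\int\Psi'(-z)^m\,d^4z$ and $\int\Psi'(-z)^n\,d^4z$ can be arranged not to vanish simultaneously with $\phi(0)\neq 0$. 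This contradicts the defining property of $\approx$ and gives $\delta^m\not\approx\delta^n$.

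The main obstacle I expect is the last step of (ii): exhibiting, for arbitrarily large $N$, a representative $\Psi'\in A^N$ for which $\int\Psi'(-z)^m\,d^4z\neq 0$, since the moment conditions of $A^N$ force $\Psi'$ to oscillate and might conspire to kill the nonlinear moment. However, by the choice of chart I have reduced the computation exactly to the $\mathbb{R}^4$ Colombeau computation that is used to distinguish powers of the Dirac distribution, so the required nondegeneracy is available from the classical theory and can be quoted rather than reproved.
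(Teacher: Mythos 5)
Your proof is correct and takes essentially the same route as the paper's: the zero associated field on $M\setminus\{q\}$ comes from the shrinking supports of delta sequences centred at points $q'\neq q$ (so $\delta(\Psi_\epsilon)=0$ for small $\epsilon$ and hence all powers vanish), and the nonequivalence of distinct powers comes from the same scaling computation, in which the difference integrated against $\phi$ grows like $\epsilon^{4(1-m)}\phi(0)\int\Psi'^{m}$. If anything you are more careful than the paper about the possible vanishing of the nonlinear moments $\int\Psi'^{m}$ (the paper just asserts the limit is ``clearly divergent''), and your way of closing that gap --- quoting the classical $\mathbb{R}^4$ Colombeau computation, or noting that $\int\Psi'^{m}>0$ for even $m$ and perturbing within $A^{N}$ otherwise --- is sound.
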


\begin{proof}
Contracting powers of
~$\delta(q,Ch_{k}(\Omega_{Ch}))$ with a sequence of 4-forms from
arbitrary ~$A^{n}(Ch'(q',\Omega'_{Ch}),\mathcal{\mathcal{\tilde S}}), ~(q'\neq q)$ (they
have a support converging to another point than $q$) will give
0. For $q=q'$ (\ref{limit}) gives
\[\lim_{\epsilon\to 0}~\epsilon^{-4}~\omega'\left[(Ch_{k})(0)\right]=\pm\infty~.\]
 Now explore the equivalence between different powers of ~$\delta(Ch_{k},q)$.\\ $\delta^{n}(q,Ch_{k}(\Omega_{Ch}))$ applied to
~$\omega_{\epsilon}(x)\in A^{n}(Ch'_{k}(q,\Omega_{Ch}),\mathcal{\mathcal{\tilde S}})$ will
lead to the expression
~$\epsilon^{-4n}\omega^{n}(\frac{x}{\epsilon})$. Then if we
want to compute

\begin{equation*}
\lim_{\epsilon\to
0}\int_{Ch'(\Omega_{Ch})}\left(\frac{1}{\epsilon^{4n}}~\omega^{n}\left(\frac{\mathbf{x}}{\epsilon}\right)-\frac{1}{\epsilon^{4m}}~\omega^{m}\left(\frac{\mathbf{x}}{\epsilon}\right)\right)~\Phi(\mathbf{x})~d^{4}x
\end{equation*}
it leads to

\begin{equation*}
 \lim_{\epsilon\to
0}~\frac{1}{\epsilon^{4m-4}}~\Phi(0)~\int_{Ch'(\Omega_{Ch})}\left(\omega^{n}(\mathbf{x})-\frac{1}{\epsilon^{4(n-m)}}~\omega^{m}(\mathbf{x})\right)~d^{4}x
\end{equation*}

 which is for $n\neq m, ~~n,m\in\mathbb{N}_{+}$ clearly
divergent, hence nonzero.
\end{proof}

Note that despite of the fact that within our algebras we, naturally, have all the $n\in\mathbb{N_+}$ powers of the delta
distribution, they are for $n>1$, ~unfortunately, ~\emph{not} equivalent to any distribution.\newline

\newtheorem{nsmooth}[association1]{Theorem}

\begin{nsmooth}
 We see that the map $A_{s}$ is linear (in the sense analogous to \ref{linear}), and for arbitrary number of
~$g^{\mu...}_{\nu...},...,h^{\mu...}_{\nu...}\in\Lambda\cap\Gamma^{m}_{E (\cup_{n}At(\mathcal{\mathcal{\tilde S}}_{n})o)}(M)$ one has:
Take $\forall\Omega_{Ch}$, ~$\forall n$, ~$\forall \mathcal{\mathcal{\mathcal{\mathcal{S}}}}\subset\mathcal{\mathcal{\tilde S}}_{n}\cap At(\mathcal{\mathcal{\tilde S}})$, ~$\forall Ch_{k}(\Omega_{Ch})\in\mathcal{\mathcal{\mathcal{\mathcal{\mathcal{S}}}}}$,~ $\cup_{i}\tilde\Omega_{i}(Ch_{k})$~ to be the union of all $\tilde\Omega_{i}(Ch_{k})$~ related to the objects $g^{\mu...}_{\nu...},...,h^{\mu...}_{\nu...}$. Then
\[A_{s}(g^{\alpha...}_{\beta...}\otimes...\otimes
h^{\mu...}_{\nu...})=A_{s}(g^{\alpha...}_{\beta...})\otimes...\otimes
A_{s}(h^{\mu...}_{\nu...})\]
on $\Omega_{Ch}\setminus \cup_{i}\tilde\Omega_{i}(Ch_{k})$. ~Here the
first term is a product between $\Lambda$ objects and the second is the
classical tensor product.
\end{nsmooth}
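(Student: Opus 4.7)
The plan is to reduce everything to the previous association theorem applied factor by factor, then exploit the fact that tensor multiplication in $\Gamma^{m}_{A}(M)$ is defined pointwise in the $4$-form argument. Fix an arbitrary $\Omega_{Ch}$, an integer $n$, a subatlas $A'\subset\tilde A_{n}\cap At(\tilde A_{n})$ and a chart $Ch_{k}(\Omega_{Ch})\in A'$ common to all of $g^{\alpha\ldots}_{\beta\ldots},\ldots,h^{\mu\ldots}_{\nu\ldots}$; set $N:=\cup_{i}\tilde M_{i}(Ch_{k})$, which has $L$-measure $0$ as a finite union of $L$-measure-zero sets. Pick $q\in\Omega_{Ch}\setminus N$, an arbitrary $Ch'(q,\Omega_{Ch})\in A'$, and a delta sequence $\Psi_{\epsilon}\in A^{n}(Ch'(q,\Omega_{Ch}),\tilde A)$ with the corresponding $\psi_{\epsilon}(y)\in C^{P}_{S(\tilde A)}(\Omega_{Ch})$ from definition~\ref{basic}.

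First I would recall that, by the construction of the tensor algebra $D'^{m}_{n(A')A}(M)$ (inherited from the scalar pointwise definition $(A\cdot B,\psi)\doteq(A,\psi)\cdot(B,\psi)$), one has the identity
\begin{equation*}
(g^{\alpha\ldots}_{\beta\ldots}\otimes\cdots\otimes h^{\mu\ldots}_{\nu\ldots})(Ch_{k},\psi_{\epsilon}(q))=g^{\alpha\ldots}_{\beta\ldots}(Ch_{k},\psi_{\epsilon}(q))\otimes\cdots\otimes h^{\mu\ldots}_{\nu\ldots}(Ch_{k},\psi_{\epsilon}(q))
\end{equation*}
for every $\epsilon$ small enough that all factors lie in the relevant $C^{P}_{S(\tilde A_{n})}(\Omega_{Ch})$ class. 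Because each factor belongs to $\Lambda\cap\Gamma^{m}_{E(\cup_{n}At(\tilde A_{n})o)}(M)$ and $q$ lies outside its discontinuity set $\tilde M_{i}(Ch_{k})$, Theorem~\ref{association1} applies to each factor separately and yields
\begin{equation*}
\lim_{\epsilon\to 0}g^{\alpha\ldots}_{\beta\ldots}(Ch_{k},\psi_{\epsilon}(q))=g^{\alpha\ldots}_{\beta\ldots}(Ch_{k})(q)=A_{s}(g^{\alpha\ldots}_{\beta\ldots})(Ch_{k},q),
\end{equation*}
and similarly for the other factors; in particular each limit is a finite real number independent of the choice of $\Psi_{\epsilon}$ and of $Ch'(q,\Omega_{Ch})$.

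Next I would pass to the limit in the pointwise product. Since tensor multiplication of finite-dimensional multi-index matrices is (jointly) continuous in its arguments, the convergence of each factor above is enough to conclude
\begin{equation*}
\lim_{\epsilon\to 0}(g^{\alpha\ldots}_{\beta\ldots}\otimes\cdots\otimes h^{\mu\ldots}_{\nu\ldots})(Ch_{k},\psi_{\epsilon}(q))=A_{s}(g^{\alpha\ldots}_{\beta\ldots})(Ch_{k},q)\otimes\cdots\otimes A_{s}(h^{\mu\ldots}_{\nu\ldots})(Ch_{k},q).
\end{equation*}
Because this limit is the same for every admissible $\Psi_{\epsilon}$ and every centring chart $Ch'(q,\Omega_{Ch})\in A'$ (the factor-wise limits already enjoyed that property), the very definition of $A_{s}$ identifies the left-hand side with $A_{s}(g^{\alpha\ldots}_{\beta\ldots}\otimes\cdots\otimes h^{\mu\ldots}_{\nu\ldots})(Ch_{k},q)$. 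Since $q\in\Omega_{Ch}\setminus N$ was arbitrary and $\Omega_{Ch}$, $Ch_{k}$ were arbitrary, the claimed identity holds on $\Omega_{Ch}\setminus\cup_{i}\tilde M_{i}(Ch_{k})$.

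The main (minor) obstacle is purely bookkeeping: one must check that for $\epsilon$ small enough the $4$-form $\psi_{\epsilon}(q)$ genuinely lies in the common domain of all the factors, i.e.\ in some $C^{P}_{S(\tilde A_{l})}(\Omega_{Ch})$ with $Ch_{k}\in At(\tilde A_{l})$, so that the pointwise product formula is legitimately applicable. This is guaranteed by the hypothesis $g,\ldots,h\in\Gamma^{m}_{E(\cup_{n}At(\tilde A_{n})o)}(M)$ together with the construction of $A^{n}(Ch'(q,\Omega_{Ch}),\tilde A)$, whose elements are smooth and compactly supported in the specified $\tilde A$. After this verification, the rest is just the trivial continuity of the tensor product of finite arrays of real numbers, together with the measure-zero closure of a finite union of sets, so no further estimates are required.
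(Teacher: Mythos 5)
Your proposal is correct and follows essentially the same route as the paper's (much terser) proof: both rest on the fact that the tensor product of $\Gamma^{m}_{E}$-type objects is evaluated pointwise in the test-form argument, apply the preceding association theorem factor by factor at points off $\cup_{i}\tilde M_{i}(Ch_{k})$, and then pass to the limit using continuity of the finite product, with the independence of the factor-wise limits from $\Psi_{\epsilon}$ and $Ch'(q,\Omega_{Ch})$ carrying over to the product. The only cosmetic remarks are that your $\psi_{\epsilon}(q)$ should really just be $\Psi_{\epsilon}$ centred at $q$ (the translated family $\psi_{\epsilon}(y)$ belongs to the equivalence definition, not to the definition of $A_{s}$), and that, like the paper, you leave the easy linearity claim for $A_{s}$ implicit.
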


\begin{proof}
It is trivially connected with previous proofs:
Note that from the definition (\ref{DefAssoc}) for appropriate 4-form fields $\omega_{\epsilon}$ we have
\begin{equation}
A_{s}(g^{\alpha...}_{\beta...}\otimes...\otimes
h^{\mu...}_{\nu...})(Ch_{k})=\lim_{\epsilon\to 0}~ g^{\alpha...}_{\beta...}(Ch_{k},\omega_{\epsilon})...
h^{\mu...}_{\nu...}(Ch_{k},\omega_{\epsilon}).
\end{equation}

But for the objects $g^{\alpha...}_{\beta...},...,h^{\alpha...}_{\beta...}\in\Lambda$, with respect to the theorem (\ref{Association1}) necessarily
\begin{equation}
\lim_{\epsilon\to 0}~ g^{\alpha...}_{\beta...}(Ch_{k},\omega_{\epsilon})...
h^{\mu...}_{\nu...}(Ch_{k},\omega_{\epsilon})=A_{s}(g^{\alpha...}_{\beta...})\otimes...\otimes
A_{s}(h^{\mu...}_{\nu...}).
\end{equation}
This proves the theorem.
\end{proof}

This means that the result of tensor multiplication of elements from $\Lambda$ (it has product of two scalars as a subcase) is always equivalent to some element from $\Lambda$.
This is a result closely related to the theorem (\ref{multiplication}). It tells us that multiplication is a mapping between equivalence classes formed of more constrained classes as those mentioned in the part $a)$ of the theorem (\ref{multiplication}).

\theoremstyle{definition}
\newtheorem{conj1}{Conjecture}[section]
\begin{conj1}
 If ~$T^{\mu...}_{\nu...}\in D'^{m}_{n (\mathcal{\mathcal{\mathcal{\mathcal{S}}}} o)}(M)$ has an
associated field, then it transforms on its domains as a
tensor field.
\end{conj1}

\theoremstyle{definition}
\newtheorem{conj2}[conj1]{Conjecture}
\begin{conj2}
 If ~$T^{\mu...}_{\nu...}\in \Gamma^{m}_{(\mathcal{\mathcal{\mathcal{\mathcal{S}}}} o)}(M)$ has an
associated field and ~$L^{\alpha...}_{\beta...}\in\Gamma^{n}_{S}(M)$,
then ~$A_{s}(T^{\mu...}_{\nu...}\otimes L^{\alpha...}_{\beta...})=A_{s}(T^{\mu...}_{\nu...})\otimes A_{s}(L^{\alpha...}_{\beta...})$. ~~(The $\otimes$ sign has again slightly different meaning on the different sides of
the equation).
\end{conj2}

\theoremstyle{definition}
\newtheorem{conj3}[conj1]{Conjecture}
\begin{conj3}
Take
~$C^{\mu...}_{\nu...},D^{\mu...}_{\nu...},F^{\alpha...}_{\beta...},B^{\alpha...}_{\beta...}\in\Gamma^{m}_{A}(M)$,
~such that they belong to the same classes
~$\Gamma^{m}_{(At(\mathcal{\mathcal{\tilde S}}))A}(M)$. Also assume
that ~$\forall\mathcal{\mathcal{\tilde S}}$, such that
~$C^{\mu...}_{\nu...},D^{\mu...}_{\nu...},F^{\alpha...}_{\beta...},\\
B^{\alpha...}_{\beta...}\in\Gamma^{m}_{(At(\mathcal{\mathcal{\tilde
S}}))A}(M)$, ~and ~$\forall
\mathcal{\mathcal{\mathcal{\mathcal{S}}}}\subseteq\mathcal{\mathcal{\tilde
S}}$,~ each of the elements
~$C^{\mu...}_{\nu...},D^{\mu...}_{\nu...},F^{\alpha...}_{\beta...},B^{\alpha...}_{\beta...}$
has for every ~$At(\mathcal{\mathcal{\tilde S}})$ associated fields
defined on the whole $M$. Then ~$F^{\alpha...}_{\beta...}\approx
B^{\alpha...}_{\beta...}$ and ~$C^{\mu...}_{\nu...}\approx
D^{\mu...}_{\nu...}$ implies ~$(C\otimes
F)^{\mu...\alpha...}_{\nu...\beta...}\approx (D\otimes
B)^{\mu...\alpha...}_{\nu...\beta...}$.
\end{conj3}

\subsection{Covariant derivative}

~~~~~The last missing fundamental concept is the
covariant derivative operator on generalized tensor fields (GTF). This operator is necessary to formulate an appropriate language for physics and generalize
physical laws. Such an operator must obviously reproduce our concept
of the covariant derivative on the smooth tensor fields (through the
given association relation to the smooth manifold). This is provided
in the following section. The beginning of the first part is again devoted to
fundamental definitions. The beginning of the second part gives us fundamental
theorems, again just generalizing Colombeau results for our case. After these theorems we, (similarly to previous section),
formulate conjectures representing the very important and natural
extensions of our results (bringing a lot of new significance to
our results). The last subsection in the second part being again called ``some additional theory''
brings (analogously to previous section) just physical insight to our abstract calculus and is of lower mathematical
importance.

\subsubsection{Definition of  $\partial$-derivative and connection coefficients}

\theoremstyle{definition}
\newtheorem{defn15}{Definition}[section]
\begin{defn15}
 We define a map, called the $\partial$-derivative, given by smooth vector
field $U^{i}$ (smooth in the atlas $\mathcal{\mathcal{\mathcal{\mathcal{\mathcal{S}}}}}$)~ as a mapping
$\Gamma^{m}_{(\mathcal{\mathcal{\mathcal{\mathcal{S}}}})}(M)\to \Gamma^{m}_{(\mathcal{\mathcal{\mathcal{\mathcal{S}}}})}(M)$,
given on its domain ~$\forall\Omega_{Ch}$~ and ~$Ch_{k}(\Omega_{Ch})$ as:
\begin{equation*}
T^{\mu...}_{\nu...~,(U)}(Ch_{k},\omega)\equiv
-T^{\mu...}_{\nu...}(Ch_{k},(U^{\alpha}\omega)_{,\alpha})~~~~~~\omega\in
C^{P}(\Omega_{Ch}).
\end{equation*}
Here
~$(U^{\alpha}\omega)_{,\alpha}$ is understood in the following way: We
express ~$U^{\alpha}$ in the chart ~~$Ch_{k}(\Omega_{Ch})$ ~~and take the derivatives
~~$\left(U^{\alpha}(Ch_{k})~\omega'(Ch_{k})\right)_{,\alpha}$~~ in the same chart\footnote{We will further express that the derivative is taken in $Ch_{k}(\Omega_{Ch})$ by using the notation $(U^{\alpha}\omega'(Ch_{k}))_{, [(Ch_{k})\alpha]}$.~} ~$Ch_{k}(\Omega_{Ch})$.~ They give us some
function in $Ch_{k}(\Omega_{Ch})$, which can be (in this chart) taken as expression for density of some object from
~$C^{P}(\Omega_{Ch})$. This means we trivially extend it to $M$ by taking it to be 0
everywhere outside ~$\Omega_{Ch}$.~ This is the object used as an argument in $T^{\mu...}_{\nu...}$.
\end{defn15}

To make a consistency check: This ~$T^{\mu...}_{\nu...,(U)}$~ is an
object which is defined at least on the domain ~$C^{P}_{S
(\mathcal{\tilde S})}(M)$~ for
~$\mathcal{\mathcal{\mathcal{\mathcal{S}}}}\subseteq\mathcal{\mathcal{\tilde
S}}$~ ($\mathcal{\mathcal{\mathcal{\mathcal{\mathcal{S}}}}}$
related to $U^{i}$ in the sense that $U^{i}$ is smooth in
$\mathcal{\mathcal{\mathcal{\mathcal{\mathcal{S}}}}}$),~ and is
continuous on the same domain. This means it belongs to the class
~$\Gamma^{n}(M)$.~ To show this take some arbitrary ~$\Omega_{Ch}$
and some arbitrary chart
~$Ch_{k}(\Omega_{Ch})\in\mathcal{\mathcal{\mathcal{\mathcal{\mathcal{S}}}}}$.
We see that within ~$Ch_{k}(\Omega_{Ch})$ the expression
~$(U^{\alpha}\omega'),_{\alpha}$ is smooth and describes 4-forms,
which are compactly supported, with their support being subset of
$\Omega_{Ch}$. Hence they are from the domain of
~$T^{\mu...}_{\nu...}$ in every chart from
$\mathcal{\mathcal{\mathcal{\mathcal{\mathcal{S}}}}}$. In any
arbitrary chart from
$\mathcal{\mathcal{\mathcal{\mathcal{\mathcal{S}}}}}$ we
trivially observe, (from the theory of distributions), that if
~$\omega_{n}\to\omega$, than ~$T^{\mu...}_{\nu...,
(U)}(\omega_{n})\to T^{\mu...}_{\nu...,(U)}(\omega)$. ~It means that
$T^{\mu...}_{\nu...,(U)}(\omega)$ is continuous.

\theoremstyle{definition}
\newtheorem{defn16}[defn15]{Definition}
\begin{defn16}
 Now, by generalized connection we denote an
object from ~$\Gamma^{3}(M)$ such that:
\begin{itemize}
\item
The set
\[\cup_{\Omega_{Ch}}\cup_{\mathcal{\mathcal{\tilde S}}}\bigg\{\big(\omega,Ch(\Omega_{Ch})\big):\omega\in C^{P}_{S(\mathcal{\mathcal{\tilde S}})}(\Omega_{Ch}), ~~Ch(\Omega_{Ch})\in\mathcal{A}\bigg\}\]
 belongs to its domain.
\item
It is ~$\forall\mathcal{\mathcal{\tilde S}}$,~ $\forall\Omega_{Ch}$,~ $\forall Ch_{k}(\Omega_{Ch})\in\mathcal{A}$~
continuous ~on ~$C^{P}_{S(\mathcal{\mathcal{\tilde S}})}(\Omega_{Ch})$ ~with ~$Ch_{k}(\Omega_{Ch})$ ~taken as its argument.
\item
It transforms as:
\begin{equation}
\Gamma^{\alpha}_{\beta\gamma}(Ch_{2},\omega)=
\Gamma^{\mu}_{\nu\delta}(Ch_{1},((J^{-1})^{\nu}_{\beta}(J^{-1})^{\delta}_{\gamma}J^{\alpha}_{\mu}-J^{\alpha}_{m}(J^{-1})^{m}_{\beta,\gamma})~\omega).
\end{equation}
\end{itemize}
\end{defn16}

\subsubsection{Definition of covariant derivative}

\theoremstyle{definition}
\newtheorem{defn17}[defn15]{Definition}
\begin{defn17}
 By a covariant derivative ~ (on $\Omega_{Ch}$) in the
direction of a vector field ~$U^{i}(M)$, smooth with respect to atlas $\mathcal{\mathcal{\mathcal{\mathcal{\mathcal{S}}}}}$, of an object
~$T^{\mu...}_{\nu...}\in \Gamma^{m}_{(\mathcal{\mathcal{\mathcal{\mathcal{S}}}})}(\Omega_{Ch})$,~ we mean:
\begin{equation}
D_{C (U)}T^{\mu...}_{\nu...}(\omega)\equiv
T^{\mu...}_{\nu...~,(U)}(\omega)+\Gamma^{\mu}_{\alpha\rho}(\omega)T^{\rho...}_{\nu...}(U^{\alpha}\omega)-\Gamma^{\alpha}_{\nu\rho}(\omega)T^{\mu...}_{\alpha...}(U^{\rho}\omega).\label{cov}
\end{equation}

\end{defn17}

\medskip

This definition (\ref{cov}) automatically defines covariant derivative everywhere on
~$\Gamma^{m}_{(\mathcal{\mathcal{\mathcal{\mathcal{S}}}})}(M)$. This can be easily observed: The
$\partial$-derivative still gives us an object from ~$\Gamma^{m}_{(\mathcal{\mathcal{\mathcal{\mathcal{S}}}})}(M)$, and the
second term containing generalized connection is from
~$\Gamma^{m}_{(\mathcal{\mathcal{\mathcal{\mathcal{S}}}})}(M)$ trivially too.

\theoremstyle{definition}
\newtheorem{defn177}[defn15]{Definition}
\begin{defn177}
Furthermore, let us
extend the definition of covariant derivative to the class ~$\Gamma^{m}_{
(\mathcal{\mathcal{\mathcal{\mathcal{S}}}})A}(M)$ just by stating that on every nonlinear object (note that every such object is constructed by tensor product of linear objects) it is defined by the Leibniz rule. (This means
it is a standard derivative operator, since it is trivially linear
as well.)
\end{defn177}

\subsubsection{The $S'_n$ class}

\paragraph{Notation.}
Take as ~$\mathcal{D}_{n}$ some $n$-times continuously
differentiable subatlas of $\mathcal{A}$. Then the class $S'_{n}$ related to the atlas ~$\mathcal{D}_{n}$~
is formed by objects ~$T^{\mu...}_{\nu...}\in\Gamma^{a}_{E
(\cup_{m}At(\mathcal{\mathcal{\tilde S}}_{m})o)A}(M)$, such that:
\begin{itemize}
\item
$\cup_{m}\mathcal{\mathcal{\tilde S}}_{m}=\mathcal{D}_{n}$~ and
~$\forall m,~~At(\mathcal{\mathcal{\tilde
S}}_{m})=\mathcal{D}_{n}$,
\item
it is given  $\forall\Omega_{Ch}$, ~$\forall
Ch_{k}(\Omega_{Ch})\in\mathcal{D}_{n}$, ~$\forall m$, ~$\forall\omega\in
C^{P}_{S (\mathcal{\mathcal{\tilde S}}_{m})}(\Omega_{Ch})$ as a map
\[(\omega, Ch_{k})\to\int_{\Omega_{Ch}}T^{\mu...}_{\nu...}(Ch_{k})~\omega.~~~~~~~\]

 Here $T^{\mu...}_{\nu...}(Ch_{k})$ is a multi-index matrix of $n$-times continuously differentiable functions (if it is being expressed in any arbitrary chart from $\mathcal{D}_{n}$).
\end{itemize}

\subsubsection{Basic equivalence relations related to differentiation and some of the
interesting conjectures}

\newtheorem{lcovder1}{Theorem}[section]

\begin{lcovder1}
The following statements hold:
\begin{itemize}
\item[a)] Take a vector field ~$U^{i}$, which is smooth at
 ~$\mathcal{\mathcal{\mathcal{\mathcal{S}}}}\subset \mathcal{D}_{n+1}\subset \mathcal{D}_{n}$ for $n\geq 1$, ~(formally including also ~$n=n+1=\infty$, hence $\mathcal{\mathcal{\mathcal{\mathcal{\mathcal{S}}}}}$), plus a generalized
 connection
 ~$\Gamma^{\mu}_{\nu\alpha}\in\Gamma^{3}_{E}(M)$,~ and
~$T^{\mu...}_{\nu...}\in D'^{a}_{b}(M)\cap S'_{n}$. ~ $S'_{n}$ is
here related to the given ~$\mathcal{D}_{n}$. Then it follows that:
$D_{C(U)}T^{\mu...}_{\nu...}$ is an object from $S'_{n+1}$ ~(being
related to the given $\mathcal{D}_{n+1}$). Moreover, the equivalence
class ~$\tilde S'_{n+1}$ of the image contains exactly one linear
element given for every chart from its domain as integral from some
multi-index matrix of piecewise continuous functions. Particularly
this element is for $\forall\Omega_{Ch}$ and arbitrary such
$\omega\in C^{P}(\Omega_{Ch})$, ~$Ch_{k}(\Omega_{Ch})$ that are from
its domain a map:
\begin{equation}
(Ch_{k},\omega)\to\int
U^{\alpha}T^{\mu...}_{\nu...;\alpha}(Ch_{k})~\omega.
\end{equation}
 ~(Here ``$;$'' means the
classical covariant derivative related to the ``classical'' connection, components of which are in $Ch_{k}$ given by
~$\Gamma^{\alpha}_{\beta\delta}(Ch_{k})$, and the tensor field, components of which are in $Ch_{k}$ given by $T^{\mu...}_{\nu...}(Ch_{k})$.)
\item[b)] Take $U^{i}$ being smooth in $\mathcal{\mathcal{\mathcal{\mathcal{\mathcal{S}}}}}$ and
~$\Gamma^{\mu}_{\nu\alpha}\in\Gamma^{3}_{S}(M)$. Then the following holds:
The covariant derivative is a map  ~$D'^{m}_{n (\mathcal{\mathcal{\mathcal{\mathcal{S}}}}
o)}(M)\to\tilde\Gamma^{m+n}_{ (\mathcal{\mathcal{\mathcal{\mathcal{S}}}} o)A}(M)$~ and the classes
~~$\tilde\Gamma^{m+n}_{ (\mathcal{\mathcal{\mathcal{\mathcal{S}}}} o)A}(M)$ ~of the image contain (exactly)
one element of $D'^{m}_{n (\mathcal{\mathcal{\mathcal{\mathcal{S}}}} o)}(M)$.
\end{itemize}\label{lcovder1}
\end{lcovder1}

\begin{proof}

\begin{itemize}
\item[a)]Covariant derivative is, on its domain, given as
\begin{equation*}
(\omega, Ch_{k})\to T^{\mu...}_{\nu...~,(U)}(\omega)+\Gamma^{\mu}_{\alpha\rho}(\omega)T^{\rho...}_{\nu...}(U^{\alpha}\omega)-\Gamma^{\alpha}_{\nu\rho}(\omega)T^{\mu...}_{\alpha...}(U^{\rho}\omega).
\end{equation*}

Take the first term ~$T^{\mu...}_{\nu...,(U)}$. Express it on
$\Omega_{Ch}$ in ~$Ch_{k}(\Omega_{Ch})\in  \mathcal{D}_{n}$ as the
map:
\[\omega\to\int_{Ch_{k}(\Omega_{Ch})}T^{\mu....}_{\nu...}(Ch_{k})~\omega'(Ch_{k})~d^{4}x.\]
 ~So analogously:
\[T^{\mu...}_{\nu...,(U)}(Ch_{k},\omega)=
-\int_{Ch_{k}(\Omega_{Ch})}T^{\mu...}_{\nu...}(Ch_{k})\left(U^{\alpha}(Ch_{k})~\omega'(Ch_{k})\right)_{,[(Ch_{k})\alpha]}~d^{4}x.\]

Here $\omega'$ is in arbitrary chart from ~$\mathcal{D}_{n}$,
being~ $n$-times continuously differentiable, (the domain is limited
to such objects by the second covariant derivative term, which is
added to the $\partial$-derivative), and such that the expression
~$(U^{i}(Ch_{k})~\omega'(Ch_{k}))_{,[(Ch_{k})i]}$ is in any
$Ch_{k}(\Omega_{Ch})\in\mathcal{D}_{n}$~ $n$-times continuously
differentiable.

Now by using integration by parts,
(since all the objects under the integral are at least continuously
differentiable ($n\geq 1$), it can safely be used), and considering the compactness of
support we obtain:

\begin{eqnarray}
T^{\mu...}_{\nu...,(U)}(Ch_{k},\omega)~~~~~~~~~~~~~~~~~~~~~~~~~~~~~~~~~~~~~~~~~~~~~~~~~~~~~~~~~~~~~~~~~~~~~~~~~~~~~~~~~\nonumber\\
=-\int_{Ch_{k}(\Omega_{Ch})}T^{\mu...}_{\nu...}(Ch_{k})\left(U^{\alpha}(Ch_{k})~\omega'(Ch_{k})\right)_{,[(Ch_{k})\alpha]}~d^{4}x\nonumber\\ =\int_{Ch_{k}(\Omega_{Ch})}T^{\mu...}_{\nu...}(Ch_{k})_{,[(Ch_{k})\alpha]}U^{\alpha}(Ch_{k})~\omega'(Ch_{k})~d^{4}x.~~~
\end{eqnarray}
Then it holds that:
\begin{eqnarray}
T^{\mu...}_{\nu...,(U)}(Ch_{m}(\Omega_{Ch}),\omega)~~~~~~~~~~~~~~~~~~~~~~~~~~~~~~~~~~~~~~~~~~~~~~~~~~~~~~~~~~~~~~~~~~~~~~~~~\nonumber\\
=\int_{Ch_{m}(\Omega_{Ch})}T^{\mu...}_{\nu...}(Ch_{m})_{,[(Ch_{m})\alpha]}(U^{\alpha}(Ch_{m})~\omega'(Ch_{m})~d^{4}x~~~~~~~~~~~~~~~~~~~~\nonumber\\
=\int_{Ch_{k}(\Omega_{Ch})}(J^{\mu}_{\beta}(J^{-1})
^{\delta}_{\nu}....T^{\delta...}_{\beta...}(Ch_{k}))_{,[(Ch_{k})\alpha]}U^{\alpha}(Ch_{k})~\omega'(Ch_{k})~d^{4}x.~~~~~
\end{eqnarray}
We see that this is defined and continuous in any arbitrary chart
from $\mathcal{D}_{n+1}$ for every $C^{P}_{S
(\mathcal{\mathcal{\tilde S}})}(M)$, such that $\exists
\mathcal{\mathcal{\mathcal{\mathcal{S}}}}\subset\mathcal{\mathcal{\tilde
S}}$ and $\mathcal{\mathcal{\mathcal{\mathcal{S}}}}\subset
\mathcal{D}_{n+1}$.

Now we see that the second term in the covariant
derivative expression is equivalent to the map (with the same domain):
\begin{equation*}
(Ch_{k},\omega)\to\int(\Gamma^{\mu}_{\alpha\rho}U^{\alpha}T^{\rho...}_{\nu...}-\Gamma^{\alpha}_{\nu\rho}U^{\rho}T^{\mu...}_{\alpha...})(Ch_{k})~\omega,
\end{equation*}

(see
theorem \ref{multiplication}), and between
charts the objects appearing
inside the integral transform exactly as their classical analogues. This must hold, since $T^{\mu...}_{\nu...}$ is
everywhere continuous (in every chart considered), hence on every
compact set bounded, so the given object is well defined.
This means that when we fix this object in chart ~$Ch_{m}(\Omega_{Ch})$, and express it through the chart ~$Ch_{k}(\Omega_{Ch})$ and
Jacobians (with the integral expressed at chart
~$Ch_{k}(\Omega_{Ch})$ ), as in the previous case, we discover
(exactly as in the classical case), that the resulting object under
the integral transforms as some object ~$D'^{m}_{n E}(M)$, with the classical expression for the
covariant derivative of a tensor field appearing under the integral.
\item[b)]The resulting object is defined particularly only on ~$C^{P}_{S
(\mathcal{\mathcal{\tilde S}})}(M)$ ~($\mathcal{\mathcal{\mathcal{\mathcal{S}}}}\subset\mathcal{\mathcal{\tilde S}}$). We have to realize that
~$T^{\mu...}_{\nu...}$ can be written as a ($N\to\infty$) weak limit
(in every chart from $\mathcal{\mathcal{\mathcal{\mathcal{\mathcal{S}}}}}$) of ~$T^{\mu...}_{\nu... N}\in D'^{m}_{n S
(\mathcal{\mathcal{\mathcal{\mathcal{S}}}} o)}(M)$. It is an immediate result of previous constructions and
Colombeau theory, that

\begin{equation}
-\Gamma^{\alpha}_{\nu\rho}(.)~T^{\mu...}_{\alpha...}(U^{\rho}.)\approx -T^{\mu...}_{\alpha...}(\Gamma^{\alpha}_{\nu\rho}U^{\rho}.)
\label{object}
\end{equation}

Now take ~$\forall\Omega_{Ch}$~ both, $T^{\mu...}_{\nu...,(U)}$ and (\ref{object}) fixed in arbitrary ~$Ch_{k}(\Omega_{Ch})\in\mathcal{\mathcal{\mathcal{\mathcal{\mathcal{S}}}}}$.~ Write both of those objects as limits of integrals of some
sequence of ``smooth'' objects in ~$Ch_{k}(\Omega_{Ch})$.~ But now we
can again use for ~$T^{\mu...}_{\nu...,(U)}$~ an integration per parts and from the ``old'' tensorial relations; we get the
``tensorial'' transformation properties under the limit. This means
that the resulting object, which is a limit of those objects
transforms in the way the ~$D'^{m}_{n}(M)$ objects transform.
\end{itemize}
\end{proof}

\newtheorem {meantime}[lcovder1]{Theorem}
\begin{meantime}

Part $a)$ of the theorem (\ref{lcovder1}) can be also formulated
through a generalized concept of covariant derivative, where we do
not require the $U^{i}$ vector field to be smooth at some
$\mathcal{\mathcal{\mathcal{\mathcal{\mathcal{S}}}}}$, but it
is enough if it is $n+1$ differentiable in $\mathcal{D}_{n+1}$.

\end{meantime}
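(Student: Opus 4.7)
The plan is to revisit the proof of Theorem~\ref{lcovder1}(a) and verify that each step remains valid under the weaker hypothesis that $U^i$ is $(n{+}1)$-times continuously differentiable in $A_{n+1}$ rather than smooth in some $A'\subset A_{n+1}$. Inspection of the original argument shows that smoothness of $U^i$ enters in only two concrete places: in making sense of the simple derivative $T^{\mu\ldots}_{\nu\ldots,(U)}(Ch_k,\Psi)=-T^{\mu\ldots}_{\nu\ldots}(Ch_k,(U^\alpha\Psi)_{,\alpha})$ as an application of $T^{\mu\ldots}_{\nu\ldots}$ to a legitimate test form, and in the integration by parts that converts it into $\int T^{\mu\ldots}_{\nu\ldots}(Ch_k)_{,\alpha}U^\alpha\Psi'\,d^4x$.

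First I would extend the simple derivative to the relaxed setting. For $\Psi\in C^P_{S(\tilde A)}(\Omega_{Ch})$ with $A_{n+1}\subseteq\tilde A$, the density $\Psi'$ is smooth in every chart of $A_{n+1}$, so the pointwise product $U^\alpha(Ch_k)\Psi'(Ch_k)$ is $(n{+}1)$-times continuously differentiable with compact support inside $\Omega_{Ch}$, and $(U^\alpha\Psi')_{,(Ch_k)\alpha}$ is $n$-times continuously differentiable with compact support. Since $T\in S'_n$ is represented in each chart of $A_n$ by integration against an $n$-differentiable multi-index matrix, the pairing $T(Ch_k,(U^\alpha\Psi)_{,\alpha})$ is an absolutely convergent integral, and $T_{,(U)}$ remains a continuous linear functional on the appropriate test subclass. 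This is the canonical extension of the original definition beyond the smooth $A'$-case.

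Second, I would redo the integration by parts as in the original. Because $n\geq 1$, the matrix $T(Ch_k)$ is at least $C^1$ and $U^\alpha\Psi'$ is $C^{n+1}$ with compact support, so the boundary terms vanish and I recover $T_{,(U)}(Ch_k,\Psi)=\int T(Ch_k)_{,\alpha}U^\alpha(Ch_k)\Psi'(Ch_k)\,d^4x$. The chart-change computation then proceeds using the $C^{n+1}$ transitions of $A_{n+1}$, which is exactly what is needed for the second derivatives of Jacobians appearing after one differentiation to exist and for the integrand to transform as a $D'^{m}_{n}(M)$-density. The non-derivative $\Gamma$-contributions to the covariant derivative are products of $\Gamma^m_{E}$-type objects with $U^\alpha$ and are treated, as in the original proof, by invoking Theorem~\ref{multiplication}(a); the hypotheses there involve no more regularity of $U^\alpha$ than we already have.

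The only real obstacle is bookkeeping: confirming that the weaker differentiability of $U^i$ still yields membership of $D_{C(U)}T$ in the class $S'_{n+1}$ relative to $A_{n+1}$, and that the canonical linear representative $(Ch_k,\Psi)\mapsto\int U^\alpha T^{\mu\ldots}_{\nu\ldots;\alpha}(Ch_k)\Psi$ of the equivalence class $\tilde S'_{n+1}$ is unchanged. Once one observes that the original proof nowhere uses more than one derivative of $U^\alpha$ and nowhere needs $U^\alpha$ to lie in a \emph{smooth} subatlas -- only in one whose transitions are compatible with the Jacobian manipulations carried out in $A_{n+1}$ -- the relaxation from ``smooth in $A'$'' to ``$(n{+}1)$-differentiable in $A_{n+1}$'' is essentially automatic, and all equivalence-class statements of Theorem~\ref{lcovder1}(a) transfer verbatim.
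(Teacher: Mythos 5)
Your proposal is correct and follows the same route as the paper's own (much terser) argument: the paper simply notes that one re-reads the proof of Theorem \ref{lcovder1}(a) and observes that smoothness of $U^{i}$ was invoked only because the definition of the covariant derivative demanded it, not by any step of the proof. Your more detailed tracking of where $U^{i}$'s regularity enters (the simple derivative, the integration by parts, the chart-change via $C^{n+1}$ transitions, and the $\Gamma$-terms via Theorem \ref{multiplication}) is exactly the verification the paper leaves implicit.
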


\begin{proof}
We just have to follow our proof and realize that the only reason why we used
smoothness of $U^{i}$ in $\mathcal{\mathcal{\mathcal{\mathcal{\mathcal{S}}}}}$ was that it is required by our
definition of covariant derivative (for another good reasons related to different cases).
\end{proof}

This statement has a crucial importance, since it shows that not only
all the classical calculus of smooth tensor fields with all the
basic operations is contained in our language (if we take the
equivalence instead of equality being the crucial part of our
theory), but it can be even extended to arbitrary objects from $S'_{n}$.
(If the covariant derivative is obtained through connection from the class
$\Gamma^{3}_{E}(M)$.) In other words it is more general than the
classical tensor calculus.

We can now think
about conjectures extending our results in a very important way:

\theoremstyle{definition}
\newtheorem{cc1}{Conjecture}[section]
\begin{cc1}
 Take an arbitrary piecewise smooth, and on every compact set bounded \footnote{To be exact, the expression ``covariant derivative'' is used in this and the following conjecture in a more general way, since we do not put on $U^{i}$ the condition of being smooth in some subatlas $\mathcal{\mathcal{\mathcal{\mathcal{\mathcal{S}}}}}$.} vector field ~$U^{i}$. Take also ~$\Gamma^{\mu}_{\nu\alpha}\in\Gamma^{3}_{E}(M)$ and such
~$T^{\mu...}_{\nu...}\in D'^{m}_{n E}(M)$, that ~$\forall\Omega_{Ch}$,~ $\forall Ch_{k}(\Omega_{Ch})\in\mathcal{A}$ ~$\exists Ch_{l}(\Omega_{Ch})\in\mathcal{A}$,~ in which\footnote{This means we are trivially integrating~ $T^{\mu...}_{\nu...}\Gamma^{\alpha}_{\beta\delta}$~on compact sets within subset of $\mathbb{R}^{4}$ given as image of the given chart mapping. }
\[\int_{Ch_{l|\Omega'}(\Omega_{Ch})}
T^{\mu...}_{\nu...}(Ch_{k})\Gamma^{\alpha}_{\beta\delta}(Ch_{k})~d^{4}x\]
converges on every compact set~ $\Omega'\subset\Omega_{Ch}$. ~Then the following holds:
The covariant derivative (along $U^{i}$) maps this object to an
element of some equivalence class from ~$\tilde\Gamma^{m+n}_{ A }(M)$. This class contains (exactly) one
element from ~$D'^{m}_{n}(M)$.
\end{cc1}

\theoremstyle{definition}
\newtheorem{cc2}[cc1]{Conjecture}
\begin{cc2}
Take ~$U^{i}$ being a piecewise smooth vector field, and
~$\Gamma^{\mu}_{\nu\alpha}\in\Gamma^{3}_{S}(M)$. Then the following
holds: The covariant derivative along this vector field is a map:
~$D'^{m}_{n (\cup_{l}\mathcal{\mathcal{\mathcal{\mathcal{S}}}}_{l}
~o)}(M)\to\tilde\Gamma^{m+n}_{
(\cup_{l}\mathcal{\mathcal{\mathcal{\mathcal{S}}}}_{l}~o)A}(M)$, and
the classes ~$\tilde\Gamma^{m+n}_{
(\cup_{l}\mathcal{\mathcal{\mathcal{\mathcal{S}}}}_{l}~o)A}(M)$ of
the image contain (exactly) one element of ~$D'^{m}_{n}(M)$.
\end{cc2}

\newtheorem{Comm}[lcovder1]{Theorem}
\begin{Comm}
For ~$U^{i}$ being a smooth tensor field in $\mathcal{\mathcal{\mathcal{\mathcal{\mathcal{S}}}}}$ with the connection taken
from ~$\Gamma^{3}_{S}(M)$, ~$T^{\mu...}_{\nu...}\in\Gamma^{m}_{ (\mathcal{\mathcal{\mathcal{\mathcal{S}}}}
o)A}(M)$ and ~$B^{\mu...}_{\nu...}\in\Gamma^{m}_{(\mathcal{\mathcal{\mathcal{\mathcal{S}}}} o)}(M)$, it holds that
~$T^{\mu...}_{\nu...}\approx B^{\mu...}_{\nu...}$~ implies
~$D^{n}_{C(U)}T^{\mu...}_{\nu...}\approx D^{n}_{C(U)}B^{\mu...}_{\nu...}$~ for arbitrary natural number
$n$.

\end{Comm}

\begin{proof}
Pick an arbitrary ~$\Omega_{Ch}$ and an arbitrary fixed chart
~$Ch'(\Omega_{Ch})\in\mathcal{\mathcal{\mathcal{\mathcal{\mathcal{S}}}}}$. Such a chart maps all the 4-forms from
the domain of ~$\Gamma^{m}_{ (\mathcal{\mathcal{\mathcal{\mathcal{S}}}}o)A}(M)$ objects to smooth compact
supported functions (given by densities expressed in that chart). The objects
~$T^{\mu...}_{\nu...}((U^{\alpha}\omega)_{,\alpha})$ and
~$B^{\mu...}_{\nu...}((U^{\alpha}\omega)_{,\alpha})$  are taken as objects of the
Colombeau algebra (the connection, fixed in that chart,
is also an  object of the Colombeau algebra) and are equivalent to
~$U^{\alpha}(\omega)T^{\mu...}_{\nu...,\alpha}(\omega)$ and
~$U^{\alpha}(\omega)B^{\mu...}_{\nu...,\alpha}(\omega)$. Here the
derivative means the ''distributional derivative'' as used in the
Colombeau theory (fulfilling the Leibniz rule) and ~$U^{\alpha}(\omega)$
is simply a ~$D'^{m}_{n S (\mathcal{\mathcal{\mathcal{\mathcal{S}}}} o)}(M)$ object with the given vector
field appearing under the integral. But in the Colombeau
theory one knows that if some object is equivalent to a distributional
object, then their derivatives of arbitrary degree are also equivalent.
It also holds that if any arbitrary object is equivalent to a
distributional object, then they remain equivalent after being multiplied by arbitrary smooth
distribution. In the fixed chart we have (still in the Colombeau theory sense),

\begin{equation*}
T^{\mu...}_{\nu...}\approx B^{\mu...}_{\nu...}.
\end{equation*}

 But since their
$\partial$-derivatives were, in the fixed chart, obtained only by the distributional derivatives and multiplication by a smooth function, also their
$\partial$-derivatives must remain equivalent. The same holds about the second
covariant derivative term (containing connection). So the objects
from classical Colombeau theory, (classical theory just trivially
extended to what we call multi-index matrices of functions),
obtained by the chart mapping of the covariant derivatives of ~$T^{\mu...}_{\nu...}$~ and ~$B^{\mu...}_{\nu...}$,~ are equivalent in the
sense of the Colombeau theory. But the ~$\Omega_{Ch}$ set was
arbitrary and also the chart was an arbitrary chart from the
domain of ~$T^{\mu...}_{\nu...},~ B^{\mu...}_{\nu...}$.~ So ~$T^{\mu...}_{\nu...}$~ and ~$B^{\mu...}_{\nu...}$~ are equivalent with respect to our definition.
\end{proof}

We can try to extend this statement to a conjecture:

\theoremstyle{definition}
\newtheorem{ADD3}[cc1]{Conjecture}
\begin{ADD3}
Take ~$U^{i}$ being piecewise smooth tensor field, take connection
from the class ~$\Gamma^{3}_{S}(M)$,
~$T^{\mu...}_{\nu...}\in\Gamma^{m}_{
(\cup_{l}\mathcal{\mathcal{\mathcal{\mathcal{S}}}}_{l}~o)A}(M)$, and
~$B^{\mu...}_{\nu...}\in\Gamma^{m}_{(\cup_{l}\mathcal{\mathcal{\mathcal{\mathcal{S}}}}_{l}~o)}(M)$.
Then it holds that ~$T^{\mu...}_{\nu...}\approx
B^{\mu...}_{\nu...}$~ implies
~$D^{n}_{C(U)}T^{\mu...}_{\nu...}\approx
D^{n}_{C(U)}B^{\mu...}_{\nu...}$~ for arbitrary natural number $n$,
if such covariant derivative exists.
\end{ADD3}

This conjecture in fact means that if we have connection from the
class $\Gamma^{3}_{S}(M)$, then the covariant derivative is a map
from such element of the class ~$\tilde\Gamma^{m}_{
(\cup_{l}\mathcal{\mathcal{\mathcal{\mathcal{S}}}}_{l}~o)A}(M)$,
that it contains some linear element, to ~$\tilde\Gamma^{m}_{A}(M)$.
Note that we can also try to prove an extended version of the
conjecture, taking the same statement and just extending the classes
~$\Gamma^{m}_{
(\cup_{l}\mathcal{\mathcal{\mathcal{\mathcal{S}}}}_{l}~o)A}(M)$,
~$\Gamma^{m}_{
(\cup_{l}\mathcal{\mathcal{\mathcal{\mathcal{S}}}}_{l}~o)}(M)$~ to
the classes ~$\Gamma^{m}_{
(\cup_{l}\mathcal{\mathcal{\mathcal{\mathcal{S}}}}_{l})A}(M)$,
~$\Gamma^{m}_{
(\cup_{l}\mathcal{\mathcal{\mathcal{\mathcal{S}}}}_{l})}(M)$.

\subsubsection{Some additional theory}

\newtheorem{lcovder}[lcovder1]{Theorem}

\begin{lcovder}
Take ~$U^{i}$ to be vector field smooth in some
~$\mathcal{\mathcal{\mathcal{\mathcal{S}}}}\subset
\mathcal{D}_{n}$,~ with $\Gamma^{\mu}_{\nu\alpha}\in\Lambda$~ and
~$T^{\mu...}_{\nu...}\in S'_{n}\cap D'^{a}_{b}(M)$ ($n\geq 1$,
$S'_{n}$ is related to $\mathcal{D}_{n}$).~ Then $D_{C (U)}
T^{\mu...}_{\nu...}$ has an associated field which is on
~$M\setminus\tilde\Omega(Ch)$~ the classical covariant derivative of
~$A_{s}(T^{\mu...}_{\nu...})$. ($\tilde\Omega(Ch)$ is a set on which is
~$\Gamma^{\mu}_{\nu\alpha}(Ch)$~ continuous and is of 0 Lebesgue
measure.) It is defined on the whole $\mathcal{D}_{n+1}$
~($\mathcal{\mathcal{\mathcal{\mathcal{S}}}}\subset
\mathcal{D}_{n+1}$). That means association and covariant
differentiation in this case commute.\label{T244}
\end{lcovder}

\begin{proof}
Just take the definition of the classical covariant derivative, and
define the linear mapping given ~$\forall\Omega_{Ch}$~ and arbitrary
~$Ch_{k}(\Omega_{Ch})\in \mathcal{D}_{n}$~ as
\begin{equation}
(\omega, Ch_{k})\to\int_{\Omega_{Ch}}U^{\nu
}(Ch_{k})\left[A_{s}(T^{\mu...}_{\nu...})_{;\mu}\right](Ch_{k})~\omega.\label{AsC}
\end{equation}

This is an internally consistent definition, since
$\left[A_{s}(T^{\mu...}_{\nu...})_{;\mu}\right](Ch_{k})$ is defined everywhere
apart of a set having L measure 0. Now from our previous results
follows that everywhere outside ~$\tilde\Omega(Ch)$
\[\left[A_{s}(T^{\mu...}_{\nu...})_{;\mu}\right](Ch_{k})=T^{\mu...}_{\nu...
~;\mu}(Ch_{k}),\] and so the linear mapping (\ref{AsC}) is
equivalent to the object $D_{C(U)}T^{\mu...}_{\nu...}$. Then $U^{\nu
}A_{s}(T^{\mu...}_{\nu...})_{;\mu}=A_{s}(D_{C(U)}T^{\mu...}_{\nu...})$~everywhere
outside the set $\tilde\Omega(Ch)$.
\end{proof}

\newtheorem{meantime2}[lcovder1]{Theorem}

\begin{meantime2}
An extended analogy of theorem (\ref{T244}), can be proven, if we
use generalized concept of covariant derivative, without assuming
that vector field is smooth in some
$\mathcal{\mathcal{\mathcal{\mathcal{\mathcal{S}}}}}$, but only
$n+1$ continuously differentiable within $\mathcal{D}_{n+1}$.
\end{meantime2}

\begin{proof}
Exactly the same as before.
\end{proof}

This means that the aim to define a
concept of covariant derivative, ``lifted'' from the smooth manifold
and smooth tensor algebra to GTF in sense of association, has been achieved. It
completes the required connection with the old tensor
calculus.

\theoremstyle{definition}
\newtheorem{conj4}[cc1]{Conjecture}
\begin{conj4}
 Take ~$U^{i}$ smooth
in ~$\mathcal{\mathcal{\mathcal{\mathcal{\mathcal{S}}}}}$, ~~$T^{\mu...}_{\nu...}\in D'^{m}_{n (\mathcal{\mathcal{\mathcal{\mathcal{S}}}} o)}(M)$, ~such that ~$\exists ~A_{s}(T^{\mu...}_{\nu...})$,
~$\Gamma^{\mu}_{\nu\alpha}\in\Gamma^{3}_{S}(M)$. Then
~$\exists ~A_{s}(D_{C(U)}T^{\mu...}_{\nu...})$ and holds that \[A_{s}(D_{C(U)}T^{\mu...}_{\nu...})=U^{\alpha}A_{s}(T^{\mu...}_{\nu...})_{;\alpha}.\]
(Hence, similarly to $\otimes$, the covariant derivative
operator commutes with association for some significant number of objects.)
\end{conj4}

Note that for
every class ~$\Gamma^{m}_{At(\mathcal{\mathcal{\mathcal{\mathcal{S}}}})}(M)$ we can easily define the operator of Lie derivative along arbitrary in $\mathcal{\mathcal{\mathcal{\mathcal{\mathcal{S}}}}}$ smooth vector field $V$ (not along the generalized vector field, but even in the case of covariant derivative we did not prove
anything about larger classes of vector fields than smooth vector
fields). Lie derivative can be defined as
~$(L_{V}T,\omega)\equiv (T,L_{V}\omega)$. This is because the Lie derivative preserves $n$-forms and also preserves the
properties of such ~$C^{P}_{(\mathcal{\mathcal{\tilde S}})}(M)$ classes, for which it holds that ~$\mathcal{\mathcal{\mathcal{\mathcal{S}}}}\subset\mathcal{\mathcal{\tilde S}}$.

\subsection{Basic discussion of previous results and open questions}

We have constructed the algebra of GTFs, being able to incorporate
the concept of covariant derivative, (with the given conditions on
vector fields and connection), for a set of algebras constructed
from specific distributional objects. The use of these ideas in
physics is meaningful where the operations of tensor product and
covariant derivative give a map from appropriate subclass of
~$D'^{m}_{n}(M)$ class to the elements of ~$\tilde\Gamma^{m}(M)$
containing a ~$D'^{m}_{n}(M)$ element. This is always guaranteed to
work between appropriate subclasses of piecewise continuous
distributional objects, but a given physical equivalence might
specify a larger set of objects for which these operations provide
such mapping. Note that the whole problem lies in the multiplication
of distributions outside the ~$D'^{m}_{n E}(M)$ class. (For instance
it can be easily seen that square of ~$\delta(Ch_{k},q)$ as
introduced before is not equivalent to any distribution.) This is
because the product does not have to be necessarily equivalent to a
distributional object. Even worse, in case it is not equivalent to a
distributional object, the product is not necessarily a mapping
between equivalence classes of the given algebra elements ($\tilde
D'^{m}_{n A}(M)$). The same holds about contraction.

But even in such cases there can be a further hope. For example we
can abandon the requirement that certain quantities must be linear,
(for example the connection), and only some results of their
multiplication are really physical (meaning linear). Then it is a
question whether they should be constructed (constructed from the
linear objects as for example metric connection from the metric
tensor) through the exact equality or only through the equivalence.
If we take only the weaker (equivalence) condition, then there is a
vast number of objects we can choose, and many other important
questions can be posed. Even in the case that the mathematical
operations do depend on particular representatives of the
equivalence classes, there is no necessity to give up; in such
situation it might be an interesting question if there are any
specific ``paths'' which can be used to solve the physical
equivalence relations. The other point is that if these operations do depend
on the class members, then we can reverse this process. It means
that for example in the case of multiplication of two delta
functions we can find their nonlinear equivalents first and then
take their square, thus obtaining possibly an object belonging to an
equivalence class of a distribution.

As I mentioned in the introduction, these are not attempts to deal
with physical problems in a random, ad hoc way. Rather I want to
give the following interpretation to what is happening: The
differential equations in physics should be changed into
equivalence relations. For that reason they have plentiful solutions in the
given algebra\footnote{After one for example proves that covariant
derivative is a well defined operator on all GTF elements, then it
is the whole GTF algebra.}. (By a ``solution'' one here means any
object fulfilling the given relations.) One obtains much ``more''
solutions than in the case of classical partial differential
equations (but all the smooth distributions representing
``classical'' solutions of the ``classical'' initial value problem
are there), but what is under question is the possibility to
formulate the initial value problem for larger classes of objects
than $D'^{m}_{n E}(M)$ and $D'^{m}_{n (\mathcal{\mathcal{\mathcal{\mathcal{S}}}}o)}(M)$ (see the next
section). Moreover, if this is possible then there remains another
question about the physical meaning of those solutions. It means
that even in the case we get nonlinear objects as solutions of some
general initial value problem formulation, this does not have to be
necessarily something surprising; the case where physical laws are
solved also by physically meaningless solutions is nothing new. The
set of objects where we can typically search for physically
meaningful solutions is defined by most of the distributional
mappings (that is why classical calculus is so successful), but they
do not have to be necessarily the only ones.

\subsection{Some notes on the initial value problem within the partial
differential equivalence relations ($\approx$) on $D'^{m}_{n}(M)$}

In this section we will suggest how to complete the mathematical structure
developed and will get some idea how a physical problem can be
formulated in our language. It is again divided
into what we call ``basic ideas'' and ``some additional ideas''. The
first part is of a considerable importance, the second part is less
important, it just gives a suggestion how to recover the classical
geometric concept of geodesics in our theory.

\subsubsection{The basic ideas}

\paragraph{The approach giving the definition of the initial value problem}

Take a hypersurface (this can be obviously generalized to any
submanifold of lower dimension) ~$N\subset M$, which is such that it
gives in some subatlas
~$\mathcal{\mathcal{A}}_{N}\subset\mathcal{A}$~ a piecewise
smooth submanifold.  In the same time $\mathcal{\mathcal{A}}_{N}$
is such atlas that ~$\exists
\mathcal{\mathcal{\mathcal{\mathcal{S}}}}$~
~$\mathcal{\mathcal{\mathcal{\mathcal{S}}}}\subseteq
\mathcal{\mathcal{A}}_{N}$.

If we
consider space of 3-form fields living on $N$
(we give up on the
idea of relating them to 4-forms on $M$), we get two types of
important maps:

\begin{itemize}

\item
Take such ~$D'^{m}_{n E}(M)$ objects, that they have in every chart
from ~$\mathcal{\mathcal{A}}_{N}$ associated (tensor) fields
defined everywhere on $N$, apart from a set having 3 dimensional
Lebesgue measure equal to 0. Such objects can be, in every smooth
subatlas
~$\mathcal{\mathcal{\mathcal{\mathcal{S}}}}\subset\mathcal{\mathcal{A}}_{N}$,
mapped to the class ~$D'^{m}_{n E}(N)$ by embedding their associated
tensor fields\footnote{As previously noted, the given
~$T^{\mu}_{\nu}\in D'^{m}_{n E}(M)$ we can define in every chart by
$(Ch_{k},\omega)\to\int_{\Omega_{Ch}}
\left[A_{s}(T^{\mu}_{\nu})\right](Ch_{k})~\omega$.} into $N$. This
defines a tensor field $T^{\mu...}_{\nu...}$ living on a piecewise
smooth manifold $N$. Furthermore if
~$\mathcal{\mathcal{\mathcal{\mathcal{A}}}}_{3D}$~ is some
largest piecewise smooth atlas on $N$, the tensor field
~$T^{\mu...}_{\nu...}$~ defines on its domain a map
~$\forall\Omega_{Ch}\subset N$,~ $ Ch_{k}(\Omega_{Ch})\in
\mathcal{\mathcal{A}}_{3D}$ ~and ~$\omega\in C^{P}(\Omega_{Ch})$
\begin{equation}
(\omega,
Ch_{k})\to\int_{\Omega_{Ch}}T^{\mu...}_{\nu...}(Ch_{k})~\omega .
\end{equation}
 This is object from the class ~$D'^{m}_{n E}(N)$. What remains to be proven is that in
any smooth subatlas we map the same ~$D'^{m}_{n E}(M)$ object to
~$D'^{m}_{n}(N)$, otherwise this formulation is meaningless.

\item
The other case is a map ~$D'^{m}_{n (
\mathcal{\mathcal{\mathcal{\mathcal{S}}}} o)}(M)\to D'^{m}_{n ~(
\mathcal{\mathcal{\mathcal{\mathcal{S}}}} o)}(N)$ ($
\mathcal{\mathcal{\mathcal{\mathcal{S}}}}\subset\mathcal{\mathcal{A}}_{N}$),~
defined in a simple way: The objects from ~$D'^{m}_{n S (
\mathcal{\mathcal{\mathcal{\mathcal{S}}}} o)}(M)$ are mapped as
associated smooth tensor fields (in the previous sense). After this
step is taken, one maps the rest of distributional objects from
~$D'^{m}_{n ( \mathcal{\mathcal{\mathcal{\mathcal{S}}}} o)}(M)$
by using the fact that they are weak limits of smooth distributions
~$D'^{m}_{n S ( \mathcal{\mathcal{\mathcal{\mathcal{S}}}}
o)}(M)$. (This is coordinate independent for arbitrary tensor
distributions.) So in the case of objects outside the class
~$D'^{m}_{n S ( \mathcal{\mathcal{\mathcal{\mathcal{S}}}}
o)}(M)$ we embed the smooth distributions first, and take the limit
afterwards (exchanging the order of operations). The basic
conjecture is that if this limit exists on $M$, it will exist on $N$
(in the weak topology), by using the embedded smooth distributions.
\end{itemize}

Now we can say that initial value conditions of, (for example),
second-order partial~ differential ~equations ~are ~given ~by ~two
~distributional ~objects ~from ~~$D'^{m}_{n ( \mathcal{\mathcal{\mathcal{\mathcal{S}}}} o)}(N_{1})$,
~$D'^{m}_{n (\mathcal{\mathcal{\mathcal{\mathcal{S}}}} o)}(N_{2})$~ (~$D'^{m}_{n E}(N_{1})$, ~$D'^{m}_{n
E}(N_{2})$~)~ on two hypersurfaces ~$N_{1}, N_{2}$~ (not
intersecting each other). The solution is a distributional object
from the same class, which fulfills the ~$\approx$ equation and is
mapped (by the maps introduced in this section) to these two
distributional objects.

\paragraph{Useful conjecture related to our approach}

Note, that we can possibly (if the limit commutes) extend this
``initial value'' approach through the ~$D'^{m}_{n S}(M)$ class to
all the weak topology limits of the sequences formed by the objects from this class.
This means extension to the class of objects belonging to
~$D'^{m}_{n}(M)$, such that for any chart from $\mathcal{A}$ they have the full
~$C^{P}(M)$ domain and ~$D'^{m}_{n S}(M)$ is
dense in this class.

\subsubsection{Some additional ideas}

\paragraph{``Null geodesic solution'' conjecture}

Let us conjecture the following:

\theoremstyle{definition}
\newtheorem{conj7}{Conjecture}[section]
\begin{conj7}\label{conj7}
Pick some atlas $\mathcal{\mathcal{\mathcal{\mathcal{\mathcal{S}}}}}$. Pick some ~$g_{\mu\nu}\in D^{0}_{2 S (\mathcal{\mathcal{\mathcal{\mathcal{S}}}}
o)}(M)$, such that it has  $A_{s}(g_{\mu\nu})$, being a
Lorentzian signature metric tensor field. Take some $\Omega_{Ch}$
and two spacelike hypersurfaces $H_{1},~H_{2}$, ~$H_{1}\cap
H_{2}=\{0\}$,
~$H_{1}\cap\Omega_{Ch}\neq\{0\}$,~$H_{2}\cap\Omega_{Ch}\neq\{0\}$. ~Furthermore ~$H_{1}, H_{2}$~ are such that there exist two points ~$q_{1}\in
H_{1}\cap\Omega_{Ch}$, ~$q_{2}\in H_{2}\cap\Omega_{Ch}$ separated by
a null curve geodesics (relatively to ~$A_{s}(g_{\mu\nu})$), and the geodesics lies within
~$\Omega_{Ch}$. Construct such chart ~$Ch_{k}(\Omega_{Ch})\in \mathcal{\mathcal{\mathcal{\mathcal{\mathcal{S}}}}}$,
that both of the hypersurfaces are hypersurfaces
(they are smooth manifolds relatively to $\mathcal{\mathcal{\mathcal{\mathcal{\mathcal{S}}}}}$) given by $u=const.$
condition ($u$ is one of the coordinates) and the given geodesics is representing $u$-coordinate curve.

Take classical free field equation with equivalence:
~$\Box_{g}\Phi\approx 0$~ (with $g^{\mu\nu}$ as previously defined).
Then look for the distributional solution of this equation with the
initial value conditions being ~$\delta\big(Ch_{1k}(\Omega_{Ch}\cap
H_{1}),q_{1}\big)\in D'^{m}_{n (\mathcal{\mathcal{\mathcal{\mathcal{S}}}} o)}(H_{1})$~ on the first
hypersurface and ~$\delta\big(Ch_{2k}(\Omega_{Ch}\cap H_{2}),q_{2}\big)\in
D'^{m}_{n(\mathcal{\mathcal{\mathcal{\mathcal{S}}}} o)}(H_{2}) $~ on the second hypersurface. (Here
~$Ch_{1k}(\Omega_{Ch}\cap H_{1}),~Ch_{2k}(\Omega_{Ch}\cap H_{2})$
are coordinate charts, which are the same on the intersection of the
given hypersurface and ~$\Omega_{Ch}$ as the original coordinates
without $u$.) Then the solution of this initial value problem is a
mapping ~$\Phi$, which is such, that it can be in chart
~$Ch_{k}(\Omega_{Ch})$~ expressed as
\[\omega~\to~\int
du\int\prod_{i}dx^{i}\left(\delta(x^{i}(q_{1}))\omega'(Ch_{k})(u,x^{j})\right)\]
(where
$x^{i}(q_{1})$ is image of $q_{1}$ in chart mapping
~$Ch_{k}(\Omega_{Ch})$ ).
\end{conj7}

We can formulate similar conjectures for timelike and spacelike
geodesics, we just have to:
\begin{itemize}
\item
instead of point separation by null curve, consider the
separation by timelike or spacelike curve,
\item
instead of the ``massless'' equation we have to solve
the ~$(\Box_{g}\pm m^{2})\Phi\approx 0$~ equation ($m$ being
arbitrary nonzero real number). Here $\pm$ depends on the
signature we use and on whether we look for timelike or spacelike
geodesics.
\end{itemize}

The rest of the conditions are unchanged (see \ref{conj7}). Some
insight to our conjectures can be brought by calculating the
massless case for flat Minkowski space, using modified cartesian
coordinates ($u=x-ct,x,y,z$). We get the expected results.\newline

\section{What previous results can be recovered, and how?}

As was already mentioned, our approach is in some sense a generalization of
Colombeau approach from \cite{multi}, which is equivalent to
canonical $\mathbb{R}^{n}$ approach. So for $\Omega_{Ch}$, after
we pick some ~ $Ch'(\Omega_{Ch})\in\mathcal{\mathcal{\mathcal{\mathcal{\mathcal{S}}}}}$, ~(which determines the classes
$A^{n}(M)$ related to this chart), and by considering
only the objects ~$C^{P}_{S (\mathcal{\mathcal{\tilde S}})}(\Omega_{Ch})$ ~($\mathcal{\mathcal{\mathcal{\mathcal{S}}}}\subset\mathcal{\mathcal{\tilde S}}$),~ (hence considering the ~$D'^{m}_{n (\mathcal{\mathcal{\mathcal{\mathcal{S}}}} o)}(M)$ class only), we
obtain from our construction the mathematical language used in
\cite{multi}. But all the basic equivalence relations from Colombeau
approach have been generalized first to the class ~$D'^{m}_{n  (\mathcal{\mathcal{\mathcal{\mathcal{S}}}} o)A}(M)$ and also to appropriate subclasses of the ~$D'^{m}_{n E A}(M)$
class.

\subsection{Generalization of some particular statements}

Now there are certain statements in $\mathbb{R}^{n}$, where one has
to check whether they are just a result of this specific
reduction, or not. A good example is a statement
\begin{equation}
H^{n}~\delta\approx\frac{1}{n+1}~\delta~.
\end{equation}
($H$ is Heaviside distribution.) What we have to do is to interpret
the symbols inside this equation geometrically. This is a
$\mathbb{R}^{1}$ relation. $H$ is understood as a ~$D'_{(\mathcal{\mathcal{\mathcal{\mathcal{S}}}} o)}(M)$
element and defined on the manifold (one dimensional, so the
geometry would be quite trivial) by integral given by a function (on
$M$) obtained by the inverse coordinate mapping substituted to $H$. Now take some fixed chart ~$Ch_{k}(\mathbb{R}^{1})$.
~The derivative is a covariant derivative along the smooth vector
field ~$U$,~ which is constant and unit in the fixed coordinates ~$Ch_{k}(\mathbb{R}^{1})$.
Then $\delta$ can be reinterpreted as ~$\delta(Ch_{k},q)$, where $q$ is
the 0 point in the chart ~$Ch_{k}(\mathbb{R}^{1})$.~ Then the relation can be generalized,
since it is obvious that (see the covariant derivative section)
~$D_{C(U)}H=\delta(q,Ch_{k})$ and so
\begin{eqnarray}
D_{C(U)}H=D_{C(U)}L(H_{f})~~~~~~~~~~~~~~~~~~~~~~~~~~~~~~~~~~~~~~~~~~~~~~~~~~~~~~~~~~~~~~~~~~~~~~~~~~~~~~~~~~~~~~~~~~\nonumber\\
=D_{C(U)}L(H_{f}^{n+1})\approx D_{C(U)}(H^{n+1})=(n+1)~H^{n}~D_{C(U)}H.~~~~
\end{eqnarray}
By $L$ we mean here a regular distribution defined by the function in the
brackets, hence an object
from $D'_{E}(M)$. ~(To be precise and to avoid confusion in the notation, we
used for the Heaviside function the symbol $H_{f}$, while for the Heaviside distribution the usual symbol $H$.) This is nice, but rather trivial
illustration.

This can be generalized to more nontrivial cases.  Take the flat
$\mathbb{R}^{n}$ topological manifold. Fix such chart ~$Ch_{k}(\mathbb{R}^{n})$~ covering the
whole manifold, that we can express Heaviside distribution in this chart through
~$H_{f}(x_{1})\in D'_{(\mathcal{\mathcal{\mathcal{\mathcal{S}}}} o)}(M)$.~ This means the hypersurface where
~$H_{f}$ is ~discontinuous is given in ~$Ch_{k}(\mathbb{R}^{n})$~ as $x_{1}=0$. Now the derivative will be a covariant derivative taken along a smooth vector
field being perpendicular (relatively to the flat space metric\footnote{Note
that since it is flat space it makes sense to speak about a
perpendicular vector field, since we can uniquely transport vectors
to the hypersurface.}) to the hypersurface on which is $H_{f}$ discontinuous. We easily see that
the covariant derivative of $H$ along such vector field gives a distribution (call it
~$\tilde\delta\in D'_{(\mathcal{\mathcal{\mathcal{\mathcal{S}}}} o)}(M)$), which is in the chart $Ch_{k}(\mathbb{R}^{n})$ expressed
as
\begin{equation}
\tilde\delta(\phi)=\delta_{x_{1}}\left(\int\phi(x_{1},x_{2}\dots x_{n})
~dx_{2}\dots dx_{n}\right)~.
\end{equation}
This distribution reminds us in some sense the
``geodesic'' distribution from the previous part. Then the
following holds:
\begin{equation}
H^{n}~\tilde\delta\approx\frac{1}{n+1}~\tilde\delta~.
\end{equation}
This generalized form of our previous statement can be used for computations with Heaviside functional metrics (computation
of connection in fixed coordinates).

\subsection{Relation to practical computations}

These considerations (for example) imply that the result from
canonical Co\-lom\-beau $\mathbb{R}^{n}$ theory derived by
\cite{multi} can be derived in our formalism as well. This is also
true for the geodesic computation in curved space geometry from
\cite{geodesic}. The results derived in special Colombeau algebras
(in geometrically nontrivial cases) are more complicated, since in such cases the strongest, ~$A^{\infty}(\mathbb{R}^{n})$,~ version
of the theory is used. This version is not contained in our chart representations.
(This is because we are using only ~$A^{n}(\mathbb{R}^{n})$ with finite $n$.)
It is clear that all the equivalence relations from our theory must hold in
such stronger formulation (since obviously ~$A^{\infty}(M)\subset
A^{n}(M)~~\forall n\in\mathbb{N}$) and the uniqueness of
distribution solution must hold as well. This means that at this
stage there seems to be no obstacle to reformulate our theory by
using ~$A^{\infty}(Ch_{k},q,\Omega_{Ch})$ classes (and taking
elements from ~$D'_{(\mathcal{\mathcal{\tilde S}} o)}(M)$ at least), if necessary. But
it is unclear whether one can transfer all the calculations using
~$A^{\infty}(Ch_{k},q,\Omega_{Ch})$ classes to our weaker
formulation.

The strong formulation was used also in the Schwarzschild case
\cite{Schwarzschild}, but there is a problem. The fact that the authors of \cite{Schwarzschild}
regularize various functions piece by piece does not have to be necessarily a
problem in Colombeau theory\footnote{Although the authors use (in the
first part, not being necessarily connected with the results) quite
problematic embeddings.}. But, as already mentioned, the problem lies in the use of formula for
~$R^{~~\nu}_{\mu}$, originally derived within smooth tensor field algebra. If we want to derive in the Schwarzschild case Ricci tensor
straight from its definition, we cannot avoid multiplications of
delta function by a non-smooth function. This is in Colombeau
theory deeply non-trivial.

In the cases of Kerr's geometry and conical spacetimes theory this problem appears as well. As a
consequence of this fact, calculations are mollifier dependent, not being (in the strict sense)
results of our theory anymore. On the other hand there is
no reasonable mathematical theory in which these calculations make sense.
This means that a better understanding of these results will be
necessary. By better understanding of these results provided by our theory we mean
their derivation by a net of equivalence relations, by taking
some intermediate quantities to be nonlinear. So the results should follow from
the principle that the equivalence relations are the
fundamental part of all the mathematical formulation of physics.

\section{Conclusions}

The main objectives of this work were to build foundations of a
mathematical language reproducing the old language of
smooth tensor calculus and extending it at the same time. The reasons for these objectives were given at the
beginning of this paper. This work is a first step to such theory,
but it already achieves its  basic goals. That means we consider these results as useful independently of how successful future work on the topic will turn out to be. On the other hand, the territory it opens for further exploration is in my opinion large and significant. It offers a large area of possibilities for future work.

Just to summarize: the result of our work is a theory based purely
on equivalence relations instead of equalities, using a well defined concept
of generalized tensor field and the covariant derivative operator. This operator
is well defined at least on the proper subclass of generalized
tensor fields. We also defined (using some conjectures) the initial value problem for partial differential
equivalence relations. Our theory naturally relates to many results beyond
the classical smooth tensor calculus, already derived.

\end{document}